\newcounter{algorithmicH}%
\let\oldalgorithmic\algorithmic
\renewcommand{\algorithmic}{%
  \stepcounter{algorithmicH}%
  \oldalgorithmic}%
\renewcommand{\theHALG@line}{ALG@line.\thealgorithmicH.\arabic{ALG@line}}
\algnewcommand\Break{\textbf{break}}
\algnewcommand\Continue{\textbf{continue}}
\algnewcommand\Exit{\textbf{exit}}
\algnewcommand\Or{\textbf{or~}}
\algnewcommand\And{\textbf{and~}}
\algnewcommand{\IfThenElse}[3]{\State \algorithmicif\ #1\ \algorithmicthen\ #2\ \algorithmicelse\ #3}
\algnewcommand{\IfThenNLElse}[3]{\State \algorithmicif\ #1\ \algorithmicthen\ #2\ \State\algorithmicelse\ #3}
\algnewcommand{\IfThen}[2]{\State \algorithmicif\ #1\ \algorithmicthen\ #2}
\algnewcommand{\ForInline}[2]{\State \algorithmicfor\ #1\ \algorithmicdo\ #2}
\DeclareMathOperator{\polylog}{polylog}
\DeclareMathOperator{\bas}{base}
\DeclareMathOperator{\lab}{label}
\DeclareMathOperator{\tab}{table}
\DeclareMathOperator{\port}{port}
\DeclareMathOperator{\border}{border}
\newcommand{\shay}[1]{\textcolor{red}{[[S: #1]]}}
\newcommand{\eps}{\epsilon}
\newcommand{\pruned}{\mathsf{pnd}}
\newcommand{\ignore}[1]{}
\newcommand{\dst}{\mathsf{dist}}
\newcommand{\diam}{\mathsf{diam}}
\crefname{claim}{claim}{claims}
\Crefname{claim}{Claim}{Claims}
\crefname{property}{prperty}{properties}
\Crefname{property}{Property}{Properties}
\definecolor{redlink}{rgb}{0.6, 0, 0}
\definecolor{greenlink}{rgb}{0, 0.6, 0}
\definecolor{bluelink}{rgb}{0, 0, 0.6}
\theoremstyle{definition}
\newtheorem{definition}{Definition}[section]
\newtheorem{question}{Question}[section]
\newtheorem{property}{Property}
\theoremstyle{plain}
\newtheorem{theorem}{Theorem}[section]
\newtheorem{lemma}{Lemma}[section]
\newtheorem{observation}{Observation}[section]
\theoremstyle{remark}
\newtheorem{remark}{Remark}[section]
\newtheorem{claim}{Claim}[section]
\DeclarePairedDelimiter\lAngle{\langle}{\rangle}
\author{
Omri Kahalon\thanks{Tel Aviv University. Email: \href{mailto:kahalon@gmail.com}{kahalon@gmail.com}.}
\and
Hung Le\thanks{University of Massachusetts Amherst. Email: \href{mailto:hungle@cs.umass.edu}{hungle@cs.umass.edu}.}
\and
Lazar Milenković\thanks{Tel Aviv University. Email: \href{mailto:milenkovic.lazar@gmail.com}{milenkovic.lazar@gmail.com}.}
\and
Shay Solomon\thanks{Tel Aviv University. Email: \href{mailto:solo.shay@gmail.com}{solo.shay@gmail.com}.}}
\title{Can't See The Forest for the Trees: Navigating Metric Spaces by Bounded Hop-Diameter Spanners}
\date{}\postdate{}
\begin{document}

\maketitle
\begin{abstract}
Spanners for metric spaces
have been extensively studied, perhaps most notably in low-dimensional Euclidean spaces
 --- due to their numerous applications.  
{\em Euclidean spanners} can be viewed as means of {\em compressing} the $\binom{n}{2}$ pairwise distances of a $d$-dimensional Euclidean space into $O(n) = O_{\epsilon,d}(n)$ spanner edges, so that the spanner distances preserve the original distances to within a factor of $1+\epsilon$, for any $\epsilon > 0$. Moreover, one can compute such spanners efficiently in the standard centralized and distributed settings. Once the spanner has been computed, it serves as a ``proxy'' overlay network, on which the computation can proceed, which gives rise to huge savings in space and other important quality measures.

The original metric enables us to ``navigate'' optimally --- a single hop (for any two points) with the exact distance,
but the price is high --- $\Theta(n^2)$ edges.
Is it possible to efficiently navigate, on a sparse spanner, using $k$ hops and approximate distances, for $k$ close to 1 (say $k=2$)? Surprisingly, this fundamental question has been overlooked in Euclidean spaces, as well as in other classes of metrics, despite the long line of work on spanners in metric spaces.

We answer this question in the affirmative via a surprisingly simple observation on bounded hop-diameter spanners for {\em tree metrics}, which we apply on top of known, as well as new, {\em tree cover theorems}. Beyond its simplicity, the strength of our approach is three-fold:
\begin{itemize}
\item {\bf Applicable}: We present a variety of applications of our efficient navigation scheme,
including a 2-hop routing scheme in Euclidean spaces with stretch $1+\epsilon$
using $O(\log^2 n)$ bits of memory for labels and routing tables --- to the best of our knowledge, all known routing schemes prior to this work use $\Omega(\log n)$ hops.
\item {\bf Unified}: Our navigation scheme and applications extend beyond Euclidean spaces to any class of metrics that admits an efficient tree cover theorem; currently this includes doubling, planar and general metrics, but our approach is unified.
\item {\bf Fault-Tolerant}: In Euclidean and doubling metrics, we strengthen all our results to achieve fault-tolerance. 
To this end, we first design a new construction of fault-tolerant spanners of bounded hop-diameter, which, in turn, relies on a new tree cover theorem for doubling metrics --- hereafter the ``Robust Tree Cover'' Theorem, which generalizes the classic ``Dumbbell Tree'' Theorem [Arya et al., STOC'95] in Euclidean spaces. 
\end{itemize}
\end{abstract}
\thispagestyle{empty}
\pagebreak

\tableofcontents
\clearpage
\section{Introduction}
\subsection{Background and motivation}
Let $M_X=(X, \delta_X)$ be an $n$-point metric space, viewed as a complete weighted graph whose weight function satisfies the triangle inequality.  %
For a parameter $t \ge 1$, a subgraph $H = (V,E',w)$ of $M_X$
($E' \subseteq \binom{V}{2}$) 
is called a \emph{$t$-spanner} for $M_X$ if for all $u,v \in V$, $\delta_H(u,v) \le t \cdot \delta_X(u,v)$.
(Here $\delta_X(u,v)$ and $\delta_H(u,v)$ denote the distances between $u$ and $v$ in $M_X$ and the spanner $H$, respectively.)
In other words, for all $u,v \in V$, there exists a path in $H$ between $u$ and $v$
whose {\em weight} (sum of edge weights in it) is at most $t \cdot \delta_X(u,v)$;
such a path is called a {\em $t$-spanner path} and the parameter $t$ is called the \emph{stretch} of $H$. 
Since their introduction in the late 80s \cite{PS89,PU89}, spanners have been extensively studied,
and by now they are recognized as a graph structure of fundamental importance, in both theory and practice. 

There are a few basic properties of spanners that are important for a wide variety of practical applications; in most applications,  a subset of these properties need to be satisfied while preserving small stretch.
Although the exact subset of properties varies between applications, perhaps the most basic property (besides small stretch) is to have a small number of edges (or \emph{size}), close to $O(n)$; the spanner \emph{sparsity} is the ratio of its size and the size $n-1$ of a spanning tree. %
Second, the spanner \emph{weight} $w(H) \coloneqq \sum_{e \in E'} w(e)$ should be close to the weight $w(MST(M_X))$ of a minimum spanning tree $MST(M_X)$ of the underlying metric; we refer to the normalized notion of weight, $w(H)/w(MST(M_X))$, as the spanner \emph{lightness}.
Third, the \emph{hop-diameter} of a spanner should be close to 1; the hop-diameter of a $t$-spanner
is the smallest integer $k$ such that for all $u, v \in V$, there exists a $t$-spanner path between $u$ and $v$ with at most $k$ edges (or {\em hops}).
Finally, the \emph{degree} of a spanner, 
i.e., the maximum number of edges incident on any vertex, should be close to constant.

The original motivation of spanners was in distributed computing. For example, light and sparse spanners have been used in reducing the communication cost in efficient broadcast protocols~\cite{ABP90,ABP92}, synchronizing networks and computing global functions \cite{Awerbuch85,PU89,Peleg00}, gathering and disseminating data~\cite{BKRCV02,VWFME03,KV01}, and routing~\cite{WCT02,PU89b,DBLP:conf/stoc/AwerbuchBLP89,TZ01}; as another example, spanners with low degree can be used for the design of compact routing schemes~\cite{ABLP90, DBLP:conf/podc/HassinP00, DBLP:conf/podc/AbrahamM04, Tal04,DBLP:conf/podc/Slivkins05,AGGM06,GR08Soda,CGMZ16}. %
Since then, graph spanners have found countless applications in distributed computing as well as various other areas, from motion planning and computational biology to machine learning and VLSI-circuit design.

Spanners have had special success in geometric settings, especially in low-dimensional Euclidean spaces. 
Spanners for Euclidean spaces, namely {\em Euclidean spanners}, were first studied by Chew~\cite{Chew86} in 1986 (even before the term ``spanner'' was coined). Several different constructions of Euclidean spanners enjoy the optimal tradeoff between stretch and size: $(1+\eps)$ versus $O(\eps^{-d}n)$, for $n$-point sets in $\mathbb{R}^d$~\cite{le2019truly}; 
these include $\Theta$-graphs~\cite{Clarkson87, Keil88,KG92,RS91}, Yao graphs~\cite{Yao82}, path-greedy  spanner~\cite{althofer1993,CDNS92,Narasimhan2007}, and the gap-greedy spanner~\cite{Salowe92,AS97}.
The reason Euclidean spanners are so important in practice is that one can achieve stretch arbitrarily close to 1 together with a linear number of edges (ignoring dependencies on $\eps$ and the dimension $d$).
In general metrics, on the other hand, a stretch better than 3 requires $\Omega(n^2)$ edges,
and the best result for general metrics is the same as in general graphs: stretch $2k-1$ with $O(n^{1+1/k})$ edges \cite{PS89,althofer1993}.
Moreover, Euclidean spanners with the optimal stretch-size tradeoff can be built in optimal time $O(n \log  n)$ in the static centralized setting, and they can be distributed in the obvious way in just one communication round in the Congested Clique model.  

Driven by the success of Euclidean spanners, researchers have sought to extend results obtained in Euclidean metrics to the wider family of \emph{doubling metrics}.\footnote{The {\em doubling dimension} of a metric   is the smallest   $d$ s.t.\ every ball of radius $r$ for any $r$ in the metric can be covered by $2^d$ balls of radius $r/2$.
A metric space is called \emph{doubling} if its doubling dimension is constant.} 
The main result in this area is that any $n$-point metric of doubling dimension $d$ admits a $(1+\eps)$-spanner with both sparsity and lightness bounded by $O(\eps^{-O(d)})$~\cite{GGN04,CGMZ16,chan2009small,HM06,Roditty12,GR08Soda,GR082,Smid09,ES15,chan2015new,Sol14,Gottlieb15,BLW17Doubling,filtser2020greedy}.\footnote{In the sequel, for conciseness, we shall sometimes omit the dependencies on $\eps$ and the dimension $d$.} Moreover, here too there are  efficient centralized and distributed algorithms, also under some practical restrictions such as those imposed by Unit Ball Graphs \cite{DPP06a,DPP06b,EFN20,EK21}.

\paragraph{A fundamental drawback of spanners.~}
Different spanner constructions suit different needs and applications.
However, there is one common principle: Once the spanner has been computed, it serves as a ``proxy'' overlay network, on which the computation can proceed, which gives rise to huge savings in a number of quality measures,
including global and local space usage, as well as in various notions of running time, which change from one setting to another; in distributed networks, spanners also lead to additional savings, such as in the message complexity.

Alas, by working on the spanner rather than the original metric, one loses the key property of being able to efficiently ``navigate'' between points. In the  metric, one can go from any point  to any other via a direct edge, which is optimal in terms of  the weighted distance and the unweighted (or {\em hop-}) distance. However, it is unclear how to efficiently navigate in the spanner: How can we translate the {\em existence} of a ``good'' path into an efficient algorithm finding it? 

Moreover, usually by ``good'' path we mean a $t$-spanner path, i.e., a path whose weight approximates the original distance between its endpoints --- but a priori the number of edges (or {\em hops}) in the path could be huge. 
{To control the {\em hop-length} of paths, one can try to upper bound the spanner's hop-diameter, but naturally bounded hop-diameter spanners are more complex than spanners with unbounded hop-diameter, which might render the algorithmic task of efficiently finding good paths more challenging.}
We stress that most existing spanner constructions have inherently high hop-diameters. In particular, any construction with constant degree must have at least a logarithmic hop-diameter, and in general, if the degree is $\Delta$, then the hop-diameter is $\Omega(\log_\Delta n)$.

In Euclidean spaces, the $\Theta$-graph~\cite{Clarkson87, Keil88,KG92,RS91}
and the Yao graph \cite{Yao82} are not only simple spanner constructions, but they also provide {simple} {\em navigation algorithms}, where for any two points $p$ and $q$, one can easily compute a $(1+\eps)$-spanner path between $p$ and $q$. Alas, the resulting path may have a hop-length of $\Omega(n)$, and the {\em query time} is no smaller than the path length. 
There is a $(1+\eps)$-approximate distance oracle for low-stretch spanners \cite{GLNS08}, and while it achieves
constant query time, it does not report the respective paths, whose hop-length can be $\Omega(n)$. 
In doubling metrics, there are $(1+\eps)$-approximate distance oracles with
constant query time \cite{HPM06, BGKLR11}. In \cite{BGKLR11} the respective paths are not part of a sparse overlay network (such as a spanner); in other words, the union of paths returned by the distance oracle of \cite{BGKLR11} may comprise a spanner of $\Theta(n^2)$ edges. Using \cite{HPM06}, one can return paths that are part of a sparse spanner, but their hop-length is $\Theta(\log \rho)$, where $\rho$ -- the metric {\em aspect ratio}, can be arbitrarily large.
This is where bounded hop-diameter spanners may come into play -- efficient constructions are known in Euclidean and doubling metrics \cite{chan2009small,Sol13}. %
In low-dimensional Euclidean spaces, it is possible to build a $(1+\eps)$-spanner with hop-diameter 2 and $O(n \log n)$ edges. In general, for any $k \ge 2$, one can get hop-diameter $k$ with $O(n \alpha_k(n))$ edges, in optimal $O(n \log n)$ time \cite{Sol13}; the function $\alpha_{k}(n)$ is the inverse of a certain function at the $\lfloor k/2 \rfloor$th level of the primitive recursive hierarchy, where 
$\alpha_0(n) = \lceil n/2\rceil$,
$\alpha_1(n) = \lceil \sqrt{n} \rceil$,
$\alpha_2(n)= \lceil\log{n}\rceil$,
$\alpha_3(n)= \lceil\log\log{n}\rceil$,
$\alpha_4(n)= \log^*{n}$,
$\alpha_5(n)= \lfloor \frac{1}{2}\log^*{n} \rfloor$, etc. 
(For $k \ge 4$, the function $\alpha_k$ is close to $\log$ with $\lfloor \frac{k-2}{2} \rfloor$ stars.) 

Two points on the tradeoff curve between hop-diameter $k$ and size $O(n \alpha_k(n))$ deserve special attention: 
(1) $k = 4$ vs.\ $O(n \log^* n)$ edges; in practice $\log^* n \le 10$, i.e., one can achieve hop-diameter 4 with effectively $O(n)$ edges. 
(2) $k = O(\alpha(n))$ vs.\ $O(n \alpha_k(n)) = O(n)$ edges,
where $\alpha$ is a very slowly (more than $\log^*$) growing  function;
so to achieve a truly linear in $n$ edges, one should take a hop-diameter of $O(\alpha(n))$
(which is effectively a constant).
Refer to~\Cref{sec:ackermann} for the formal definitions of the functions $\alpha_k$ and $\alpha$.
In some applications where limiting the hop-distances of paths is crucial, such as in some routing schemes, road and railway networks, and telecommunication, we might need to {\em minimize} the hop-distances; for example, imagine a railway network, where each hop in the route amounts to switching a train -- how many of us would be willing to use more than, say, 4 hops?
Likewise, what if each hop amounts to traversing a traffic light, wouldn't we prefer routes that minimize the number of traffic lights?
In such cases, the designer of the system, or its users, might not be content with super-constant hop-distances, or even with a large constant, and it might be 
of significant value to achieve as small as possible hop-distances.
Motivated by such practical considerations, we are primarily
interested in values of hop-diameter $k$ that ``approach'' 1, mainly $k = 2, 3, 4$, as there is no practical need in considering larger values of $k$ (again, $O(n \alpha_4(n)) = O(n \log^* n)$ edges is effectively $O(n)$ edges).

One can achieve the same result, except for the construction time, also for doubling metrics.
However, as mentioned, the drawback of bounded hop-diameter spanner constructions is that they are far more complex than basic spanners; hence, although there {\em exist} $k$-hop $t$-spanner paths
between all pairs of points, the crux is to {\em find such paths efficiently}.  %

While the original metric  enables us to navigate optimally --- a single hop (for any two points) with the exact distance,
 the price is high --- $\Theta(n^2)$ edges.
The following question naturally arises.

\begin{question} \label{mainq}
Can one efficiently navigate, on a {\em sparse spanner}, using $k$ hops and approximate distances, for $k$ approaching 1? In particular, can we achieve 2, 3 or 4 hops on an $o(n^2)$-sized spanner
in Euclidean or doubling metrics? 
\end{question}

Surprisingly, despite the long line of work on spanners in Euclidean and doubling metrics,~\Cref{mainq} has been overlooked. By ``efficiently navigate'' we mean to {\em quickly} output a path of {\em small weight}, where ideally: (1) ``quickly'' means within time linear in the hop-length of the path, and
(2) ``small weight'' means that the weight of the path would be larger than the original metric distance by at most the stretch factor of the underlying spanner.  

Clearly, \Cref{mainq} can be asked in general, for any class of metrics. To the best of our knowledge, this fundamental question was not asked explicitly before. %
For general graphs, the classic Thorup-Zwick distance oracle \cite{TZ01b} reports $(2\ell-1)$-approximate distance queries in $O(\ell)$ time, using a data structure of expected size $O(\ell n^{1+1/\ell})$; it is immediate that their distance oracle, when applied to metric spaces, can report 2-hop paths of stretch $2\ell-1$ in query time $O(\ell)$, which are all part of the same $(2\ell-1)$-spanner with size $O(\ell n^{1+1/\ell})$. The following question is copied from~\cite{MN06}: 

``{\em Since for large values of distortion (i.e., stretch) the query time of the Thorup-Zwick oracle is large, the problem remained whether
there exist good approximate distance oracles whose query time is a constant independent of the distortion (i.e., in a sense, true "oracles")}''. 
Mendel and Naor~\cite{MN06} gave two distance oracles with $O(1)$ query time and stretch of $128\ell$
(the stretch was improved later to $16\ell$ \cite{NT12}), the first has size $O(n^{1+1/\ell})$ and the respective paths can use any edge of the underlying metric and may thus form a network of size $\Omega(n^2)$, whereas the second has size $O(n^{1+1/\ell} \cdot \ell)$ and the respective paths have hop-lengths $\Theta(\log \rho)$. 
Wulff-Nilsen~\cite{WN13} improved the query time of the Thorup-Zwick distance oracle~\cite{TZ01} to $O(\log{\ell})$. Using the Mendel-Naor distance oracle~\cite{MN06}, Chechik~\cite{Che14,Che15} showed how to improve query time of \cite{TZ01} to $O(1)$,
but this approach suffers from the same drawback --- the respective paths may have hop-lengths $\Theta(\log \rho)$.
Mendel-Naor question can thus be strengthened as:
\begin{question} [Strengthening Mendel-Naor question \cite{MN06}] \label{generalm2}
Is there a good approximate distance oracle for general metric spaces that can report within constant time a constant-hop small-stretch path? 
\end{question}

Interestingly, for planar and minor-free graphs, it is immediate that the respective distance oracles
(\cite{Tho04,KKS11,AG06}), when applied to the respective metrics, can  provide 2-hop paths within constant query time. 

\paragraph{Related work (in a nutshell).}
Thorup~\cite{Tho92} introduced the problem of {\em diameter-reducing shortcuts} for digraphs; the goal is to find a small subset of edges taken from the transitive closure of a digraph so that the resulting digraph has small hop-diameter. 
Cohen \cite{Coh00} introduced the notion of {\em hopsets}; informally, an hopset $H$ is an edge set that, when added to a graph $G$,
provides small-stretch small-hop paths between all vertex pairs.
(See \cite{BLP20,KP22} and references therein for details.) There are also various other related problems, such as \emph{low-congestion shortcuts} \cite{GH16,GH21,KP21}.
For all these problems, the focus is on achieving a graph structure in which there {\em exist} ``good'' paths, i.e., with small hop-length and possibly additional useful properties, between vertex pairs in the graph;
the {\em existence} of such paths found a plethora of applications in distributed, parallel, dynamic and streaming algorithms, such as to the computation of approximate shortest-paths, DFS trees, and graph diameter 
\cite{Ber09,Nan14,MPVX15,HKN18,HL18,GP17,LP19}. 
However, to the best of our knowledge, the computational problem of {\em efficiently reporting} those paths --- which is the focus of our work --- has not been the focus of any prior work.

\subsection{Our contribution}
A key contribution of this work is a conceptual one, in (1) {\em realizing} that it is possible to  efficiently navigate on a much sparser spanner than the entire metric space, and (2) {unveiling} some of applicability of such a navigation scheme. 
We start by considering {\em tree metrics}; a tree metric is a metric for which the distance function is obtained as the shortest-path distance function of some (weighted) tree. 
For any tree metric, when we relax the navigation requirement to use only $k=2$ hops 
(instead of a single hop as in the original metric), we can navigate on a spanner of size $\Theta(n \log {n})$, using 2 hops and stretch 1. %
If we relax the hop-length requirement a bit more, to $k=3$, we can navigate on a yet sparser spanner, of size $\Theta(n\log\log{n})$.
In general, our navigation scheme achieves the same tradeoff between hop-diameter and size as the 1-spanner of Solomon~\cite{Sol13}. 
Our result for navigation on trees is stated in the following theorem (proved in \Cref{subsec:treespanner}); the stretch bound is 1 and one cannot improve the tradeoff between hop-diameter $k$ and size
$\Theta(n\alpha_k(n))$, due to lower bounds by~\cite{Alon87optimalpreprocessing} and~\cite{LMS22} that apply to 1-spanners and (1+$\eps$)-spanners for line metrics, respectively.
\begin{theorem}\label{thm:treeNavigate} Let $M_T$ be any tree metric, represented by an $n$-vertex edge-weighted tree $T$, let $k \ge 2$ be any integer, and let $G_T = (V(T), E)$ be the 1-spanner for $M_T$ with hop-diameter $k$ and $O(n\alpha_k(n))$ edges due to~\cite{Sol13}.
Then we can construct in time $O(n\alpha_k(n))$ a data structure $\mathcal{D}_T$ 
such that, for any two query vertices $u,v\in V(T)$, $\mathcal{D}_T$ returns a 1-spanner path in $G_T$ (which is also a shortest path in $M_T$) between $u$ and $v$ of hop-length $\le k$ in  $O(k)$ time.
\end{theorem}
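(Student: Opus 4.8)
The plan is to exploit the recursive structure of Solomon's 1-spanner $G_T$ for tree metrics. Recall how that spanner is built: one takes the tree $T$, and recursively one splits $T$ at a carefully chosen ``balanced'' portal vertex (or a small set of such vertices, depending on the level $k$ of the construction), connecting every vertex in the relevant part directly to the portal, and recursing on the resulting subtrees. For $k=2$ this is essentially a recursive centroid decomposition: one picks a centroid $c$ of $T$, adds an edge from $c$ to every other vertex (this is the ``star'' that handles all pairs $u,v$ whose tree path passes through $c$), and recurses on the components of $T \setminus c$; the recursion depth is $O(\log n)$, giving $O(n\log n)$ edges, and for general $k$ one uses the Yao-type recursion on this hierarchy to trade hop-count for size, yielding $O(n\alpha_k(n))$ edges and hop-diameter $k$. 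The crucial observation is that for any pair $u,v$, the first portal (in the recursion order) whose removal separates $u$ from $v$ — equivalently, the highest node in the decomposition hierarchy that lies on the tree path between $u$ and $v$ — is exactly the ``meeting point'' through which a short spanner path between $u$ and $v$ is routed, and the spanner path is essentially $u \to (\text{portal}) \to v$ (for $k=2$), or a length-$k$ path through a nested sequence of portals (for larger $k$).

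The first step is to make this precise by recording, during the $O(n\alpha_k(n))$-time construction of $G_T$, the decomposition hierarchy itself: for each vertex we store its depth in the hierarchy and a pointer to its ``parent portal'', and for each portal we store the edges of the stars it is the center of. This is $O(n\alpha_k(n))$ space and is built within the construction time. The data structure $\mathcal{D}_T$ then needs to answer, given $u$ and $v$, the question ``what is the top-most portal on the $u$–$v$ tree path, and what is the nested portal sequence realizing the $k$-hop path?'' The natural tool here is a lowest-common-ancestor (or, more precisely, a ``highest separating portal'') query on the decomposition hierarchy, which is a tree of depth at most $\log n$ (for $k=2$) or the appropriate $\alpha_k$-type depth for larger $k$. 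Using a linear-time LCA data structure on this hierarchy tree (Harel–Tarjan / Bender–Farach-Colton), we can find the relevant portal in $O(1)$ time; then reading off the nested sequence of at most $k$ portals and the at most $k$ spanner edges connecting $u$, these portals, and $v$ takes $O(k)$ time. The $O(k)$ query bound and the $O(n\alpha_k(n))$ preprocessing bound then follow immediately, and the fact that the returned path has hop-length $\le k$ and stretch $1$ is exactly the correctness guarantee of $G_T$ restated through the hierarchy.

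The main obstacle I anticipate is purely bookkeeping rather than conceptual: I need to verify that the portal hierarchy underlying the specific construction of~\cite{DBLP:journals/talg/Solomon13} really does have the ``nested separating portals'' property in the exact form needed, and that the $k$ spanner edges one wants to output are genuinely present in $G_T$ (as opposed to, say, a path the construction only implicitly guarantees). Concretely, for $k > 2$ the construction is not a plain centroid decomposition but a layered one where a single ``level'' of the recursion already produces a path of several hops, so I must match the parameter $k$ of the query path to the parameters of the Solomon hierarchy carefully and argue that the relevant ancestor-portal information at each of the $O(1)$ relevant hierarchy levels can be retrieved in total $O(k)$ time. Once the correspondence between ``length-$k$ spanner path'' and ``nested sequence of $\le k$ portals in the hierarchy'' is pinned down, the remainder — plug in a linear-time LCA structure, walk the $\le k$ portals, emit the $\le k$ edges — is routine, and gives the claimed $O(n\alpha_k(n))$ construction time and $O(k)$ query time.
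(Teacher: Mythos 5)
Your high-level plan coincides with the paper's: store the recursion tree of Solomon's construction, equip it with an LCA/level-ancestor structure, locate the level at which $u$ and $v$ are first separated in $O(1)$ time, and read off $\le k$ spanner edges. For $k=2$ this is complete and correct: the separating level has a single cut vertex (the centroid), and the path is $u \to c \to v$.

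The genuine gap is in the step you dismiss as ``purely bookkeeping,'' namely the cases $k \ge 3$. At the separating level of the recursion there is not one portal but a whole set $CV_\ell$ of cut vertices (of size up to $\sqrt{n}$ for $k=3$, and $\lfloor |V(T)|/(\ell+1)\rfloor$ for $k\ge 4$), and the spanner connects $u$ only to the cut vertices on the \emph{border of $u$'s own subtree} $T_i$, not to all of $CV_\ell$. So after finding the separating node via LCA you still must identify, in $O(1)$ time, the specific pair of border vertices $x \in \border(T_i)$, $y \in \border(T_j)$ that lie on the $u$--$v$ tree path, and for which the edges $(u,x)$ and $(y,v)$ are actually present in $G_T$; ``the topmost portal on the $u$--$v$ path'' is not well defined here. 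The paper resolves this with an extra auxiliary structure per composite node of the recursion tree: a \emph{contracted tree} $T_\beta$ obtained by contracting each subtree $T_i$ to a representative vertex, rooted consistently with $T$ and equipped with its own LCA/LA structures, so that $x$ and $y$ can be recovered in $O(1)$. Moreover, for $k \ge 4$ the $\le k-2$ middle hops between $x$ and $y$ come from a \emph{separate, recursively built} spanner on the cut-vertex set (after a pruning step), so the data structure must also carry a pointer from each composite node to the recursively built navigation structure for that cut-vertex spanner and recurse into it at query time. Neither of these mechanisms is implied by a single decomposition hierarchy with parent-portal pointers, and without them your query algorithm cannot produce the promised edges of $G_T$ in $O(k)$ time.
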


The runtime of the 1-spanner construction for tree metrics of \cite{Sol13} is $O(n\alpha_k(n))$, hence the data structure provided by~\Cref{thm:treeNavigate} can be built from scratch in time $O(n\alpha_k(n))$.
When it comes to 1-spanners for tree metrics, we can restrict   attention to unweighted trees;
indeed, for any two vertices $u$ and $v$ in tree $T$, if $P_{u, v}$ denotes the unique path between $u$ and $v$ in $T$,  
any 1-spanner path between $u$ and $v$ is a {\em subpath} of $P_{u,v}$ in the underlying tree metric.
\ignore{
It is useful to understand the %
meaning of a 1-spanner path in a tree metric, represented by a (possibly weighted) tree $T$, as in \Cref{thm:treeNavigate}. For any two vertices $u$ and $v$ in $T$, if $P_{u, v}$ denotes the (sequence of vertices on the) unique path between $u$ and $v$ in $T$, then 
any 1-spanner path between $u$ and $v$ is simply a {\em subpath} of $P_{u,v}$ in the underlying tree metric, which thus has exactly the same weight as that of $P_{u,v}$, and as such it is a shortest path between $u$ and $v$ in the   metric.
Note also that this holds regardless of the weight function of the tree $T$, hence, when it comes to 1-spanners for tree metrics, we might as well restrict the attention to unweighted trees in the sequel.
}

Alon and Schieber \cite{Alon87optimalpreprocessing} gave an algorithm for the online tree product that requires $O(n \alpha_k(n))$ time, space and semigroup operations during preprocessing. Their algorithm answers queries following paths of length $2k$, thus achieving $2k$ operations.
This result is equivalent to a linear-time 1-spanner for tree metrics with $O(n \alpha_k(n))$ edges and hop-diameter $2k$, and their query algorithm is in fact a navigation algorithm on top of the underlying 1-spanner.
They also discuss some applications to MST verification, finding maximum flow values in a multiterminal network, and updating a minimum spanning tree after increasing the cost of one of its edges.
\cite{Sol13} presents an improved linear-time construction of 1-spanners for tree metrics, with a hop-diameter of $2k$ rather than $k$ for the same size bound.
Since the 1-spanner construction of \cite{Sol13} is more complex than that of \cite{Alon87optimalpreprocessing}, obtaining a navigation algorithm on top of the 1-spanner of \cite{Sol13} is technically much more intricate than doing so on top of the 1-spanner of \cite{Alon87optimalpreprocessing}.
A central contribution of our work is in obtaining such a navigation algorithm, and then in realizing that, one can extend it to various families of metrics. Moreover, we demonstrate further applicability of our navigation scheme, and also strengthen our results for Euclidean and doubling metrics to achieve fault-tolerance.

To extend the navigation result of~\Cref{thm:treeNavigate} from tree metrics to wider classes of metrics, we apply known results for  \emph{tree covers}, and also design a new {\em robust} tree cover scheme (see \Cref{lm:robust-treecover}). 
Let $M_X = (X, \delta_X)$ be an arbitrary metric space.
We say that a weighted tree $T$ is a {\em dominating} tree for $M_X$ if $X \subseteq V(T)$ 
and it holds that $\delta_T(x,y) \ge \delta_X(x,y)$, for every $x,y\in X$.
For $\gamma \ge 1$ and an integer $\zeta\ge1$, a {\em $(\gamma, \zeta)$-tree cover} of   $M_X = (X,\delta_X)$ is a collection of $\zeta$ {\em dominating trees} for $M_X$, such that for every $x,y \in X$, there exists a tree $T$ with $d_T(u,v) \le \gamma \cdot \delta_X(u,v)$;
we say that the {\em stretch} between $x$ and $y$ in $T$ is at most $\gamma$,
and the parameter $\gamma$ is referred to as the {\em stretch} of the tree cover. 
A tree cover is called a \emph{Ramsey tree cover}
if for each $x \in X$, there exists a ``home'' tree $T_x$, such that the stretch 
between $x$ and every other vertex $y \in X$ in $T_x$ is at most $\gamma$.

The celebrated ``Dumbbell Theorem'' by Arya et al.~\cite{ADMSS95} provides a $(1+\eps,O(\frac{\log(1/\eps)}{\eps^d}))$-tree cover %
 in $O(\frac{\log(1/\eps)}{\eps^d}\cdot n\log{n}+\frac{1}{\eps^{2d}}\cdot n)$ time,
for $d$-dimensional Euclidean spaces.
For general metrics, the seminal work
of Mendel-Naor \cite{MN06}
provides a {\em Ramsey $(\gamma, \zeta)$-tree cover} with $O(\zeta n^{2+1/\zeta} \log n)$ time, where $\gamma = O(\ell), \zeta = 
O(\ell \cdot n^{1/\ell})$ for any $\ell \ge 1$.
Additional tree cover constructions are given in \cite{BFN19B}, including
a $(1+\eps, (1/\eps)^{\Theta(d)})$-tree cover for metrics with doubling dimension $d$.
(See \Cref{tab:treeCover} in \Cref{app:tree-covers}.)
Plugging \Cref{thm:treeNavigate} on these tree cover theorems, we obtain: 

\begin{theorem}\label{thm:oracle}
For any $n$-point metric $M_X = (X, \delta_X)$
and any integer $k \ge 2$, one can construct  a $\gamma$-spanner $H_X$ 
for $M_X$ with hop-diameter $k$ and $O(n \alpha_k(n) \cdot \zeta)$ edges,
accompanied with a data structure $\mathcal{D}_{X}$, such that for any two query points $u,v\in X$, $\mathcal{D}_X$ returns in time $\tau$ a $\gamma$-spanner path in $H_X$ between $u$ and $v$ of at most $k$ hops, where 
	\begin{itemize} [itemsep=0.4pt,topsep=0.4pt,parsep=0.4pt]
		\item $\gamma=(1+\eps)$, $\zeta=(1/\eps)^{\Theta(d)}$, $\tau=O(k/\eps^{\Theta(d)})$, if the doubling dimension of $M_X$ is $d$. 
		\item If $M_X$ is a general metric, there are two possible tradeoffs, for any integer $\ell \ge 1$:
		\begin{itemize}%
		\item
		$\gamma = O(\ell)$, $\zeta = O(\ell \cdot n^{1/\ell})$, $\tau=O(k)$.
		\item 
		$\gamma = O(n^{1/\ell}\cdot \log^{1-1/\ell}{n})$, $\zeta = \ell$, $\tau=O(k)$.
		\end{itemize}
			\item $\gamma = (1+\eps)$, $\zeta = O(((\log{n})/\eps)^2)$, $\tau=O(k\cdot ((\log{n})/\eps)^2)$, if $M_X$ is a fixed-minor-free metric.
	\end{itemize}
	If $M_X$ is doubling,   the running time is $O(n\log{n})$, for fixed $\eps$ and constant dimension $d$.
\end{theorem}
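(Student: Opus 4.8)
The plan is to compose \Cref{thm:treeNavigate} with the tree cover theorems, as summarized in \Cref{tab:treeCover}. Fix $M_X=(X,\delta_X)$ and an integer $k\ge2$. First I would invoke the tree cover construction matching the class of $M_X$ --- the Dumbbell theorem of~\cite{Arya1995} (used for the Euclidean refinement), the doubling and planar/fixed-minor-free constructions of~\cite{DBLP:conf/icalp/BartalFN19}, or the Ramsey constructions of~\cite{DBLP:conf/focs/MendelN06,DBLP:conf/icalp/BartalFN19} for general metrics --- obtaining a $(\gamma,\zeta)$-tree cover $\{T_1,\dots,T_\zeta\}$ of $M_X$ with $\gamma,\zeta$ as in the corresponding bullet of the statement. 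Each $T_i$ is a dominating tree, so $X\subseteq V(T_i)$ and $\delta_{T_i}(x,y)\ge\delta_X(x,y)$ for all $x,y\in X$; moreover each of these trees has $n_i:=|V(T_i)|=O(n)$ vertices (possibly including Steiner vertices, which our notion of dominating tree allows). For each $i$ I would apply \Cref{thm:treeNavigate} to the tree metric $M_{T_i}$ with parameter $k$, producing the $1$-spanner $G_{T_i}=(V(T_i),E_i)$ of hop-diameter $k$ with $O(n_i\alpha_k(n_i))$ edges, together with the data structure $\mathcal{D}_{T_i}$ that answers two-endpoint queries in $O(k)$ time. Set $H_X:=\bigcup_{i=1}^\zeta G_{T_i}$. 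Since $n_i=O(n)$ and the functions $\alpha_k$ ($k\ge2$) satisfy $\alpha_k(O(n))=O(\alpha_k(n))$, the number of edges of $H_X$ is $\sum_{i=1}^\zeta O(n_i\alpha_k(n_i))=O(n\alpha_k(n)\cdot\zeta)$, as claimed.

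Next I would check that $H_X$ has stretch $\gamma$ and hop-diameter $k$, and build the query structure $\mathcal{D}_X$. For any $x,y\in X$ the tree cover property yields some $T_i$ with $\delta_{T_i}(x,y)\le\gamma\,\delta_X(x,y)$, and \Cref{thm:treeNavigate} then guarantees an at-most-$k$-hop path between $x$ and $y$ in $G_{T_i}\subseteq H_X$ of weight $\delta_{T_i}(x,y)\le\gamma\,\delta_X(x,y)$; this establishes both properties. For the query structure there are two regimes. In the non-Ramsey cases (doubling, planar/fixed-minor-free), $\mathcal{D}_X$ stores $\mathcal{D}_{T_1},\dots,\mathcal{D}_{T_\zeta}$; on a query $(u,v)$ it runs all $\zeta$ of them, obtaining $\zeta$ candidate paths of $\le k$ hops each, computes each candidate's weight in $O(k)$ time by traversing it, and returns the lightest. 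Because every $T_i$ is dominating, the returned path has weight at least $\delta_X(u,v)$, and because some $T_i$ has stretch $\le\gamma$ for this particular pair, its weight is at most $\gamma\,\delta_X(u,v)$; thus it is a $\gamma$-spanner path with $\le k$ hops, produced in time $\tau=O(k\zeta)$ --- i.e.\ $O(k/\eps^d)$ in the doubling case and $O(k((\log n)/\eps)^2)$ in the planar case. In the Ramsey cases (general metrics), $\mathcal{D}_X$ additionally stores, for each $x\in X$, a pointer to its home tree $T_x$ (read off from the Ramsey tree cover); on query $(u,v)$ it runs $\mathcal{D}_{T_u}$ on $(u,v)$, which by the Ramsey property and \Cref{thm:treeNavigate} returns a $\le k$-hop path of weight $\delta_{T_u}(u,v)\le\gamma\,\delta_X(u,v)$ in $O(k)$ time, so $\tau=O(k)$.

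For the running time: each cited tree cover is built in $\mathrm{poly}(n)$ time, and for each of the $\zeta=\mathrm{poly}(n)$ trees, \Cref{thm:treeNavigate} builds $\mathcal{D}_{T_i}$ in $O(n_i\alpha_k(n_i))=\mathrm{poly}(n)$ time, while the home-tree pointers cost an additional $O(n)$ time; hence $H_X$ and $\mathcal{D}_X$ are built in polynomial time. For the Euclidean refinement I would instead use the Dumbbell theorem, which --- for fixed $\eps$ and constant $d$, so that $\zeta=O(1)$ --- builds a $(1+\eps,O(1))$-tree cover in $O(n\log n)$ time; since $\alpha_k(n)=O(\log n)$, assembling the $O(1)$ structures $\mathcal{D}_{T_i}$ adds only $O(n\log n)$, for an $O(n\log n)$ total.

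I expect the one point requiring care to be the non-Ramsey query: with no per-point home tree, $\mathcal{D}_X$ must probe all $\zeta$ trees, and one must argue that selecting the lightest of the $\zeta$ candidate paths is both correct (immediate from the trees being dominating together with the cover property) and the source of exactly the $\zeta$-factor in $\tau$ that the theorem records. Everything else is a direct, essentially mechanical composition; the only bookkeeping subtleties are that the constituent trees may carry $O(n)$ Steiner vertices and that $\alpha_k$ is stable under this $O(1)$-factor (indeed even $\mathrm{poly}(n)$) growth in the number of vertices, so the $\alpha_k$ term in the size bound is unaffected.
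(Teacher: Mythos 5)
Your proposal is correct and follows essentially the same route as the paper: build the appropriate $(\gamma,\zeta)$-tree cover, run \Cref{thm:treeNavigate} on each tree, take the union as $H_X$, and dispatch queries either via the per-point home tree (Ramsey case) or by probing all $\zeta$ trees. The only (harmless) deviation is in the non-Ramsey query: the paper first asks each $\mathcal{D}_{T_i}$ for the distance in $O(1)$ time and extracts a single path from the best tree, whereas you extract all $\zeta$ candidate paths and return the lightest in $O(k\zeta)$ time --- both fit within the stated $\tau$, and your version is arguably more self-contained since the paper never spells out the $O(1)$ distance query.
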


The navigation algorithms provided by \Cref{thm:oracle} 
work by first determining the right tree for the query points $u,v \in X$,
and then applying the tree navigation algorithm of \Cref{thm:treeNavigate} on that tree. This two-step navigation scheme might be advantageous over  navigation algorithms that don't employ trees, as navigation on top of a tree could be both faster and simpler to implement in practice.
 \Cref{thm:oracle}  implies that in low-dimensional and doubling metrics, one can navigate along a $(1+\eps)$-spanner with hop-diameter $k$ and $O(n \alpha_k(n))$ edges, within query time $O(k)$, ignoring dependencies on $\eps$ and $d$. 
Result of this sort was not known before even in Euclidean spaces, and it affirmatively settles \Cref{mainq}.
In metrics induced by fixed-minor-free graphs (e.g., planar metrics), we get a similar result,
with the number of edges and query time growing by a factor of $\log^2 n$. 
For such metrics, as mentioned, 
there are already efficient navigation algorithms, implicit in \cite{Tho04,KKS11,AG06}, so we do not achieve improved bounds here; however, as argued above, our two-step navigation scheme might still be advantageous.
Finally, in general metrics, the stretch and size of the spanners on which we navigate nearly match the best possible stretch-size tradeoff of spanners in general metrics, and the number of hops in the returned paths approaches 1.
Here too, there are already efficient navigation algorithms, which achieve better bounds on stretch and size, implicit in the works of 
\cite{TZ01,MN06,NT12,Che14,Che15}. However, our two-step navigation scheme in general metrics is advantageous over previous ones since 
it reports an actual path that belongs to the underlying spanner {\em in constant time}, which also settles \Cref{generalm2};
moreover, it uses a Ramsey cover, and is thus of further applicability (e.g.,  for routing protocols, see below). %

\ignore{
\shay{But there are two points that we should emphasize (Lazar, please remember to include this somewhere in the paper, and color it red):
1) The net-tree spanner is not as basic as a tree (due to cross edges), and there's an advantage in a tree vs. a net-tree spanner (even though it's a collection of trees and not just one). E.g., going from a node to its parent in the tree can be implemented quickly/easily, and roughly the same can be said about LCA, whereas determining whether two nodes share a cross edge can be viewed as more complex and slow (even though the deg of nodes (rather than points) in the net-tree is constant).
For example, if we want to "navigate" between two leaves u and v in the tree, we simply go up the tree until they meet. This is also relevant in routing.
2) For our purposes of achieving a very small hop-diam, say 2 or 3 - this is impossible to achieve by shortcutting the net-tree spanner - there the diameter will not be k but rather 2k+1. So e.g. instead of 2 hops, we'll get 5 hops.}
}

\paragraph{A unified approach.}
Although our original motivation was in Euclidean spaces, 
our two-step navigation scheme extends far beyond it. %
Our technique for efficiently navigating 1-spanners for tree metrics, as provided by \Cref{thm:treeNavigate}, 
provides a {\em unified reduction} from efficient navigation schemes in an arbitrary metric class to {\em any tree cover} theorem in that class; in other words, {any new tree cover theorem  will directly translate into a new navigation scheme}.

\paragraph{A fault-tolerant spanner and navigation scheme.}
In Euclidean and doubling metrics, we design a fault-tolerant (FT) navigation scheme, 
where we can navigate between pairs of non-faulty points in the network even when a predetermined number $f$ of nodes become faulty,
while incurring small overheads (factor of at most $f$) on the size of the navigation data structure and other parameters. 
We first  generalize the Euclidean ``Dumbbell Tree'' Theorem ~\cite{ADMSS95} for  doubling metrics; this generalization is nontrivial and is perhaps the strongest technical contribution of this work. 
At a high-level, the ``Dumbbell Tree'' Theorem is quite {\em robust} against adversarial perturbations of input points;
specifically, any internal node in any tree in the cover can be assigned any descendant leaf as its associated point without affecting the stretch bound. This property  %
is not achieved by the tree cover of \cite{BFN19B} in doubling metrics.
Building on our robust tree cover theorem, we design a new construction of FT sparse spanners of bounded hop-diameter; this construction achieves optimal bounds on all involved parameters for fixed $f$, and is of independent interest.
Our FT navigation scheme is obtained from our new FT spanner
just as our basic navigation scheme is obtained from the basic spanner of \cite{Sol13}. 
See \Cref{ft} for the full details.

\paragraph{Broad applicability.}
We argue that an efficient navigation scheme is of broad potential applicability,
by providing a few applications and implications;
we anticipate that more will follow.

Perhaps the main application of our navigation technique is an efficient {\em routing scheme}, where we achieve small bounds on the  local memory at all nodes, even though the maximum degree is huge, which is inevitable for spanners of tiny hop-diameter.
Due to space constraints, in this  discussion we provide details only on this application. 
In a nutshell, other applications of our navigation scheme include: (1) {\em Efficient sparsification of light-weight spanners}, where we start from an arbitrary light-weight but possibly dense spanner and transform it into a spanner that has the original stretch and weight but is also sparse. (2) {\em Efficient computation on the spanner}, where we are able to compute basic graph structures (such as MST and SPT) efficiently on top of a spanner rather than the underlying metric (which is not as part of our input).
(3) {\em Online tree product queries and applications}, where our basic navigation scheme can be used as a query algorithm 
for the online tree product problem, which finds applications to MST verification and other problems.
More details on these applications are deferred to  \Cref{appNutshell} (introductory details) and  
\Cref{sec:applications} (full details). 

Our basic result on routing schemes is in providing a routing scheme of stretch 1 on \emph{tree metrics}, for $k=2$ hops and using labels and local routing tables of $O(\log^2{n})$ bits and headers of $O(\log{n})$ bits. The routing scheme works in the labeled, fixed-port model (see \Cref{sec:labeling} for the definitions). The bound on the number of hops is best possible without routing on the complete graph. 
We employ this basic routing scheme in conjunction with the aforementioned tree covers and obtain efficient routing schemes for 
doubling, general and fixed-minor-free metrics. %
For doubling metrics, we strengthen the result to achieve
a fault-tolerant routing scheme, where packets can be routed efficiently even when a predetermined number of nodes in the input metric become faulty. 

\begin{theorem}\label{thm:routing}
For any $n$-point metric $M_X = (X, \delta_X)$, one can construct a $\gamma$-stretch $2$-hop routing scheme in the labeled, fixed-port model with headers of $\lceil\log{n}\rceil$ bits, labels of $b_l$ bits, local routing tables of $b_t$ bits, and local decision time $\tau$, where:
\begin{itemize} [itemsep=0.4pt,topsep=0.4pt,parsep=0.4pt]
\item $\gamma=(1+\eps)$, $b_l=b_t=O(\eps^{-O(d)} \log(n)\log(n/\eps))$, $\tau = O(\eps^{-O(d)}),$ for doubling dimension $d$.
\item If $M_X$ is a general metric, there are two possible tradeoffs, for any integer $\ell\ge1$:
\begin{itemize}%
\item $\gamma=O(\ell)$, $b_l=O(\log^2{n})$, $b_t=O(\ell \cdot n^{1/\ell} \log^2{n})$, $\tau=O(1)$.
\item $\gamma=O(n^{1/\ell}\cdot \log^{1-1/\ell}{n})$, $b_l=O(\log^2{n})$, $b_t=O(\ell \log^2{n})$, $\tau=O(1)$.
\end{itemize}
\item $\gamma=(1+\eps)$, $b_l=b_t=O((\log{n}/\eps)^3 \log{n})$, $\tau=O((\log{n}/\eps)^2)$, for a fixed-minor-free metric.
\end{itemize}
If $M_X$ has doubling dimension $d$, the running time is $O(n\log{n})$, for fixed $\eps$ and $d$. 
In this case, one can achieve an $f$-fault-tolerant routing scheme, with the bounds on $b_l$ and $b_t$ growing by a factor of $f$.
\end{theorem}

This provides the first routing schemes in Euclidean as well as doubling metrics, 
where the number of hops is as small as 2, and the labels have near-optimal size. 
To the best of our knowledge, no previous work on routing schemes in Euclidean or doubling metrics achieve a sub-logarithmic bound on the hop-distances, let alone a bound of 2. Some previous works \cite{GR08Soda, CGMZ16}
obtain their routing schemes by routing on constant-degree spanners, which means that the hop-diameters of those spanners are at least $\Omega(\log n)$, hence the hop-lengths of the routing paths are $\Omega(\log n)$ too. The other routing schemes \cite{DBLP:conf/podc/HassinP00, DBLP:conf/podc/AbrahamM04, Tal04, DBLP:conf/podc/Slivkins05, AGGM06} do not work in this way, but still have a hop-diameter of $\Omega(\log{n})$ or even $\Omega(\log \rho)$. 
We also stress that our routing scheme is fault-tolerant, which is of practical importance, and we are not aware of any previous fault-tolerant routing scheme in Euclidean or doubling metrics.

There are many works on routing in general graphs \cite{ABLP90,AP92,Cowen01,TZ01,EGP03,Che13,RT15,ACEFN20,Fil22}. In metrics, it is much easier to get an efficient routing scheme. The Thorup-Zwick routing scheme \cite{TZ01} can achieve two hops in general metrics with stretch $4\ell-5$ (improved to $3.68\ell$ \cite{Che13}), labels of $O(\ell \log n)$ bits, and table sizes of $\tilde{O}(n^{1/\ell})$. 
These approaches, when modified to work in metrics, incur a decision time of $O(\ell)$, and it is not clear whether it can be improved. Our result for general metrics from \Cref{thm:routing},
while inferior in terms of the stretch (a constant factor), the label sizes (a $\log n / \ell$ factor) and table size (a $\log n \ell$ factor), achieve constant decision time, which might be an important advantage in real-time routing applications. 

\subsection{Further discussion on Applications} \label{appNutshell}

\paragraph{Efficient sparsification of light spanners.}
Let $M_X = (X,\delta_X)$ be an arbitrary $n$-point metric space and let $G$ be any $m$-edge spanner for $M_X$ of light weight.  %
Our goal is to transform $G$ into a sparse spanner for $M_X$, without increasing the stretch and weight by much. 
Let $\mathcal{D}_X$ be the data structure provided by~\Cref{thm:oracle}.
For each edge in $G$, we can query $D_X$ for the $k$-hop path between its endpoints and then return the union of the paths over all edges. It is not difficult to verify that the resulting graph is a spanner for $M_X$, whose stretch and weight are larger than those of $G$ by at most a factor $\gamma$, but it includes at most $O(n\alpha_k(n)\cdot \zeta)$ edges --- thus it is not only light but also sparse. The runtime of this transformation is $O(m \cdot \tau)$. (As in \Cref{thm:oracle}, we denote by $\gamma$ the stretch of the tree cover, $\zeta$ bounds the number of trees in the cover, and $\tau$ bounds the query time --- which is $O(k \log^2 n)$ for fixed-minor-free graphs, 
and $O(k)$ for all other metric classes.) For further details, see \Cref{sec:sparse-spanner}.

\paragraph{Efficient computation on the spanner.}
As mentioned already, once a spanner has been constructed, it usually serves as a ``proxy'' overlay network, on which any subsequent computation can proceed,
in order to obtain savings in various measures of space and running time.
This means that any algorithm that we may wish to run, should be (ideally) run on top of the spanner itself. Furthermore, in some applications, we may not have direct access to the entire spanner, but may rather have implicit and/or local access, such as via labeling or routing schemes, or by means of a data structure for approximate shortest paths within the spanner, such as the one provided by~\Cref{thm:oracle}.

Suppose first that we would like to construct a (possibly approximate) shortest path tree (SPT).
An SPT for the original metric space is simply a star (in any metric). 
But the star is (most likely) not a subgraph of the underlying spanner. How can we efficiently transform the star into an approximate SPT in the spanner?
If we have direct, explicit access to the spanner, we can simply compute an SPT on top of it using Dijkstra's algorithm, which will provide an approximate SPT for the original metric. 
Dijkstra's algorithm, however, will require $\Omega(n \log n)$ time (for an $n$-vertex spanner), even if the spanner size is $o(n \log n)$; there is also another SPT algorithm that would run in time linear in the spanner size, but it is more complex and also assumes that $\log n$-bit integers can be multiplied in constant time \cite{DBLP:journals/jacm/Thorup99}. 
Using our navigation scheme, as provided by~\Cref{thm:oracle}, we can do both better and simpler, and we don't even need explicit access to the underlying spanner (though we do need, of course, access to the navigation scheme).  %
The data structure provided by~\Cref{thm:oracle} allows us to construct, within time $O(n \tau)$, an
approximate SPT. In particular, for low-dimensional Euclidean and doubling metrics, 
we can construct a $(1+\eps)$-approximate SPT (for a fixed $\eps$) that is a subgraph of the underlying spanner within $O(n k)$ time, where $k = 2,3,\ldots,O(\alpha(n))$. Refer to \Cref{sec:spt} for further details.

Suppose next that we would like to construct an approximate minimum spanning tree (MST). 
In low-dimensional Euclidean spaces one can compute a $(1+\eps)$-approximate MST (for a fixed $\eps$) in $O(n)$ time \cite{chan2008well}, 
but again this approximate MST may not be a subgraph of the spanner. Running an MST algorithm on top of the spanner would require time that is at least linear in the spanner size; moreover, the state-of-the-art deterministic algorithm runs in super-linear time and is rather complex \cite{Chazelle00},
and the state-of-the-art linear time algorithms either rely on randomization \cite{KKT95} or on some assumptions, such as the one given by \emph{transdichotomous model} \cite{FW94}. Instead, using our navigation scheme, as provided by~\Cref{thm:oracle}, we can construct an approximate MST easily, within time $O(n \tau)$. In particular, for low-dimensional Euclidean spaces, we can construct in this way a $(1+\eps)$-approximate MST (for a fixed $\eps$) that is a subgraph of the underlying spanner within $O(n k)$ time, where $k = 2,3,\ldots,O(\alpha(n))$. Refer to \Cref{sec:mst} for further details.

The same principle extends to other metric spaces, but some of the guarantees degrade.
In particular, for metric spaces, the approximation factor increases far beyond $1+\eps$, at least assuming we would like the size of the underlying spanner to be near-linear. 
We stress that (approximate) SPTs and MSTs are two representative examples, but the same approach can be used for efficiently constructing other subgraphs of the underlying spanner; we also note that a shallow-light tree (SLT) \cite{ABP90, ABP92, KRY93,Sol14}, which is tree structure that combines the useful properties of an SPT and an MST, can be constructed in linear time given any approximate SPT and MST, and the resulting SLT is also a subgraph of these input trees \cite{KRY93}. Thus, after constructing approximate SPT and MST as described above, we obtain, within an additional linear time, an SLT that is a subgraph of the underlying spanner.

\paragraph{Online tree product and MST verification.}
The paper by \cite{Alon87optimalpreprocessing} focuses on the following problem. Let $T$ be a tree with each of its $n$ vertices being associated with an element of a semigroup. One needs to answer online queries of the following form: Given a pair of vertices $u,v\in T$, find the product of the elements associated with the vertices along the path from $u$ to $v$.
They show that one can preprocess the tree using $O(n\alpha_k(n))$ time and space, so that each query can be answered using at most $2k$ semigroup operations. They also showed several applications of their algorithm, such as to finding maximum flow in a multiterminal network, MST verification, and updating the MST after increasing the cost of its edges \cite{Alon87optimalpreprocessing}.

One can show that the result of \cite{Alon87optimalpreprocessing} gives rise to a construction of 1-spanners for tree metrics with $O(n\alpha_k(n))$ edges and hop-diameter $2k$. This is inferior to the spanner of \cite{Sol13}, since it achieves a twice larger hop-diameter ($2k$ instead of $k$) for the same number of edges, $O(n\alpha_k(n))$. In particular, their construction cannot be used to achieve hop-diameters 2 and 3, which is the focus of this paper.
We stress that some of the applications that we discussed above cannot be achieved using this weaker result. As a prime example, our routing scheme crucially relies on having hop-diameter 2. Paths of hop-distance 2 have a very basic structure (going through a single intermediate node), which our routing scheme exploits. As a result, the underlying spanner has $O(n \log n)$ edges,
which ultimately requires us to use $ O(\log^2 n)$ bits of space. Whether or not one can use a spanner of larger (sublogarithmic and preferably constant) hop-diameter for designing compact routing schemes with $o(\log^2 n)$ bits is left here as an intriguing open question. Exactly the same obstacle should render the construction of \cite{Alon87optimalpreprocessing} infeasible for constructing efficient routing schemes, since the hop-distances provided by the result of \cite{Alon87optimalpreprocessing} are larger than 2.

In \Cref{sec:treeProd}, we show that our navigation scheme (provided by \Cref{thm:treeNavigate}) can be used as a query algorithm that supports all the applications supported by \cite{Alon87optimalpreprocessing}, but within a factor 2 improvement on the hop-distances (or on other quality measures that are derived from the hop-distances). One such application is to the online MST verification problem, which is the main building block for randomized MST algorithms.
For this problem, Pettie \cite{DBLP:journals/combinatorica/Pettie06} shows that it suffices to spend $O(n  \alpha_{2k}(n))$ time and space and $O(n \log \alpha_{2k}(n))$ comparisons during preprocessing, so that each subsequent query can be answered using $4k-1$ comparisons.\footnote{In fact, Pettie \cite{DBLP:journals/combinatorica/Pettie06} claimed that the preprocessing time is $O(n\log\alpha_{2k}(n))$, and that each subsequent query can be answered using $2k-1$ comparisons. This is inaccurate, as we elaborate in \Cref{sec:mstVerify}.}
Our algorithm takes $O(n  \alpha_{2k}(n))$ time and space and $O(n \log \alpha_{2k}(n))$ comparisons during preprocessing, so that each subsequent query is answered using $2k-1$ comparisons in $O(k)$ time. The result of \cite{DBLP:journals/combinatorica/Pettie06} can also achieve a query time of $O(k)$, by building on \cite{Alon87optimalpreprocessing}, but using $4k-1$ comparisons rather than $2k-1$ as in our result.
Concurrently and independently of us, Yang \cite{DBLP:journals/corr/abs-2105-01864} obtained a similar result.

\section{Preliminaries}\label{sec:prelim}

This section contains definitions and results required for the rest of the paper. In particular, \Cref{app:tree-covers} summarizes known tree cover theorems which we rely on and \Cref{sec:ackermann} introduces variants of Ackermann function which we use.

\subsection{Summary of known results on tree covers}\label{app:tree-covers}
The following table summarizes known results on tree cover theorems. 

\begin{table}[H]
\hspace*{-1.2cm}
\centering
\scalebox{0.90}{
\begin{tabular}{ |c | c | c |c|c| }
\hline
\textbf{Stretch ($\bm{\gamma}$)} & \textbf{Num. of trees ($\bm{\zeta}$)} & \textbf{Metric family} & \textbf{Construction time} & \textbf{Authors} \\\hline
$1+\eps$ & $(1/\eps)^{\Theta(d)}$ &  with doubling dim. $d$ & $O(n\log{n})$ &  \cite{ADMSS95, BFN19B}\\\hline
$1+\eps$ & $O(((\log{n})/\eps)^2)$ & fixed-minor-free (e.g., planar) & $n^{O(1)}$  &  \cite{BFN19B}\\\hline
$O(\ell)$ & $O(\ell \cdot n^{1/\ell})$  & general  &  $O(\ell \cdot n^{2+1/\ell}\log{n})$&  \cite{MN06} \\\hline
$O(n^{1/\ell}\cdot \log^{1-1/\ell}{n})$ & $\ell$ & general  &  $n^{O(1)}$ & \cite{BFN19B} \\\hline
\end{tabular}
}
\caption{Summary of the tree cover results used throughout the paper. The last two results provide Ramsey tree covers for any integer $\ell\ge 1$.}\label{tab:treeCover}
\end{table}

\subsection{Ackermann functions}\label{sec:ackermann}
Following standard notions \cite{DBLP:journals/jacm/Tarjan75,Alon87optimalpreprocessing,Chaz87,Narasimhan2007,Sol13}, we will introduce two very rapidly growing functions $A(k, n)$ and $B(k, n)$, which are variants of Ackermann's function. Later, we also introduce several inverses and state their properties that will be used throughout the paper.
\begin{definition}
For all $k \ge 0$, the functions $A(k,n)$ and $B(k,n)$ are defined as follows:
\begin{align*}
A(0, n) &\coloneqq 2n, \text{\emph{ for all }} n \ge 0,\\
A(k, n) &\coloneqq 
\begin{cases}
1 &\text{\emph{ if }} k \ge 1 \text{\emph{ and }} n = 0\\
A(k-1, A(k, n-1)) & \text{\emph{ if }} k \ge 1 \text{\emph{ and }} n \ge 1\\
\end{cases}\\
B(0, n) &\coloneqq n^2, \text{\emph{ for all }} n \ge 0,\\
B(k, n) &\coloneqq 
\begin{cases}
2 &\text{\emph{ if }} k \ge 1 \text{\emph{ and }} n = 0\\
B(k-1, B(k, n-1)) & \text{\emph{ if }} k \ge 1 \text{\emph{ and }} n \ge 1\\
\end{cases}
\end{align*}
\end{definition}
We now define the functional inverses of $A(k,n)$ and $B(k,n)$.
\begin{definition}%
For all $k \ge 0$, the function $\alpha_k(n)$ is defined as follows:
\begin{align*}
 \alpha_{2k}(n) &\coloneqq \min\{s \ge 0: A(k, s) \ge n\},\text{\emph{ for all }} n\ge 0\text{,}\\
	 \alpha_{2k+1}(n) &\coloneqq \min\{s \ge 0: B(k, s) \ge n\},\text{\emph{ for all }} n\ge 0.
\end{align*}
\end{definition}
It is not hard to verify that $\alpha_0(n) = \lceil n/2\rceil$,
$\alpha_1(n) = \lceil \sqrt{n} \rceil$,
$\alpha_2(n)= \lceil\log{n}\rceil$,
$\alpha_3(n)= \lceil\log\log{n}\rceil$,
$\alpha_4(n)= \log^*{n}$,
$\alpha_5(n)= \lfloor \frac{1}{2}\log^*{n} \rfloor$, etc. 

The spanner construction of \cite{Sol13} (and thus also our navigation algorithm) uses a slight variant $\alpha'_k$ of the function $\alpha_k$.
\begin{definition}
We define function $\alpha_k'(n)$ as follows:
\begin{align*}
\alpha'_0(n) &\coloneqq \alpha_0(n), \text{\emph{ for all }} n \ge 0,\\
\alpha'_1(n) &\coloneqq \alpha_1(n), \text{\emph{ for all }} n \ge 0,\\
\alpha'_k(n) &\coloneqq \alpha_k(n), \text{\emph{ for all }} k\ge 2 \text{\emph{ and }} n \le k+1,\\
\alpha'_k(n) &\coloneqq 2 + \alpha'_k(\alpha'_{k-2}(n)), \text{\emph{ for all }} k\ge 2 \text{\emph{ and }} n\ge k+2.\\
\end{align*}
\end{definition}
By Lemma 2.4 from \cite{Sol13} we know that the functions $\alpha'_k$ and $\alpha_k$ are asymptotically close --- for all $k,n \ge 0$, $\alpha_k(n) \le \alpha_k'(n) \le 2\alpha_k(n)+4$.

Finally, for all $n\ge 0$, we introduce the Ackermann function as $A(n) \coloneqq A(n,n)$, and its inverse as $\alpha(n) = \min\{s \ge 0 : A(s) \ge n\}$. In \cite{Narasimhan2007}, it was shown that  $\alpha_{2\alpha(n)+2}(n)\le 4$.

Following \cite{DBLP:journals/combinatorica/Pettie06}, we introduce a slight variant of Ackermann's function as follows:\footnote{In \cite{DBLP:journals/combinatorica/Pettie06}, the function was denoted by letter $A$.}
\begin{align*}
P(1,j) &\coloneqq 2^j & \text{ for } j\ge 0,\\
P(i, 0)  &\coloneqq P(i-1, 1) & \text{ for } i \ge 2,\\
P(i, j) &\coloneqq P(i-1,2^{2^{P(i, j-1)}} ) & \text{ for } i \ge 2, j \ge 1.
\end{align*}
Then, its inverse of the $i$th row is defined as:
\begin{align*}
\lambda_i(n) \coloneqq \min\{j \ge 0 : P(i, j)\ge n\}.
\end{align*}

\begin{restatable}{lemma}{compareAckermann}\label{lemma:compareAckermann}
For any $i\ge 1$, if $\lambda_i(n) >0$, then $\frac{1}{3}  \alpha_{2i}(n) \le \lambda_i(n) \le \alpha_{2i}(n)$. 
\end{restatable}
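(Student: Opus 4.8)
The plan is to prove the lemma by inverting a two-sided sandwich of Pettie's row function $P(i,\cdot)$ by the Ackermann-type function $A(i,\cdot)$: namely, $A(i,j)\le P(i,j)\le A(i,3j)$, the upper bound holding in the relevant range of $j$ (with a bounded correction for the smallest values of $j$, which fall outside the hypothesis). Granting such a sandwich, the lemma is immediate. Since $\alpha_{2i}(n)$ is by definition the least $s$ with $A(i,s)\ge n$, the lower bound $A(i,\cdot)\le P(i,\cdot)$ gives $P(i,\alpha_{2i}(n))\ge A(i,\alpha_{2i}(n))\ge n$, hence $\lambda_i(n)\le\alpha_{2i}(n)$. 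In the other direction, write $j:=\lambda_i(n)$; the hypothesis $\lambda_i(n)>0$ guarantees $j\ge1$, so the upper bound applies and yields $A(i,3j)\ge P(i,j)\ge n$, whence $\alpha_{2i}(n)\le 3\lambda_i(n)$, i.e.\ $\lambda_i(n)\ge\tfrac13\alpha_{2i}(n)$.

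The lower sandwich $A(i,j)\le P(i,j)$ I would establish by induction on $i$. The base $i=1$ is in fact an equality: unfolding $A(1,n)=A(0,A(1,n-1))=2\,A(1,n-1)$ gives $A(1,n)=2^n=P(1,n)$. For the step, assume $A(i-1,\cdot)\le P(i-1,\cdot)$ and induct on $j$; the base $j=0$ reduces to $P(i,0)=P(i-1,1)\ge A(i-1,1)\ge 1=A(i,0)$, and for $j\ge1$, using $2^{2^x}\ge x$, monotonicity of $A(i-1,\cdot)$, and the inner hypothesis, $P(i,j)=P\!\left(i-1,\,2^{2^{P(i,j-1)}}\right)\ge A\!\left(i-1,\,2^{2^{P(i,j-1)}}\right)\ge A(i-1,P(i,j-1))\ge A(i-1,A(i,j-1))=A(i,j)$.

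The upper sandwich is the crux, and the step I expect to be the main obstacle, because the double exponential $x\mapsto 2^{2^x}$ in the recursion for $P$ must be shown to cost only a bounded additive increase in the second argument of $A$. The plan is to isolate a helper estimate of the shape $3\cdot 2^{2^{A(i,m)}}\le A(i,m+c)$ for $i\ge1$ and $m$ past a small threshold, proved by unfolding $A(i,m)=A(i-1,A(i,m-1))$ and using that $A(i-1,x)\ge 2^x$ whenever $i-1\ge1$, so that the two extra exponentiations together with the constant factor are swallowed by incrementing the argument by $c$. Fed into a double induction --- outer on $i$ via $P(i-1,m)\le A(i-1,3m)$, inner on $j$ via $P(i,j-1)\le A(i,3(j-1))$ --- the step becomes $P(i,j)=P\!\left(i-1,\,2^{2^{P(i,j-1)}}\right)\le A\!\left(i-1,\,3\cdot 2^{2^{P(i,j-1)}}\right)\le A(i-1,A(i,3j-1))=A(i,3j)$. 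The factor $3$ in the statement records that one step of $P$'s recursion advances roughly three ``exponential levels'', against which a single increment of the second argument of $A(2,\cdot)$ contributes one exponential; the delicate part --- and the reason the clean inequality is asserted only under $\lambda_i(n)>0$ --- is to handle the small-$j$ base cases so that this correspondence holds with the constant coming out exactly $3$ rather than $3$ up to lower-order additive terms.
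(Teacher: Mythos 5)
Your overall architecture is the same as the paper's: sandwich $P(i,\cdot)$ between $A(i,\cdot)$ and $A(i,3\cdot)$, then invert. The inversion step and your inductive proof of the lower sandwich $A(i,j)\le P(i,j)$ are correct (the paper does not prove this itself --- it passes through Tarjan's function $T$, notes $A(i,j)=T(i,j)$ for $j\ge 1$, and cites Pettie for both inequalities $T(i,j)\le P(i,j)$ and $P(i,j)\le T(i,3j)$). So the only place where you attempt new work beyond the paper is the upper sandwich, and that is exactly where your proposal has a genuine gap.

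The upper sandwich $P(i,j)\le A(i,3j)$ carries the entire content of the bound $\lambda_i(n)\ge\tfrac13\alpha_{2i}(n)$, and you leave it as a sketch, explicitly flagging the helper estimate and the small-$j$ base cases as unresolved. This is not a removable technicality: the inequality you are aiming for is false as stated. At $i=2$ one has $A(1,x)=2^x$ exactly, so $A(2,s)$ is a tower of $2$'s of height $s$, while the recursion $P(2,j)=P(1,2^{2^{P(2,j-1)}})=2^{2^{2^{P(2,j-1)}}}$ starting from $P(2,0)=2$ gives a tower of height $3j+1$; hence $P(2,j)=A(2,3j+1)>A(2,3j)$ for every $j\ge1$ (concretely, $P(2,1)=2^{16}$ versus $A(2,3)=16$). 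Correspondingly, your helper estimate $3\cdot 2^{2^{A(i,m)}}\le A(i,m+2)$, which is what the induction step $P(i,j)\le A(i-1,3\cdot 2^{2^{P(i,j-1)}})\le A(i-1,A(i,3j-1))$ requires, fails at $i=2$ because $A(1,\cdot)$ provides no slack beyond a single exponentiation. Any direct proof must therefore settle for an additive offset (e.g.\ $P(i,j)\le A(i,3j+O(1))$), which changes the constant in the final bound on small inputs; this is precisely the ``lower-order additive terms'' issue you mention but do not resolve. The paper avoids confronting this by citing Pettie's inequality wholesale, so to complete your route you would either need to do the same or carry the additive correction through the inversion.
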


\section{Navigating metric spaces}\label{SectionPathOracle}
In this section we present the navigation algorithm for metric spaces.
\Cref{subsec:treespanner} is devoted to proving \Cref{thm:treeNavigate}, which concerns navigation on 1-spanner with bounded hop-diameter for tree metrics by \cite{Sol13}. In \Cref{subsec:treeCoverNav}, we use tree cover theorems (cf.~\Cref{tab:treeCover}) and prove \Cref{thm:oracle}, which concerns navigation on metric spaces.

\subsection{Navigating the tree spanner}\label{subsec:treespanner}

Our navigation algorithm consists of two parts. In \Cref{subsec:preprocess-tree}, we present a preprocessing algorithm, which takes a tree $T$ and an integer parameter $k \ge 2$; it constructs \cite{Sol13} 1-spanner $G_T$ with hop-diameter $k$ for a tree metric $M_T = (V(T), \delta_T)$ induced by $T$, together with data necessary for efficiently navigating it. Next, in \Cref{subsec:query}, we present a query algorithm which, given any two vertices $u,v \in T$, outputs in $O(k)$ time a 1-spanner $k$-hop path between $u$ and $v$ in $G_T$.

The result of \cite{Sol13} considers a generalized problem of constructing 1-spanners for \emph{Steiner tree metrics}. Specifically, suppose that in a given tree $T$, a subset $R(T) \subseteq V(T)$ of the vertices are set as \emph{required vertices}. The other vertices $S(T) \coloneqq V(T)\setminus R(T)$ are called \emph{Steiner vertices}. We say that a 1-spanner $G_T$ for $M_T$ has hop-diameter $k$ if it contains a 1-spanner path for $M_T$ that consists of at most $k$ edges, for every pair of vertices in $R(T)$. 

We next give high-level explanation of the \cite{Sol13} spanner construction algorithm. It relies on the following two procedures, which we shall also use in our preprocessing algorithm.
\begin{itemize}
\item $\Call{Prune}{(T, rt(T)), R(T)}$: Takes as an input a tree $T$, its root $rt(T)$, and the set of required vertices $R(T)$. Outputs an edge-weighted tree $(T_{\pruned}, rt(T_\pruned))$, which contains $R(T)$ and has at most $|R(T)|-1$ Steiner vertices. We set the weight $w_{T_{\pruned}}(u,v) = \delta_T(u,v)$ for every edge $(u,v)\in V(T_{\pruned})$. Informally, the procedure keeps the intrinsic properties of $T$, while reducing the number of Steiner vertices. For more details, see Section 3.2 in \cite{Sol13}. The running time is $O(|V(T)|)$.
\item $\Call{Decompose}{(T, rt(T)), R(T), \ell}$: Takes as an input a rooted tree $(T,rt(T))$, the set of required vertices $R(T)$, and an integer parameter $\ell \geq 1$. (The parameter $\ell$ will be set to $\alpha'_{k-2}(n)$.) Outputs, in $O(|V(T)|)$ time, a set of \emph{cut vertices}, denoted by $CV_{\ell} \subseteq V(T)$, such that every connected component (tree) of $T\setminus CV_{\ell}$ contains at most $\ell$ required vertices. The size of $CV_{\ell}$ satisfies $|CV_{\ell}| \leq \lfloor \frac{|V(T)|}{\ell+1} \rfloor$. 
\end{itemize}

At the beginning of the spanner construction, we find a subset of vertices $CV_{\ell} \subseteq V(T)$, using $\Call{Decompose}{(T, rt(T)), R(T), \ell}$. 
We then compute the set of edges $E'$, which interconnects vertices in $CV_\ell$. The algorithm distinguishes several cases:
\begin{itemize}
\item If $k = 2$, then $|CV_{\ell}| = 1$ and $E' = \emptyset$.
\item If $k = 3$, then connect every pair of vertices in $CV_\ell$, i.e., $E' = CV_\ell \times CV_\ell$.
\item If $k\geq 4$, then make a copy $T'$ of $T$, set $CV_\ell$ as its required vertices and prune it, by invoking $\Call{Prune}{(T', rt(T')), CV_\ell}$; let $E'$ be the set of edges returned by recursive spanner construction on $T'$ with hop-diameter set to $k-2$.
\end{itemize}
Denote by $T_1, \ldots, T_p$ the trees in $T \setminus CV_\ell$.
The algorithm computes the set of edges $E''$ that connects the cut vertices of $CV_\ell$ with the corresponding subtrees.
Given a subtree $T_i\in T$, we say that a vertex $u\in T$ is a \emph{border vertex} of $T_i$ if $u\notin V(T_i)$ is adjacent to a vertex in $T_i$. Let $\border(T_i)$ denote the set of all border vertices of $T_i$. With a slight abuse of notation, we let $\border(v) = \border(T_i)$ for all $v \in T_i$; in addition, for $c \in CV_\ell$, let $\border(c) = \{v \in T \mid c \in \border(v)\}$.  
For every $c \in CV_\ell$, we add an edge from $c$ to all the required vertices in $\border(c)$. Finally, for each $i$ in $[p]$, we let $E_i$ be the set of edges obtained by recursive spanner construction on $T_i$. The set of spanner edges is $E' \cup E'' \cup \bigcup_{i \in [p]} E_i$. This concludes the high-level description of algorithm for constructing spanner.

The construction guarantees that between any two vertices $u,v\in R(T)$, there is a path of length $\delta_T(u,v)$ in $G_T$ consisting of at most $k$ edges. This path is a shortcut of the path between $u$ and $v$ in $T$. More formally, denote by $\mathcal{P}_T(u,v)$ the unique path in $T$ between a pair $u,v$ of vertices in $T$. A path $P$ in $G_T$ between $u$ and $v$ is called $T$-monotone if it is a subpath of $\mathcal{P}_T(u,v)$, that is, if $\mathcal{P}_T(u,v) = (u = v_0,v_1,\ldots,v_t = v)$, then $P$ can be written as $P = (u = v_{i_0},v_{i_1},\ldots,v_{i_q} = v)$, where $0 = i_0 < i_1 < \dots < i_q = t$. For any two vertices $u,v \in R(T)$, there is a $T$-monotone path in $G_T$ of at most $k$ edges.

Despite the guarantee of existence of a $k$-hop path between any two vertices in $R(T)$, it is not a priori clear how one can efficiently find such a path. Consider $k=2$, as the most basic setting. 
It is shown in \cite{Sol13} that for any two $u$ and $v$, there exists an intermediate cut vertex $w$ on the path $\mathcal{P}_T(u,v)$, such that $(u, w)$ and $(w, v)$ are in $G_T$. (For simplicity, we omit some technical details of handling the corner cases.) But this cut vertex can be anywhere on $\mathcal{P}_T(u,v)$ and (at least naively) finding it could take number of steps linear in the length of the path.

Our key idea is to rely on the recursion tree of the spanner construction algorithm. Since the edges $(u,w)$ and $(w,v)$ are in $G_T$ and $w$ is a cut vertex, there must be a recursive call which had $CV_\ell = \{w\}$.
We explicitly build the recursion tree of the spanner construction, and store with each of its vertices the data required for efficient navigation. We call such a tree \emph{augmented recursion tree},  and denote it by $\Phi$. For each vertex $v$ in $R(T)$, we keep track of the vertex in $\Phi$ which corresponds to the recursive call when $v$ was chosen to be a cut vertex. To answer a query for $k$-hop path between $u$ and $v$, we can find an intermediate cut vertex $w$ as follows. First, we identify two vertices $\alpha_u$ and $\alpha_v$ corresponding to $u$ and $v$ in $\Phi$. Then, we find their lowest common ancestor $\beta$ in $\Phi$. Vertex $\beta$ corresponds to a recursive call in which some cut vertex $w$ splits the tree so that $u$ and $v$ are in different subtrees. Clearly, $w$ is on $\mathcal{P}_T(u,v)$. Since $u$ and $v$ are both required vertices and $w$ is a cut vertex, the edges $(w,u)$ and $(w,v)$ are added to the spanner in this recursive call. Hence, we have found a $T$-monotone 2-hop path between $u$ and $v$ in $G_T$. 

When $k=3$, the set of cut vertices at each recursion level contains more than one cut vertex. The 1-spanner path between $u$ and $v$ contains two intermediate cut vertices, say $u'$ and $v'$, which are on $\mathcal{P}_T(u,v)$. (Here too, we omit technical details of handling the corner cases.) Let $T'$ be a tree which is passed as an argument to a recursive call in which $u'$ and $v'$ were in $CV_\ell$.
Since $u',v'$ are on $\mathcal{P}_T(u,v)$, tree $T_u \in T' \setminus CV_\ell$ containing $u$ and $T_v \in T' \setminus CV_\ell$ containing $v$ are different. At that point, $u$ (resp. $v$) could have many cut vertices in $\border(u)$ (resp., $\border(v)$). To avoid checking every possible pair of cut vertices in $\border(u)$ and $\border(v)$, we construct another tree, called \emph{contracted tree} which facilitate finding the corresponding pair of cut vertices. 

Fix a vertex $\beta \in \Phi$, corresponding to a recursive call of spanner construction where a tree $(T', rt(T'))$ is passed as an argument, and let $CV_\ell$ denote the set of cut vertices chosen for this level. Furthermore, let $T_1,\ldots, T_p = T' \setminus CV_\ell$ be the subtrees obtained by removing vertices in $CV_\ell$ from $T'$. The set of vertices of the contracted tree $\mathcal{T}_\beta$, corresponding to vertex $\beta$ in $\Phi$, consists of $p$ vertices, $t_1,\ldots,t_p$, corresponding to $T_1,\ldots,T_p$, and $|CV_\ell|$ vertices corresponding to cut vertices in $CV_\ell$. For each vertex $t_i \in \mathcal{T}_\beta$, we add an edge between $t_i$ and all the vertices in $\mathcal{T}_\beta$ corresponding to cut vertices in $\border(T_i)$. Intuitively, the augmented tree $\mathcal{T}_\beta$ identifies every subtree $t_i$ with a single vertex and keeps the tree structure of given tree $T$.

We now explain how $\mathcal{T}_\beta$ facilitates finding cut vertices $u'$ and $v'$ corresponding to vertex $\beta \in \Phi$ which are on $\mathcal{P}_T(u,v)$. First, we find the vertex $t_u$ (resp., $t_v$) in $\mathcal{T}_\beta$ corresponding to subtree $T_u$ (resp., $T_v$) which contains $u$ (resp., $v$). (Here too, we consider then most general case, when neither $u$ nor $v$ are in $CV_\ell$.) Cut vertex $u' \in CV_\ell$ is the first vertex on the path from $t_u$ to $t_v$ in $\mathcal{T}_\beta$. In other words, it can be either parent of $t_u$ or the first child on the path from $t_u$ to $t_v$ in $\mathcal{T}_\beta$. In both cases, $u'$ can be found using level ancestor data structure. 
We can similarly find vertex $v'$. This completes the high-level overview of our navigation algorithm.

\subsubsection{Preprocessing}\label{subsec:preprocess-tree}
\paragraph{Algorithm description.}\label{para:preprocess:alg}
We proceed to give a detailed description of the preprocessing algorithm.
It takes as an input a rooted tree $(T, rt(T))$ which induces a tree metric $M_T$. Notice that $T$ can be transformed in linear time into a pruned tree $(T_\pruned, rt(T_\pruned))$ by invoking the procedure $\Call{Prune}{(T, rt(T)), R(T)}$. Also, any 1-spanner for pruned tree $T_\pruned$ provides a 1-spanner for the original tree $T$ with the same diameter. We may henceforth assume that the original tree $T$ is pruned.

Our preprocessing algorithm construct two types of trees --- augmented recursion trees and contracted trees. We preprocess every such tree in linear time so that subsequent \emph{lowest common ancestor} (henceforth, LCA) and \emph{level ancestor} (henceforth, LA) queries can be answered in constant time. For more details on these algorithms, refer to \cite{BF00, BF04}.

We now give details of the procedure \Call{PreprocessTree}{$(T, rt(T)), R(T), k$}. For pseudocode, see \Cref{alg:preprocess}. This procedure takes as parameters a rooted tree $(T,rt(T))$, the set $R(T)$ of required vertices of $T$, and an integer $k\ge 2$, representing the hop-diameter. It outputs the set of edges of \cite{Sol13} spanner for $T$, together with the augmented recursion tree $(\Phi, rt(\Phi))$. In addition, it creates a data structure $\mathcal{D}_T$ which supports subsequent queries for $k$-hop $1$-spanner paths in $G_T$ between any two vertices $u,v\in R(T)$.

Let $n$ denote the number of required vertices in $T$, that is, $n\coloneqq|R(T)|$. When $n \le k+1$, the algorithm invokes $\Call{HandleBaseCase}{(T, rt(T)), R(T),k}$, which we proceed to describe. If $n = k + 1$ and $rt(T)$ has exactly two children, $u$ and $v$, the edge set of spanner, $E$, consists of the edges of $T$, denoted by $E(T)$, together with edge $(u,v)$; otherwise, it consists of $E(T)$ only. Every vertex $v \in V(T)$ initializes its special adjacency list, $v.adj$ containing only edges in $E$.
The recursion tree returned by this step, $\Phi$, consists of a single vertex $\beta$. For each vertex $v$ in $R(T)$ we create its copy and assign it as an \emph{inner vertex} $v'$ of $\beta$. At this stage, we keep a pointer from $v$ to $v'$ and vice-versa. Keeping inner vertices of each vertex in recursion tree will facilitate answering the queries later on.
The procedure returns $E$ as the edge set of the spanner, together with a (single-vertex) tree $\Phi$ rooted at $\beta$.

In what follows, we consider the case $n > k+1$. The set of cut vertices, $CV_\ell$ is determined by calling the aforementioned procedure $\Call{Decompose}{(T, rt(T)), R(T), \ell}$, with parameter $\ell$ set to $\alpha'_{k-2}(n)$. We create a new vertex $\beta$ and make it a root of the recursion tree $\Phi$. This is done via a call to procedure $\Call{NewVertex}{CV_\ell}$, which assigns to $\beta$ as its inner vertices all the cut vertices in $CV_\ell$. The procedure also keeps track of all the relevant pointers.

The algorithm $\Call{PreprocessTree}{(T, rt(T)), R(T), k}$ returns as its output the set of spanner edges, which consists of edges interconnecting the cut vertices, denoted by $E'$, the edges connecting each cut vertex to required vertices in the tree, denoted by $E''$, and the edges $E_i$, for each of the subtrees $T_i$, $i\in[p]$.
If $k=2$, then there is exactly one vertex $v$ in $CV_\ell$ and we keep $E'$ empty.
For $k=3$ the set $E'$ consists of an edge between every two cut vertices, i.e., $E' = CV_\ell \times CV_\ell$. 
For $k\ge 4$, we first create a tree isomorphic to $T$, which has as its required vertices the inner vertices of $\beta$. The edge set $E'$ is then obtained invoking $\Call{PreprocessTree}{(T', rt(T')), CV_\ell, k-2}$. 

We next compute the trees $T_1,\ldots,T_p$ in $T \setminus CV_\ell$ and the set of edges $E''$ consisting of edges between $c$ and every vertex in $\border(c)$ for all $c \in CV_\ell$.
For each tree $T_i$, we recursively preprocess it by calling $\Call{PreprocessTree}{(T_i, rt(T_i)), R(T)\cap V(T_i), k}$. Let $E_i$ be the edge set returned by this procedure and $(\Phi_i, rt(\Phi_i))$ be the recursion tree for each of the subtrees. We make $rt(\Phi_i)$ a child of $\beta$.

Finally, if $k \ge 3$ we construct the contracted tree $\mathcal{T}_\beta$ which corresponds to $\beta$. This is done via a call to procedure $\Call{CreateContracted}{\beta, \{T_i\}_{i\in[p]}, \{\Phi_i\}_{i \in [p]}}$. This procedure creates representative vertex $t_i$ for each tree $T_i$, $i \in [p]$. In addition, for every inner vertex $c$ of $\beta$ it creates a vertex $c'$ corresponding to it. (Recall that $\beta$ has inner vertices corresponding to vertices $CV_\ell$.) At this stage, it creates a pointer from $c$ to $c'$ and vice versa. At this stage, we have constructed a vertex set for $\mathcal{T}_\beta$; it remains to add the edges to it. For every vertex $c'$ corresponding to inner vertex $c$ of $\beta$, the algorithm connects it to every representative $t_i$ which represents at least one vertex in $\border(c)$. Finally, the root $rt(\mathcal{T}_\beta)$ is the vertex corresponding to the vertex of the lowest level in $T$. The procedure returns $(\mathcal{T}_\beta, rt(\mathcal{T}_\beta))$.

The algorithm returns the set of edges $E' \cup E'' \cup \bigcup_{i\in [p]} E_i$ together with the recursion tree $(\Phi, rt(\Phi))$. The data structure $\mathcal{D}_T$ consists of all the vertices of $T$, all the vertices in every recursion tree, and all the vertices in every contracted tree, together with the data assigned to them. 

\begin{algorithm}
\begin{algorithmic}[1]
\Procedure{PreprocessTree}{$(T, rt(T)), R(T),k$} \Comment{The main algorithm.}
\State\Call{Prune}{$(T,rt(T)), R(T)}$ 
\IfThen {$n \le k+1$}{\Return\Call{HandleBaseCase}{$(T,rt(T)), R(T), k$}} \Comment{$n = |R(T)|$}
\State $CV_\ell \gets \Call{Decompose}{(T, rt(T)), R(T), \ell}$\Comment{$\ell \gets \alpha'_{k-2}(n)$}
\State create $\Phi$ consisting of a single vertex $\beta \gets \Call{NewVertex}{CV_\ell}$
\If {$k=3$} \Comment{When $k=2$, $E'$ is empty.}
\State $E' \gets CV_\ell \times CV_\ell$
\ElsIf {$k\ge 4$}
\State make $T'$, a copy of $T$, with inner vertices of $\beta$ as required vertices\label{line:preprocess:rec-k-start}
\State $(E', (\Phi', rt(\Phi'))) \gets \Call{PreprocessTree}{(T', rt(T')), CV_\ell, k-2}$\label{line:preprocess:rec-k-end}
\EndIf
\State $\{T_1, T_2,\ldots,T_p\}\gets T\setminus CV_\ell$\Comment{root $rt(T_i)$ of $T_i$ is the vertex of the lowest level in $T_i$}
\State $E'' \gets \cup_{u \in CV_\ell} \{u\} \times \border(u)$ 
\For {$i \in [p]$}
\State $(E_i, (\Phi_i, rt(\Phi_i)) \gets \Call{PreprocesTree}{(T_i, rt(T_i)), R(T)\cap V(T_i), k}$\label{line:preprocess:recN}
\State make $rt(\Phi_i)$ child of $\beta$
\EndFor
\IfThen{$k\ge 3$}{$(\mathcal{T}_\beta, rt(\mathcal{T}_\beta)) \gets \Call{CreateContracted}{\beta, \{T_i\}_{i\in[p]}, \{\Phi_i\}_{i\in[p]}}$}
\State \Return $(E' \cup E''\cup \bigcup_{i \in [p]} E_i, (\Phi, \beta))$
\EndProcedure

\Procedure{HandleBaseCase}{$(T, rt(T)), R(T), k$} \Comment{Base case when $n \le k+1$.}
\State $E \gets E(T)$
\IfThen {$n=k+1$ and $rt(T)$ has exactly two children, $u$ and $v$,}{$E \gets E \cup (u,v)$}
\State create $\Phi$ consisting of a single vertex $\beta \gets \Call{NewVertex}{R(T)}$
\State for each $v \in T$, create $v.adj$ based on edges in $E$\label{line:preprocess:adj}
\State\Return $(E, (\Phi,\beta))$
\EndProcedure

\Procedure{NewVertex}{$U$} \Comment{Creates a new vertex for $\Phi$.}
\State create a new vertex $\beta$
\ForAll {$v \in U$}
\State create a copy $v'$ of $v$ and make it inner vertex of $\beta$ 
\State $v.ptr(\Phi) \gets v'$, $v'.ptr(T)\gets v.ptr(T)$, $v'.h \gets \beta$ \label{line:preprocess:pointersNew} \Comment{If $|U|=1$, let $\beta.ptr(T) \gets v.ptr(T)$}
\EndFor
\State \Return $\beta$
\EndProcedure

\Procedure{CreateContracted}{$\beta, \{T_i\}_{i\in[p]}, \{\Phi_i\}_{i\in[p]}$}\Comment{Creates a contracted tree for $\beta \in \Phi$.}
\State $E(\mathcal{T}_\beta) \gets \emptyset$, $V(\mathcal{T}_\beta) \gets \emptyset$
\ForAll {$i \in [p]$} 
\State create a new vertex $t_i$ corresponding to $T_i$, add it to $V(\mathcal{T}_\beta)$
\State $rt(\Phi_i).ptr(\mathcal{T}) \gets t_i$\label{line:preprocess:pointers:calT1}
\EndFor
\ForAll {inner vertices $c$ of $\beta$}
\State create a new vertex $c'$ corresponding to $c$, add it to $V(\mathcal{T}_\beta)$
\State $c.ptr(\mathcal{T})\gets c'$, $c'.ptr(\Phi) \gets c$\label{line:preprocess:pointers:calT2}
\State add to $E(\mathcal{T}_\beta)$ an edge between $c'$ and every $t_i$ such that $\border(c)\cup T_i \neq \emptyset$
\EndFor
\State \Return$(\mathcal{T}_\beta, rt(\mathcal{T}_\beta))$\Comment{$rt(\mathcal{T}_\beta) \in V(\mathcal{T}_\beta)$ corresponds to vertex of the lowest level in $T$.}
\EndProcedure
\end{algorithmic}
\caption{Constructs spanner, together with necessary data required for efficient navigation. Each constructed tree is preprocessed for answering LCA and LA queries.
}\label{alg:preprocess}
\end{algorithm}
An example of the preprocessing algorithm is given in \Cref{fig:recursionTree}. Tree $T$ is depicted on the left. 
Its number of (required) vertices is $n=48$, and it is split into five subtrees $T_1,\dots,T_5$ using the set of four cut vertices, marked green inside of the dotted area. The size of each subtree is at most $\alpha'_{k-2}(n)=\alpha'_2(48)=10$. Tree $T_1$ has size $4 \le k+1$ and it corresponds to base case of the spanner construction. Trees $T_2,\dots,T_5$ have size 10 and are recursively split into subtrees of size at most $\alpha'_2(10)=6$.
Finally, one of the subtrees of $T_3$ (the subtree on the bottom) has size $6$ and gets split using a single cut vertex into two subtrees of size 2 and 3.
The four cut vertices used in the first level of recursion (inside of the dotted region) are interconnected using the construction for $k=2$. Before the spanner construction for $k=2$ is invoked, the algorithm makes the cut vertices required and all the other vertices Steiner vertices and prunes the tree. The pruned tree (as depicted inside of the dotted region) has as its root a Steiner vertex and has four vertices corresponding to the cut vertices; its edges are represented by dashed lines. Since its size is greater than $(k-2)+1=3$, it gets split using the cut vertex (which is the Steiner vertex) into two subtrees, each of size 2.

The recursion tree $\Phi$ corresponding to the spanner construction is depicted on the right towards the bottom. Each non-leaf vertex of $\Phi$ has a contracted tree associated to it. The root of $\Phi_T$, denoted by $\beta$ has a contracted tree $\mathcal{T}_\beta$ associated to it. Its vertices are the cut vertices of $T$ and the vertices $t_1$ to $t_5$ corresponding to subtrees $T_1$ to $T_5$. In the image, the root of $\Phi$ points (via an arrow with a dotted line) to another recursion tree with two vertices, corresponding to the recursive construction for $k=2$.

\begin{figure}
\centering
\input{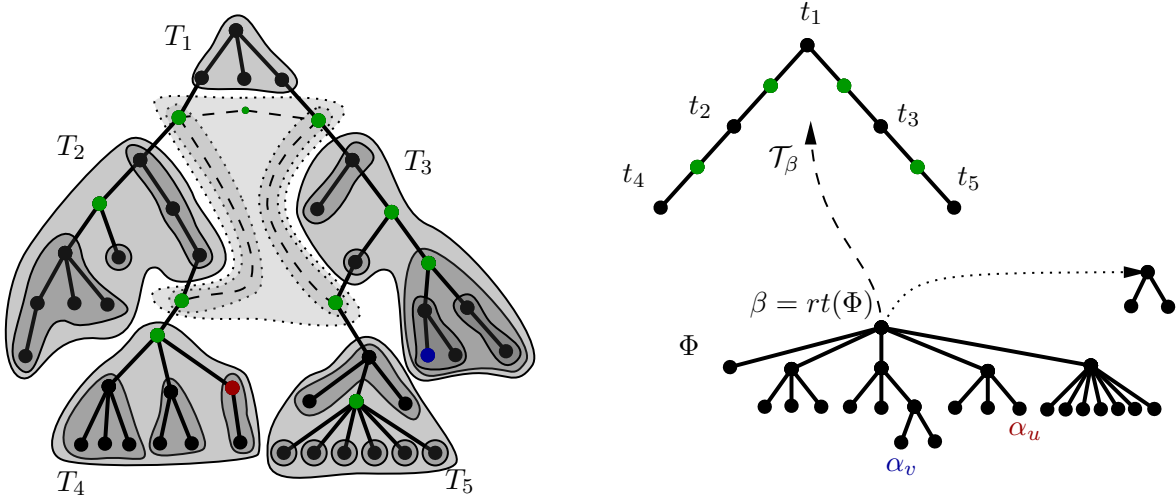}
\caption{Result of the preprocessing algorithm.}
\label{fig:recursionTree}
\end{figure}
\paragraph{Algorithm guarantees.}
We shall use the following lemma from \cite{Sol13} which bounds the number of cut vertices returned by $\Call{Decompose}{(T, rt(T)), R(T), \ell}$.

\begin{lemma}[\cite{Sol13}]\label{lem:CVell-size} Let $n$ be the number of required vertices of the tree, and $\ell = \alpha'_{k-2}(n)$. Then, the size of $CV_\ell$ returned by $\Call{Decompose}{(T, rt(T)), R(T), \ell}$ satisfies: $|CV_{\ell}| = 1$ if  $k=2$, $|CV_{\ell}| \le \sqrt{n}$ if  $k = 3$, and  $|CV_{\ell}| \leq \lfloor \frac{|V(T)|}{\ell+1} \rfloor$ if $k\geq 4$. 
\end{lemma}

In the following lemma, we bound the preprocessing time of the algorithm in \Cref{alg:preprocess}.

\begin{lemma}\label{lm:preporcessingtime} Let $T$ be a tree with  required size $|R(T)| = n$ and $|V(T)| = O(n)$. Algorithm $\Call{PreprocessTree}{(T, rt(T)), R(T),k}$ in time $O(n\alpha_k(n))$ outputs a $k$-hop $1$-spanner $G_T$ for $R(T)$ with $O(n\alpha_k(n))$ edges and the corresponding navigation data structure $\mathcal{D}_T$.
\end{lemma}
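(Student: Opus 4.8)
The plan is to split the statement into a \emph{correctness} part --- that $G_T$ is a $k$-hop $1$-spanner for $R(T)$ --- and a \emph{complexity} part --- that the running time, $|E(G_T)|$, and $|\mathcal{D}_T|$ are all $O(n\alpha_k(n))$ --- and to dispatch the first part by reduction to the analysis of \cite{DBLP:journals/talg/Solomon13}. Concretely, I would first observe that the spanner edges added by \textsc{TreeNavigator} (line~\ref{line:TreeeNav-edgeSubSpan}, together with those of \textsc{HandleBaseCase}) are exactly the edge set of the construction of~\cite{DBLP:journals/talg/Solomon13}: the only things \textsc{TreeNavigator} adds on top are the nodes, pointers, and auxiliary $\lca/\la$ structures that make up $\mathcal{D}_T$, and these carry no spanner edges. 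Hence stretch $1$ and hop-diameter $k$ are inherited verbatim from~\cite{DBLP:journals/talg/Solomon13}, and (as noted after \Cref{thm:treeNavigate}) each such $1$-spanner path is a subpath of the tree path, so it is a shortest path in $M_T$.

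For the complexity part, I would set up one recurrence covering all three cases. Let $f_k(n)$ denote the running time of \textsc{TreeNavigator}$(T,rt,R(T),k)$ on an instance with $|R(T)|=n$ and $|V(T)|=O(n)$. Each of \textsc{Prune} and \textsc{Decompose} costs $O(n)$; forming $\{T_1,\dots,T_p\}$, attaching the $\Phi_i$ to the new node, building $E'$, and computing the sets $\textsc{InBorder}(\cdot)$ all cost $O(n)$; and \textsc{AugmentCompositeNode} builds the contracted tree $T_\beta$ (one node per $T_i$ plus the cut vertices, so $p+|CV_\ell|=O(n)$ nodes) together with its $\la/\lca$ structures in $O(n)$ time by the linear-time constructions of~\cite{BF04,BF00}. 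Writing $\ell=\alpha'_{k-2}(n)$ as in line~\ref{line:TreeNav-ell} and $n_i=|R(T)\cap V(T_i)|$, this gives
\[
f_k(n)\;\le\;\sum_{i=1}^{p} f_k(n_i)\;+\;[\,k\ge 4\,]\cdot f_{k-2}(m)\;+\;O(n),
\]
where $\sum_i n_i\le n$, every $n_i\le\ell$, and by \Cref{lm:CVell-size} $m=|CV_\ell|\le\lfloor|V(T)|/(\ell+1)\rfloor=O(n/\ell)$; moreover \textsc{Prune} ensures the tree fed to the $(k-2)$-call has $O(m)$ vertices, so that call satisfies its own hypothesis. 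The identical recurrence bounds $|E(G_T)|$ and $|\mathcal{D}_T|$ (in particular $|\Phi_T|$), since each call contributes $O(n)$ edges, $O(n)$ tree nodes, and $O(n)$ extra bits of auxiliary structure; and $\lca_{\Phi_T},\la_{\Phi_T}$ are built once, at the top, in time linear in $|\Phi_T|$.

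It then remains to solve this recurrence, where I would follow the Ackermann-type analysis of~\cite{DBLP:journals/talg/Solomon13}. For $k=2$ we have $\ell=\alpha'_0(n)=\lceil n/2\rceil$, so the $k$-recursion has depth $O(\log n)$ and, since the subproblem sizes sum to $\le n$ at every level, $f_2(n)=O(n\log n)=O(n\alpha_2(n))$; for $k=3$, $\ell=\alpha'_1(n)=\lceil\sqrt n\rceil$, the $CV_\ell\times CV_\ell$ edges number $O(n)$, the depth is $O(\log\log n)$, and $f_3(n)=O(n\log\log n)=O(n\alpha_3(n))$. For $k\ge 4$ I would induct on $k$: by the inductive hypothesis $f_{k-2}(m)=O(m\,\alpha'_{k-2}(m))=O((n/\ell)\cdot\alpha'_{k-2}(n))=O(n)$, so at each level of the $k$-recursion the total cost (including the nested $(k-2)$-calls) is $O(n)$; since one level replaces $n$ by $\alpha'_{k-2}(n)$, the number of levels is governed by the identity $\alpha'_k(n)=2+\alpha'_k(\alpha'_{k-2}(n))$, i.e.\ it is $O(\alpha'_k(n))$. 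This yields $f_k(n)=O(n\,\alpha'_k(n))$, and since $\alpha'_k(n)\le 2\alpha_k(n)+4$ we conclude $f_k(n)=O(n\alpha_k(n))$; the same bound holds for $|E(G_T)|$ and $|\mathcal{D}_T|$.

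I expect the main obstacle to be the bookkeeping for the components of $\mathcal{D}_T$ that have no counterpart in~\cite{DBLP:journals/talg/Solomon13} --- the recursion tree $\Phi_T$, the composite nodes, the contracted trees $T_\beta$, and the $\lca/\la$ structures: one has to verify that each is built in time and space \emph{linear in the size of the current subproblem}, so that it is absorbed into the $O(n)$ additive term of the recurrence, and that summing over the whole recursion telescopes to $O(n\alpha_k(n))$ rather than picking up an extra factor. Two spots warrant care: (a) that \textsc{Prune} leaves $|V(T_{\pruned})|=O(|CV_\ell|)$, so the Case-3 recursion with parameter $k-2$ is applied to an instance meeting the hypothesis $|V|=O(|R|)$; and (b) that the contracted tree attached to each composite node has size $O(|V(T)|)$ for the call that created it (one node per $T_i$ plus the cut vertices), so the total over all composite nodes still telescopes. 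Modulo this, the recurrence and its solution are exactly those of~\cite{DBLP:journals/talg/Solomon13}.
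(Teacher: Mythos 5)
Your proposal is correct and follows essentially the same route as the paper: identify $G_T$ with Solomon's spanner to inherit stretch and hop-diameter, observe that the non-recursive work per call is $O(n)$, and solve the resulting Ackermann-type recurrence case by case (the paper simply defers the $k\ge 4$ induction to Theorem~3.12 of \cite{DBLP:journals/talg/Solomon13}, which you spell out). Your extra bookkeeping for the components of $\mathcal{D}_T$ and the size of $T_{\pruned}$ is sound and, if anything, more careful than the paper's own write-up.
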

\begin{proof}
First, we note that  $G_{T}$ is the same as the $1$-spanner in the construction of Solomon \cite{Sol13}, which was shown to have $O(n\alpha_k(n))$ edges. 	
	
Let $f(n,k)$ be the running time of \Call{PreprocessTree}{$(T, rt(T)), R(T),k$}. Observe that, excluding the recursive calls in lines \ref{line:preprocess:recN} and \ref{line:preprocess:rec-k-end}, the running time is $O(n)$. When $ k =2$, we have that $CV_\ell = 1$ and $|R(T_i)| \leq \alpha'_0(n) = \lceil n/2 \rceil$. Thus, $f(n,2) = \sum_{i\in [p]}f(|R(T_i)|,2) + O(n)$  which resolves to $f(n,2) = O(n\log n) = O(n\alpha_2(n))$. When $k = 3$, we have that $|R(T_i)| \leq \alpha'_1(n)= \lceil \sqrt{n} \rceil$, and hence, $f(n,3) = \sum_{i\in [p]}f(|R(T_i)|,3) + O(n)$  which resolves to $O(n\log\log(n)) = O(n\alpha_3(n))$. When $k\geq 4$, recall that $\ell = \alpha'_{k-2}(n)$, $|R(T_i)| \leq \ell$ and $|CV_{\ell}| \leq \lfloor \frac{n}{\ell+1} \rfloor$ by \Cref{lem:CVell-size}. Thus, we have:
	 	\begin{equation}\label{eq:time-fnk}
		f(n,k) = \sum_{i\in [p]}f(|R(T_i)|,k) + f\left(\frac{n}{\alpha'_{k-2}(n)}, k-2\right) +  O(n)
		\end{equation}
 which resolves to $f(n,k) = O(n\alpha_k(n))$; see Theorem 3.12 in \cite{Sol13} for a detailed inductive proof.
\end{proof}

We next list several properties of the preprocessing algorithm, which follow from the algorithm description. These properties are used in \Cref{subsec:query}, where the query algorithm is presented. Recall that we use $\mathcal{D}_T$ to denote the data structure which consists of all the vertices of $T$, all the vertices in every recursion tree, and all the vertices in every contracted tree, together with the data assigned to them. See \Cref{tab:navData} for the summary of data stored in $\mathcal{D}_T$.

\begin{property}\label{property:lca-la}
Every tree constructed by the algorithm is preprocessed for answering LCA and LA queries in constant time. 
\end{property}

\begin{property}\label{property:T-pointer}
For every vertex $x \in \mathcal{D}_T$, pointer $x.ptr(T)$ is either $\emptyset$ or it points to the vertex corresponding to $x$ in $T$.  In particular, it is defined in the following cases: \emph{(i)} $x \in T$, \emph{(ii)} $x$ is a non-leaf vertex in some augmented recursion tree for construction with $k=2$, \emph{(iii)} $x$ is an inner vertex in some augmented recursion tree.
\end{property}

\begin{property}\label{property:phi-pointer}
For every vertex $x \in \mathcal{D}_T$, pointer $x.ptr(\Phi)$ is either $\emptyset$ or it points to its corresponding inner vertex in some recursion tree. In particular, it is defined in the following cases: \emph{(i)} $x \in T$, \emph{(ii)} $x$ corresponds to a cut vertex in some contracted tree, \emph{(iii)} $x$ is an inner vertex of a non-leaf in some augmented recursion tree for construction with $k\ge 4$.
\end{property}

\begin{property}\label{property:calT-pointer}
For every vertex $x \in \mathcal{D}_T$, pointer $x.ptr(\mathcal{T})$ is either $\emptyset$ or it points to its corresponding vertex in some augmented recursion tree. In particular, it is defined in the following cases: \emph{(i)} $x$ is an inner vertex of a non-leaf in some augmented recursion tree for construction with $k\ge 3$, \emph{(ii)} $x$ is a non-root in some recursion tree for construction with $k \ge 3$.
\end{property}

\begin{property}\label{property:h-pointer}
For every vertex $x \in \mathcal{D}_T$, pointer $x.h$ is either $\emptyset$ or it points to its corresponding vertex in some augmented recursion tree. In particular, it is defined if $x$ is an inner vertex of a vertex in some recursion tree.
\end{property}

\begin{property}\label{property:base-case}
Whenever a vertex $x$ is considered in procedure $\Call{HandleBaseCase}$ with parameter $k$, its entry $x.adj$ contains only adjacency list in the subgraph of $G_T$ induced by vertices considered in the same base case. Moreover, this subgraph contains $O(k)$ vertices.
\end{property}

\begin{property}\label{property:contracted}
Every contracted tree $\mathcal{T}_\beta$, corresponding to a vertex $\beta$ in some recursion tree $\Phi$ satisfies the following: 
\begin{enumerate}[label=(\emph{\roman*})]
\item It only contains representative vertices and cut vertices.
\item There is no edge between two representative vertices.
\item There is an edge between a cut vertex $c$ and a representative vertex $t_i$ iff $c \in \border(T_i)$.
\end{enumerate}
\end{property}

\begin{observation}\label{lem:depth}
Given a tree $T$ of required size $n$ and an integer $k$, the depth of augmented recursion tree corresponding to $T$ is $O(\alpha_k(n))$.
\end{observation}
\begin{proof}
The depth of augmented recursion tree satisfies recurrence $D(n,k) = D(\alpha_{k-2}(n), k) + 1$, with a base case $D(n, k) = 1$, whenever $n \le k+1$. This recurrence has solution $D(n,k) = O(\alpha_{k}(n))$.
\end{proof}

\subsubsection{Query algorithm} \label{subsec:query}

\begin{table}\label{table:data}
\begin{center}
\scalebox{0.9}{
\begin{tabular}{cccccc}
\hline
\multicolumn{1}{|c|}{\multirow{2}{*}{\textbf{field}}} & \multicolumn{1}{c|}{\multirow{2}{*}{\textbf{meaning}}}                           & \multicolumn{4}{c|}{\textbf{defined for vertices in}}                                                                                                                               \\ \cline{3-6} 
\multicolumn{1}{|c|}{}                                & \multicolumn{1}{c|}{}                                                            & \multicolumn{1}{c|}{ $T$} & \multicolumn{1}{c|}{$\Phi$, inner} & \multicolumn{1}{c|}{$\Phi$} & \multicolumn{1}{c|}{$\mathcal{T}$} \\ \hline
\multicolumn{1}{|c|}{$ptr(T)$}                        & \multicolumn{1}{c|}{pointer to  vertex in $T$}                      & \multicolumn{1}{c|}{yes}           & \multicolumn{1}{c|}{yes}        & \multicolumn{1}{c|}{if non-leaf \& $k=2$}                    & \multicolumn{1}{c|}{}                        \\ \hline
\multicolumn{1}{|c|}{$ptr(\Phi)$}                     & \multicolumn{1}{c|}{pointer to  inner vertex in $\Phi$}     & \multicolumn{1}{c|}{yes}           & \multicolumn{1}{c|}{if non-leaf host \& $k\ge 4$}                 & \multicolumn{1}{c|}{}                       & \multicolumn{1}{c|}{if cut vertex}                        \\ \hline
\multicolumn{1}{|c|}{$ptr(\mathcal{T})$}              & \multicolumn{1}{c|}{pointer to  vertex in $\mathcal{T}$}           & \multicolumn{1}{c|}{}              & \multicolumn{1}{c|}{if non-leaf host \& $k\ge 3$}                 & \multicolumn{1}{c|}{if non-root \& $k \ge 3$}                       & \multicolumn{1}{c|}{}                        \\ \hline
\multicolumn{1}{|c|}{$h$}                             & \multicolumn{1}{c|}{pointer to home vertex in $\Phi$}                & \multicolumn{1}{c|}{}              & \multicolumn{1}{c|}{yes}                 & \multicolumn{1}{c|}{}                       & \multicolumn{1}{c|}{}                        \\ \hline
\multicolumn{1}{|c|}{$adj$}                           & \multicolumn{1}{c|}{adjacency table} & \multicolumn{1}{c|}{yes}              & \multicolumn{1}{c|}{yes}                 & \multicolumn{1}{c|}{}                       & \multicolumn{1}{c|}{}                        \\ \hline
\multicolumn{1}{|c|}{$level$}                         & \multicolumn{1}{c|}{level in the tree}                                           & \multicolumn{1}{c|}{yes}           & \multicolumn{1}{c|}{}              & \multicolumn{1}{c|}{yes}                       & \multicolumn{1}{c|}{yes}                     \\ \hline
\multicolumn{1}{l}{}                                  & \multicolumn{1}{l}{}                                                             & \multicolumn{1}{l}{}               & \multicolumn{1}{l}{}                  & \multicolumn{1}{l}{}                        & \multicolumn{1}{l}{}                        
\end{tabular}
}
\end{center}

\caption{Summary of data stored with every vertex in $x \in \mathcal{D}_T$. Undefined entries take value $\emptyset$. Note that the preprocessing algorithm constructs more than one augmented recursion tree when $k\ge 4$ and more than one contracted tree when $k\ge 3$. }\label{tab:navData}
\end{table}

The algorithm which finds a $k$-hop path between $u$ and $v$ in 1-spanner $G_T$ of tree $T$ is presented in \Cref{alg:query}. It takes two vertices $u$ and $v$ and a parameter $k$, representing the hop-diameter of $G_T$.

\paragraph{Algorithm description.}
We proceed to give details of the query algorithm. It takes as an input two vertices $u$ and $v$ and an integer parameter $k \ge 2$, representing the hop-diameter. Let $\Phi$ denote the recursion tree corresponding to spanner construction which considered $u$ and $v$ with parameter $k$.
The algorithm first checks whether $u$ and $v$ were considered in the same base case corresponding to the call to $\Call{HandleBaseCase}$ during the construction of $G_T$. This check is performed in \cref{line:query:check-base}. We check if $u$ and $v$ point to the same leaf in $\Phi$. The inner vertex corresponding to $u$ (resp., $v$) in $\Phi$ obtained via $u.ptr(\Phi)$ (resp., $v.ptr(\Phi)$). We use $u.ptr(\Phi).h$ (resp., $v.ptr(\Phi).h$) to obtain the actual vertex in $\Phi$, which contains $u.ptr(\Phi)$ (resp., $v.ptr(\Phi)$) as its inner vertices. If $u.ptr(\Phi).h$ is equal to $v.ptr(\Phi).h$, the algorithm returns the path found by BFS on the subgraph of spanner $G_T$ induced on all the vertices corresponding to the same base case (\cref{line:query:bfs}). This BFS uses adjacency list $u.adj$ stored with vertex $u$, which contains only the edges of the spanner corresponding to this base case. In other words, the algorithm will only visit the subgraph of $G_T$ induced on the vertices corresponding to the same base case as $u$ and $v$.

When $u$ and $v$ do not correspond to the same vertex in $\Phi$, the algorithm finds LCA in $\Phi$ of $u.ptr(\Phi).h$ and $v.ptr(\Phi).h$, denoted by $\beta$ (\cref{line:query:lca-phi}). 
If $k=2$, then, by \Cref{property:T-pointer}, $\beta$ corresponds to a single vertex in $T$; its corresponding vertex in $T$ is $\beta.ptr(T)$. The algorithm returns path consisting of at most three vertices in $T$, namely $\{u.ptr(T), \beta.ptr(T), v.ptr(T)\}$. We use braces to denote that consecutive duplicates are removed from it. For example, when $u=\beta$, then $u.ptr(T)=\beta.ptr(T)$ and the algorithm returns two vertices: $\{u.ptr(T), v.ptr(T)\}$.

When $k\ge 3$, the algorithm proceeds to find cut vertices corresponding to $u$ and $v$. 
For that purpose, it considers contracted tree $\mathcal{T}_\beta$, corresponding to $\beta$. First of all, it locates vertices corresponding to $u$ (resp., $v$) in $\mathcal{T}_\beta$, via a call to $\Call{LocateContracted}{u, \beta}$.
If $u$ points to $\beta$ in $\Phi$, it means that $u$ is a cut vertex at the required level; all we need to do is to find its corresponding vertex in $\mathcal{T}_\beta$, which is obtained via $u.ptr(\Phi).ptr(\mathcal{T}_\beta)$.  If $u$ is not a cut vertex at the required level, we use level ancestor data structure to find child of $\beta$ on the path to vertex corresponding to $u$. By \Cref{property:calT-pointer}, this child corresponds to a unique vertex $u'$ in $\mathcal{T}_\beta$, which is a representative of connected component containing $u$.  Vertex $v'$ corresponding to $v$ is found analogously.

Next, we would like to find the first cut vertex $x$ on the path from $u'$ to $v'$ (resp., the first cut vertex $y$ on the path from $v'$ to $u'$) in the contracted tree $\mathcal{T}_\beta$. First the algorithm finds lowest common ancestor $c$ of $u'$ and $v'$ (cf.~\cref{line:query:lca-tb}). Then, it invokes $\Call{FindCut}{u,u',v',\beta,c}$, which we explain next. If $u'$ already corresponds to a cut vertex, we assign $u'$ to $x$. If that is not the case, when $v'$ is a descendant of $u'$, we let $x$ be the child of $u'$ on the path to $v'$ and otherwise we let it be the parent of $u'$; in both cases, we can find $x$ using level ancestor data structure on $\mathcal{T}_\beta$. Vertex $y$ is found similarly, using a call to $\Call{FindCut}{v,v',u',\beta,c}$.
When $k=3$ the algorithm reports vertices corresponding to $u$, $x$, $y$, $v$ in $T$ (\cref{line:query:k3}). Otherwise, it proceeds recursively to find a $(k-2)$-hop path between inner vertices of $\beta$ in $\Phi$ corresponding to $x$ and $y$ (\cref{line:query:recurse}). 

We refer reader to \Cref{fig:recursionTree} for an illustration of a query algorithm.
Upon a query to navigate between the red vertex $u \in T$ and the blue vertex $v \in T$, the algorithm uses $u.ptr(\Phi)$ and $v.ptr(\Phi)$ to find the corresponding vertices in $\Phi_T$, denoted by $\alpha_u$ and $\alpha_v$. Since $\alpha_u \neq \alpha_v$, this means that $u$ and $v$ were not considered together in a base case (corresponding to invocation of $\Call{HandleBaseCase}$). Next, the algorithm finds $LCA(\alpha_u, \alpha_v)$, which is the root $rt(\Phi)$, denoted by $\beta$ in the picture. Using the level ancestor data structure, we find children $\alpha_1,\alpha_2$ of $\beta$, on the path to $\alpha_u$ and $\alpha_v$, respectively. Vertex $\alpha_1$ points to $t_4$ in the contracted tree $\mathcal{T}_\beta$; similarly, vertex $\alpha_2$ points to $t_3$. The LCA of $t_3$ and $t_4$ is the root of $\mathcal{T}_\beta$ and the cut vertices corresponding to $t_3$ and $t_4$ are their parents. Finally, we recursively find the path (of at most 2 hops) between the chosen cut vertices using the augmented recursion tree for construction with $k-2$ (pointed to by a dotted arrow).

\begin{algorithm}
\begin{algorithmic}[1]
\Procedure{FindPath}{$u, v, k$} \Comment{The query algorithm.}
\If{$u.ptr(\Phi).h = v.ptr(\Phi).h$ and $u.ptr(\Phi).h$ is a leaf of $\Phi$}\label{line:query:check-base} 
\State\Return\Call{BFS}{$u,v$}\label{line:query:bfs} \Comment{BFS on $G_T$ induced on $\{w \in V(T) \mid w.ptr(\Phi).h = u.ptr(\Phi).h\}$.} 
\EndIf
\State $\beta \gets \textsc{LCA}(u.ptr(\Phi).h, v.ptr(\Phi).h)$\label{line:query:lca-phi}
\If{$k=2$}
\State\Return $\{u.ptr(T), \beta.ptr(T), v.ptr(T)\}$\label{line:query:k2}
\EndIf
\State$u' \gets \Call{LocateContracted}{u, \beta}$
\State$v' \gets \Call{LocateContracted}{v, \beta}$
\State $c \gets \textsc{LCA}(u', v')$\label{line:query:lca-tb}
\State $x \gets \Call{FindCut}{u, u', v', \beta, c}$
\State $y \gets \Call{FindCut}{v, v', u', \beta, c}$
\If {$k=3$}
\State\Return $\{u.ptr(T), x.ptr(\Phi).ptr(T), y.ptr(\Phi).ptr(T), v.ptr(T)\}$\label{line:query:k3}
\Else
\State\Return $\{u.ptr(T), \Call{FindPath}{x.ptr(\Phi), y.ptr(\Phi), k-2}, v.ptr(T)\}$\label{line:query:recurse}
\EndIf
\EndProcedure
\State
\Procedure{LocateContracted}{$u, \beta$} \Comment{Locates $u'$ corresponding to $u$ in $\mathcal{T}_\beta$.}
\If {$u.ptr(\Phi).h = \beta$}
\State\Return $u.ptr(\Phi).ptr(\mathcal{T}_\beta)$
\Else
\State \Return$\textsc{LA}(u.ptr(\Phi).h, \beta.level+1).ptr(\mathcal{T}_\beta)$ 
\EndIf
\EndProcedure
\State
\Procedure{FindCut}{$u, u', v', \beta, c$} \Comment{Finds the first cut vertex $x$ on the path from $u'$ to $v'$.}
\If {$u.ptr(\Phi).h = \beta$}
\State \Return$u'$
\ElsIf {$u' = c$}
\State \Return$\Call{LA}{v', u'.level+1}$
\Else 
\State \Return$\Call{LA}{u', u'.level-1}$
\EndIf
\EndProcedure
\end{algorithmic}
\caption{Query for a $k$-hop path in tree 1-spanner $G_T$ between two vertices $u$ and $v$.}
\label{alg:query}
\end{algorithm}

\paragraph{Algorithm guarantees.} We next argue the correctness of the query algorithm.
\begin{lemma}\label{lem:query:correct}

Given a tree $T$ preprocessed by $\Call{PreprocessTree}{(T, rt(T)), R(T), k}$, and two vertices $u,v \in R(T)$, the algorithm $\Call{FindPath}{u,v,k}$ outputs a 1-spanner path between $u$ and $v$ in $G_T$ consisting of at most $k$ edges.
\end{lemma}
\begin{proof}
We will prove the lemma by structural induction.

The first base case is when the condition in \cref{line:query:check-base} is true and the algorithm uses BFS to report the path (\cref{line:query:bfs}).
By \Cref{property:phi-pointer,property:h-pointer}, we know that $u.ptr(\Phi).h$ and $v.ptr(\Phi).h$ point to vertices corresponding to $u$ and $v$ in $\Phi$. Moreover, since they correspond to a leaf in $\Phi$, this means that $u$ and $v$ were in the same tree $T_{base}$ processed by $\Call{HandleBaseCase}{(T_{base}, rt(T_{base})), R(T_{base}), k}$. By \Cref{property:base-case}, we know that vertex $u$ contains adjacency list $v.adj$ restricted to $G_T$ induced on $V(T_{base})$; same holds for $v$ and every other vertex in $V(T_{base})$.
This ensures that BFS will find the shortest path from $u$ to $v$, which is guaranteed in \cite{Sol13} to be the 1-spanner path of at most $k$ edges. 

The second base case is when $k=2$. The algorithm returns path $\{u.ptr(T),\beta.ptr(T),v.ptr(T)\}$, which, by \Cref{property:T-pointer} correctly map to corresponding vertices in a given tree $T$. We next argue that this path is a valid 1-spanner path in $G_T$. Among the common vertices on the paths from $rt(\Phi)$ to $u.ptr(\Phi).h$ and from $rt(\Phi)$ to $u.ptr(\Phi).h$, vertex $\beta$ has the lowest level. 
After removing $\beta.ptr(T)$ from $T$, vertices $u$ and $v$ are not in the same subtree. In other words, $\beta.ptr(T)$ is on the path $\mathcal{P}_T(u,v)$.
Suppose that $u \neq \beta.ptr(T)$ and $v \neq \beta.ptr(T)$ and let $T_u$ (resp. $T_v$) be the tree containing $u$ (resp., $v$) after $\beta.ptr(T)$ has been removed. By the previous argument, it must be that $T_u \neq T_v$. From the inductive statement, we know that $u.ptr(T)$ (resp., $v.ptr(T)$) is a required vertex in $T_u$ (resp., $T_v$). This means that $G_T$ contains an edge between $(u.ptr(T), \beta.ptr(T))$ and $(v.ptr(T), \beta.ptr(T))$. Hence, the path $(u, \beta.ptr(T), v.ptr(T))$ exists in $G_T$ and is a valid 1-spanner path in $T$. Cases when $u=v$, $u = \beta.ptr(T)$, and $v = \beta.ptr(T)$ are handled similarly.

The third base case is when $k=3$. The algorithm first finds vertices $u'$ and $v'$ corresponding to $u$ and $v$ in the contracted tree $\mathcal{T}_\beta$. The correctness follows by \Cref{property:calT-pointer}. Finding relevant cut vertex for $u'$ is done in procedure $\Call{FindCut}{u, u', v', b, c}$. Correctness follows by \Cref{property:contracted}. Hence, $x$ (resp., $y$) is the closest cut vertex to $u$ (resp. $v$) on $\mathcal{P}_T(u,v)$. Recall that by inductive statement, $u$ and $v$ are required vertices. 
Since $x$ is in $\border(u)$ and similarly $y$ in $\border(v)$, there are edges $(x,u)$ and $(y,v)$ in $G_T$. In addition, since $x$ and $y$ are cut vertices, they are connected via an edge. This concludes the analysis of base cases.

We take $k\ge 4$ for an inductive step. From the analysis of the base case $k=3$, we know that cut vertices $x$ and $y$ in $\mathcal{T}_\beta$ are correctly computed, and $G_T$ contains edges $(u, x)$ and $(y,v)$. By \Cref{property:phi-pointer} the pointers $x.ptr(\Phi), y.ptr(\Phi)$ point to inner vertices of $\beta$ in $\Phi$. Since $x.ptr(\Phi)$ and $y.ptr(\Phi)$ are required vertices for $\Phi'$ corresponding to construction with hop-diameter $k-2$ (cf. \crefrange{line:preprocess:rec-k-start}{line:preprocess:rec-k-end} in $\Call{PreprocessTree}$), by the inductive hypothesis the recursive call in \cref{line:query:recurse} returns a 1-spanner path between $x.ptr(\Phi)$ and $y.ptr(\Phi)$ of at most $k-2$ hops. Recalling that we have at most two more hops, i.e., if $u \notin CV_\ell$, $(u.ptr(T),x.ptr(\Phi).ptr(T))$ and if $v \notin CV_\ell$, $(y.ptr(\Phi).ptr(T), v.ptr(T))$, the statement follows.
\end{proof}

The following lemma states the running time of our query algorithm.
\begin{lemma}\label{lem:query:time}
Algorithm $\Call{FindPath}{u,v,k}$ runs in time $O(k)$.
\end{lemma}
\begin{proof}
Procedures $\Call{LocateContracted}{}$ and $\Call{FindCut}{}$ access pointer values and perform LCA and LA queries; the number of such operations is constant, allowing us to conclude that both procedures run in $O(1)$ time. The only nontrivial operation in $\Call{FindPath}{}$ is performing BFS in \cref{line:query:bfs}. By \Cref{property:base-case}, this BFS visits only the subgraph of $G_T$ induced on the vertices which correspond to the same base case as $u$ and $v$. The number of vertices in this subgraph is $O(k)$, hence the running time of BFS is also $O(k)$.

In conclusion, the algorithm either performs $O(k)$ operations and does not continue recursively, or it performs constant number of operations and proceeds recursively with parameter $k-2$. This allows us to conclude that the running time of $\Call{FindPath}{}$ is $O(k)$.
\end{proof}

\subsection{Navigating tree covers}\label{subsec:treeCoverNav}
To prove \Cref{thm:oracle}, we rely on tree cover theorems summarized in~\Cref{tab:treeCover}. Let $\zeta$ denote the number of trees in the cover and $\gamma$ the stretch of the cover; let $M_X=(X, \delta_X)$ be the metric space we are working on.
For each of the $\zeta$ trees in the cover, we employ \Cref{thm:treeNavigate} and construct a spanner $G_{T_i}$ and a data structure $\mathcal{D}_{T_i}$. 
For Ramsey tree covers, in the preprocessing step we store a mapping from every point $x$ in the metric space to its ``home'' tree.
Upon a query for a path between $u$ and $v$, for the Ramsey tree covers it is sufficient to use this information to find the corresponding tree in constant time. Otherwise, for each of the $\zeta$ trees, we query $\mathcal{D}_{T_i}$ for the distance between $u$ and $v$ in $T_i$ in $O(1)$ time. (This step takes $O(\zeta)$ time.)
Once the tree with the smallest distance between $u$ and $v$, $T^*$, has been found, we query for the $k$-hop shortest path in $T^*$ between $u$ and $v$ in $O(k)$ time using the result from \Cref{thm:treeNavigate}.

\section{Fault tolerance in doubling metrics} \label{ft}

In this section we strengthen the navigation scheme of \Cref{SectionPathOracle} to achieve fault-tolerance in doubling metrics.
We start with required definitions, move on to presenting a new construction of tree covers in doubling metrics, 
and then build on this tree cover to get a fault-tolerant (FT) spanner of bounded hop-diameter. Equipped with such a spanner, obtaining an FT navigation scheme follows along similar lines to the one presented \Cref{SectionPathOracle} for a non-FT spanner.
 
Let $X= (X,\delta_X)$ be an $n$-point metric of doubling dimension $d$. An \emph{$f$-fault-tolerant (FT)} $t$-spanner of $X$ is a $t$-spanner for $X$ such that, for every set $F\subseteq X$ of size at most $f, f \le n-2$, called a \emph{faulty set}, it holds that
 $\delta_{H\setminus F}(x,y)\leq  t \cdot \delta_{X}(x,y)$, for any pair of $x, y \in X \setminus F$.
An $f$-FT spanner $H$ is said to have {\em hop-diameter} $k$ if the hop-diameter of $H\setminus F$ is at most $k$ for any faulty set $F$ of size at most $f$, and thus there is a non-faulty $t$-spanner path in $H$ of at most $k$ hops for any pair of non-faulty points. The main result of this section is a construction of an $f$-FT spanner with bounded hop-diameter for doubling metrics whose size matches the size bound for non-FT spanners (up to the dependency on $f$). Our construction relies on the notion of a \emph{robust tree cover} that we introduce below. This new tree cover notion generalizes the Euclidean ``Dumbbell Tree'' theorem (Theorem 2 in \cite{ADMSS95}). 
In what follows, we shall use $\mathcal{P}_T(u,v)$ to denote the path between leaves $x$ and $y$ in a rooted tree $T$. We denote by $T_v$ a subtree of $T$ rooted at a vertex $v\in T$.

\begin{definition}[Robust Tree Cover]\label{def:robust-TreeCover} A robust $(\gamma,\zeta)$-tree cover $\mathcal{T}$ for a metric $(X,\delta_X)$ is a collection of $\zeta$ trees satisfying:
	\begin{enumerate}
		\item[(1)] For every tree $T\in \mathcal{T}$, there is a 1-to-1 correspondence between points in $X$ and leaves of $T$.
		\item[(2)] For every $x \not= y \in X$, there exists a tree $T\in \mathcal{T}$ such that the path from $x$ to $y$ obtained by replacing each vertex $v$ in $\mathcal{P}_T(x,y)$ with an \emph{arbitrary} leaf point of $T_v$ has weight at most $\gamma\cdot \delta_X(x,y)$. We say that $T$ \emph{covers} $x$ and $y$. 
	\end{enumerate}
\end{definition}

Property (2) in \Cref{def:robust-TreeCover}, which we call \emph{robustness}, implies  that we can obtain an (ordinary) tree cover by replacing any internal vertex of a tree $T\in \mathcal{T}$ with a point associated with an arbitrary leaf in the subtree rooted at that vertex. The robustness is the key in our construction of an $f$-FT spanner with a bounded hop-diameter. In the following theorem, we show that doubling metrics have robust tree covers with a few trees; the proof is deferred to \Cref{subsec:const-TreeCover}.

\begin{theorem}\label{lm:robust-treecover} For any metric $(X,\delta_X)$ of doubling dimension $d$ and any parameter $\eps > 0$, we can construct a robust $(1+\eps,\eps^{-O(d)})$-tree cover $\mathcal{T}$ for $(X,\delta_X)$ in $O_{d,\eps}(n\log{n})$ time. 
\end{theorem}

The $O_{d,\eps}$ notation hides the dependency on $d$ and $\eps$.
The  tree cover theorem by~\cite{BFN19B} for doubling metrics generalizes all but the robustness of the dumbbell tree theorem. By examining the proof closely, we observe that, in the tree cover of~\cite{BFN19B}, each internal vertex of the tree is replaced by a \emph{specific point} chosen from the leaves in the subtree rooted at that vertex; in particular, Claim 8 in \cite{BFN19B} fails if the point is chosen arbitrarily from the leaves.  

In the following, we show how to construct a FT spanner with  a bounded hop-diameter from a robust tree cover.

\subsection{Construction of fault-tolerant spanners with bounded hop-diameter} \label{sec:ft-spanner}

\begin{theorem}
Given an $n$-point metric $(X,\delta_X)$ of doubling dimension $d$, a parameter $\eps > 0$, and integers $1\le f\le n-2$,  and $k \geq 2$, we can construct an $f$-FT spanner with hop-diameter $k$ and $\eps^{-O(d)}nf^2\alpha_k(n)$ edges in  $O_{d,\eps}(n(\log(n) + f^2 \alpha_{k}(n)))$ time.
\end{theorem}
\begin{proof}
Let $\mathcal{T}$ be a robust $(1+\eps,\eps^{-O(d)})$-tree cover constructed as in \Cref{lm:robust-treecover}. For each tree $T\in \mathcal{T}$, we construct a graph $H_T$ and then form an $f$-FT spanner $H$  as
		$H = \cup_{T\in \mathcal{T}} H_T$.

Initially, the vertex set of $H_T$ contains points in $X$, and the edge set of $H_T$ is empty.
 We then construct a $1$-spanner for $T$ with $k$ hops and $O(n\alpha_k(n))$ edges in $O(n\alpha_{k}(n))$ time, denoted by $K_T$, using the algorithm of Solomon~\cite{Sol13}. Note that edges in $T$ are unweighted. For every vertex $v\in T$, we choose a set $R(v)$ of (arbitrary) $f+1$ points associated with leaves of $T_v$; if $T_v$ has strictly less than $f+1$ leaves, $R(v)$ includes all the leaves. For every edge $(u,v) \in K_T$, we add to $H_T$ edges between points in $R(u)$ and $R(v)$ to make a biclique. The weight of each edge is the distance between its endpoints in $X$. This completes the construction of $H_T$ and hence of $H$.
 
 Observe by the construction that $|E(H_T)| = O(f^2 |E(K_T)|) = O(f^2 n \alpha_k(n))$. It then follows that $|E(H)| = |\mathcal{T}| O(f^2 n \alpha_k(n)) = \eps^{-O(d)}nf^2\alpha_k(n)$. Observe also by the construction that the running time to construct $H_T$ is $O(f^2 n\alpha_k(n))$. Thus, the running time to construct $H$ is $O_{d,\eps}(n\log(n))+ O(f^2 n \alpha_{k}(n))$, as claimed.
 
 Finally, we bound the stretch and the hop-diameter of $H$. Let $x\not= y$ be any two non-faulty points in $X$, and $T$ be a tree in $\mathcal{T}$ that covers $x$ and $y$.  Let $Q$ be any $k$-hop $1$-spanner path between $x$ and $y$ in $K_T$. Let $x = v_0, \ldots, v_{k} = y$ be vertices of $Q$.  We claim that for every $i \in [k]$, there exists a non-faulty point in $R(v_i)$.  If $|R(v_i)| = f+1$, then clearly it contains a non-faulty point. Otherwise, $R(v_i)\cap \{x,y\}\not=\emptyset$. This is because  $Q$ is a $1$-spanner path and hence, any vertex in $Q$ is either an ancestor of $x$ or an ancestor of $y$ or both.  
 
We now construct a $k$-hop path $P$ for $Q$ as follows. For every $i \in [k]$, we replace $v_i$ by a non-faulty point $p_i\in R(v_i)$. 
Thus, $P$  is a path in $H_T$ (and hence in $H$) of hop-diameter $k$. Furthermore, by property (2) in \Cref{def:robust-TreeCover} and the fact that $Q$ is a $1$-spanner path, $P$ has stretch $(1+\eps)$, as desired. 
\end{proof}

\subsection{Construction of Robust Tree Covers}\label{subsec:const-TreeCover}

In this section, we prove \Cref{lm:robust-treecover}. Our construction follows the construction of tree covers of Bartal et al.~\cite{BFN19B}. An \emph{$r$-net} of a metric space $(X,\delta_{X})$ is a subset of points $N\subseteq X$ such that (a) for every two different points $x\not= y\in N$, $\delta_{X}(x,y)> r$ and (b) for every point $x\in X$, there exists a point $y\in N$ such that  $\delta_{X}(x,y)\leq r$. We introduce the notion of \emph{pairing} cover for nets (formally defined in \Cref{def:pairing}), which is the key to achieving the robustness of our tree cover. We first review the construction of  Bartal et al.~\cite{BFN19B}, and then describe how the pairing cover can be used to construct a robust tree cover. 

The construction of  Bartal et al.~\cite{BFN19B} can be  divided into two steps.

(Step 1) They consider a hierarchy of nets $N_0\supseteq N_1 \supseteq N_2 \supseteq \ldots$, where $N_i$ is a $2^i$-net of $(X,\delta_X)$.\footnote{We chose indices to start from 0 for the ease of presentation; cf.~\cite{BFN19B}.}
Each net $N_i$ is then \emph{partitioned} into $\sigma = \eps^{-O(d)}$  \emph{well-separated} sets $N_{i1},\ldots, N_{i\sigma}$ in the sense that for every $x\not=y \in N_{it}$, $\delta_X(x,y) = \Omega(2^i/\eps)$ for any $t\in \{1,\ldots,\sigma\}$. 

(Step 2) They construct a collection of $O(\sigma \log(1/\eps))$ trees $\{T_{j,p}\}$ where $j\in \{1,2,\ldots, \sigma\}$ and $p \in \{0,1,\ldots, \log(1/\epsilon)-1\}$. Each tree $T_{j,p}$ is constructed by considering levels $i$ of the net hierarchy such that $i \equiv p \mod \log(1/\eps)$ and marking points as \emph{clustered} along the way. Specifically, for every point $x\in N_{ij}$ that is unclustered, add all unclustered points at distance $O(2^i/\eps)$ from $x$ to the tree rooted at $x$ as the children of $x$; these points are then marked as \emph{clustered}. 

It follows from the construction that every internal node of each tree is associated with a unique point $x \in X$. To achieve the robustness, we modify the construction of  Bartal et al.~\cite{BFN19B} in two ways. In Step 1, we construct a \emph{cover} (instead of a partition) of size $\eps^{-O(d)}$ for $N_i$ that has a \emph{pairing property}: each point $x$ in a set  $C_{ij}$ in the cover has at most one point $y \in C_{ij}$ such that $\delta_X(x,y)\leq 2^i/\eps$; $y$ is said to be \emph{paired} with $x$. (See \Cref{def:pairing} for a formal definition.)  In Step 2, for each point $x \in C_{ij}$, we connect the subtree containing $x$ and all other subtrees containing vertices within distances $O(2^i)$ from $x$, and the subtree containing $y$ where $y$ is paired with $x$ in $C_{ij}$.

We now give the details of the construction of a robust tree cover. In this section, our focus is primarily on describing the algorithms and proving various properties of the cover. The implementation is discussed in \Cref{subsec:implement-RTC}.  We say that a collection of subsets $\mathcal{C}$ of a set $S$ is a \emph{cover} for $S$ if $\cup_{C\in \mathcal{C}} C = S$. 

\begin{definition}[Pairing Cover]\label{def:pairing} A cover $\mathcal{C}_i$ of a $2^i$-net $N_i$ is a pairing cover if:
	\begin{itemize}[noitemsep]
		\item[(1)] For every set $C\in \mathcal{C}_i$ and every $x\in C$, there exists at most one point $y \neq x$ in $C$ such that $\delta_X(x,y)\leq 2^i/\eps$.
		\item[(2)] For every $x\not= y \in N_i$ such that $\delta_X(x,y)\leq 2^i/\eps$, there exists a set $C\in \mathcal{C}_i$ such that both $x,y$ are in $C$.  We say that $x$ and $y$ are paired by $C$.
	\end{itemize}
\end{definition}

Next, we construct a pairing cover for  $N_i$ with a small number of sets. We use the following well-known packing lemma.

\begin{lemma}[Packing Lemma] \label{lm:packing-dd} Let $P$ be a point set in a metric $(X,\delta_X)$ of doubling dimension $d$ such that for every $x\not=  y \in P$, $r < \delta_X(x,y) \leq R$. Then $|P|\leq \left( \frac{4R}{r}\right)^d$. 
\end{lemma}

\paragraph{Step 1: Constructing a pairing cover of $N_i$.~} The construction has two smaller steps. First, we construct a well-separated partition $\mathcal{P}_i$ of $N_i$ following Bartal et al.~\cite{BFN19B}. Then in the second step, we construct a pairing cover $\mathcal{C}_i$ from $\mathcal{P}_i$.

\begin{itemize}
	\item \textbf{Step 1a.~} Initially $\mathcal{P}_i = \emptyset$. We consider each point $x\in N_i$ in turn, and if there exists a set $P \in \mathcal{P}_i$ such that $\delta_X(x,y) >  (3/\eps)2^i$ for every $y\in P$, then we add $x$ to $P$. Otherwise, we add a new set $\{x\}$ to $\mathcal{P}_i$. Let $\sigma_1 = |\mathcal{P}_i|$. 
	\item \textbf{Step 1b.~}  Let $\sigma_2 = \max_{x\in N_i}|\{y \in N_i: \delta_X(x,y)\leq 2^i/\eps\}|$.  For each set $P \in \mathcal{P}_i$, we construct a collection $\mathcal{C}(P) = \{P_1,\ldots,P_{\sigma_2}\}$ of $\sigma_2$ sets as follows. For each $x\in P$, let $\langle y_1,y_2,\ldots, y_{\sigma_2}\rangle$ be a sequence of all points (in arbitrary order)  in $N_i$ that have distances at most $2^i/\eps$ from $x$. (Possibly, there could be strictly less than $\sigma_2$ such points, and in this case, we duplicate some points to get exactly $\sigma_2$ points in the sequence.) We then construct the set $P_j = \cup_{x\in P}\{x,y_j\}$ for each $j \in \{1,2,\ldots,\sigma_2\}$.  That is, $P_j$ contains every point $x$ in $P$ and the $j$-th point in its sequence. Finally, we set:
	$\mathcal{C}_i = \cup_{P\in \mathcal{P}_i}\mathcal{C}(P)$.
\end{itemize}

\begin{figure}[H]
\centering
\input{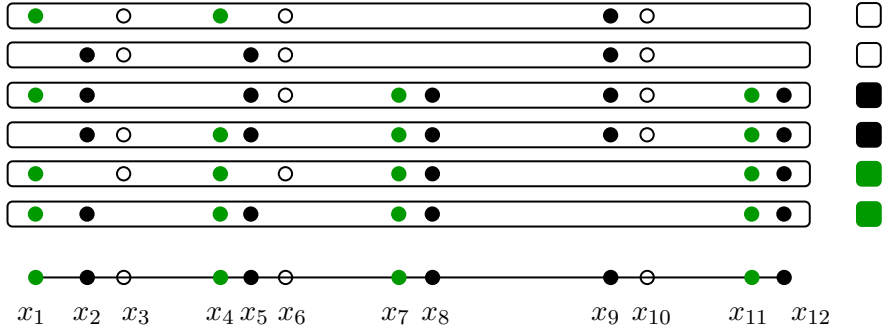}
\caption{Pairing cover of a net.}
\label{fig:pairingCover}
\end{figure}
An example of a pairing cover is presented in \Cref{fig:pairingCover}. Fix level $i=0$ for simplicity. We have a net $N$ of 12 points on a line, $x_1, \ldots, x_{12}$. The partition $\mathcal{P}$ consists of $\sigma_1 = 3$ sets: green, black and white points. For the set of green points, we construct two sets. In the first set, we add $x_1$ and its closest point within distance $1/\eps$, $x_2$, and similarly $\{x_4, x_5\}$, $\{x_7, x_8\}$, and $\{x_{11}, x_{12}\}$. In the second set, for every green point, we add it together with its second-closest point within distance $1/\eps$.

In the following lemma, we show that $\mathcal{C}_i$ is a paring cover of $N_i$.

\begin{lemma}\label{lm:pairing-cover} $\mathcal{C}_i$ is a pairing cover of $N_i$ of size $\eps^{-O(d)}$.
\end{lemma}
\begin{proof}[Proof of \Cref{lm:pairing-cover}]
	First, we bound the size of $\mathcal{C}_i$. Observe by the construction that $|\mathcal{C}_i| = \sigma_1\cdot \sigma_2$. In what follows, we show that $\sigma_1 = \eps^{-O(d)}$ and $\sigma_2 = \eps^{-O(d)}$, implying the bound on $|\mathcal{C}_i|$ as claimed in the lemma. 
	
	Observe that we add a new set $\{x\}$ to $\mathcal{P}_i$ in Step 1a  if every point  in $N_i$ considered before $x$ and  within a distance of  $(3/\eps)2^{i}$ from $x$  belongs to an existing partition of $\mathcal{P}_i$. By the packing lemma (\Cref{lm:packing-dd}), there are  at most $\tau = \eps^{-O(d)}$ such points. Once we make $\tau + 1$ sets, the algorithm will not add any new set. It follows that  $\sigma_1 = \tau + 1 =  \eps^{-O(d)}$. Also by the packing lemma, since $N_i$ is a $2^i$-net,  $\sigma_2 = \eps^{-O(d)}$, as claimed.
	
	Next, we show the pairing property of $\mathcal{C}_i$. Observe by construction that $\mathcal{C}_i$ is a cover of $N_i$. Furthermore, for every $C\in \mathcal{C}_i$, $C = P_j$ for some $P_j \in \mathcal{C}(P)$ constructed from a set $P \in \mathcal{P}_i$, for some $j \in [\sigma_2]$. Recall by the construction in Step 1a that the distance between every two points in $P$ is at least $(3/\eps)2^i$.  It follows that for every $x\in C$, if there exists $y \in C$ such that $y\not= x$ and $\delta_X(x,y)\leq 2^i/\eps$, then either (a) $y = x_j$ where $x_j$ is the $j$-th point in the sequence of $x$ and $x\in P$ or  (b) $x = y_j$ where $y_j$ is the $j$-th point in the sequence of $y$ and $y\in P$.  In either case, there is only one such $y$.  
	
	Finally, consider any pair of points $x\not=y \in N_i$ such that $\delta_X(x,y)\leq 2^i/\eps$. Since $\mathcal{P}_i$ is a partition of $N_i$, there exists a set $P\in \mathcal{P}_i$ such that $x\in P$. Since $\delta_X(x,y)\leq 2^i/\eps$, $y$ must be some point $x_j$ in the sequence of $x$ for some $j\in \{1,2,\ldots,\sigma\}$. Thus, $\{x,y\}\in P_j$. That is, there is a set in $\mathcal{C}_i$ that contains both $x$ and $y$.
\end{proof}

\paragraph{Step 2: Constructing a robust tree cover $\mathcal{T}$.~} Let $N_0\supseteq N_1\supseteq \ldots$  be the hierarchy of nets of $(X,\delta_X)$ where $N_i$ is a $2^i$-net of $X$  and the last net in the sequence contains a single point. By scaling, we assume that the minimum distance in $X$ is larger than $1/(4\eps)$. For each net $N_i$, we construct a pairing cover $\mathcal{C}_i$, and form a sequence $\langle C_1,C_2,\ldots, C_{\sigma_3} \rangle$ of sets in $\mathcal{C}_i$; here $\sigma_3$ is the size of $\mathcal{C}_i$, which is $\eps^{-O(d)}$ by \Cref{lm:pairing-cover}. For each $j \in \{1,2,\ldots, \sigma_3\}$  and each $p \in \{0,1,\ldots, \lceil \log(1/\eps)\rceil - 1\}$, we construct a tree $T_{j,p}$ and form the cover:
$\mathcal{T} = \{T_{j,p}:  j \in \{1,2,\ldots, \sigma_3\} \wedge  p \in \{0,1,\ldots, \lceil \log(1/\eps)\rceil - 1\}\}$.
 Clearly the size of the cover is $O(\sigma_3 \log(1/\eps)) = \eps^{-O(d)}$. 

We now focus on constructing $T_{j,p}$; the construction is  in a bottom-up manner as follows. $T_{j,p}$ has $n$ leaves which are in 1-to-1 correspondence with points in $X$. Let $I = \{i: i\equiv p \mod \lceil \log(1/\eps)\rceil\}$ be the set of levels congruent to $p$ modulo $\lceil \log(1/\eps)\rceil$. For each level $i \in I$ from lower levels to higher levels, let $C_j$ be the $j$-th set in the sequence of the pairing cover $\mathcal{C}_i$. Let $i' = i - \lceil \log(1/\eps)\rceil$, and $F_{i'}$ be the collection of trees constructed at level $i'$. ($F_0$ contains leaves of $T$.)  For each point $x\in C_j$, let $T_x \in F_{i'}$ be the  tree containing $x$, and $F_x\subseteq F_{i'}$ be a collection of subtrees such that each tree $T\in F_x$ contains a point $z$ within distance $2^i$ from $x$.  

For every two points $x,y$ that are paired by $C_j$, we add a new node $v$ and make the roots of trees in $\{T_x,T_y\}\cup F_x \cup F_y$ children of $v$.  The resulting forest after this process is denoted by $F_i$.  
At the top level $i_{\max}$, if $F_{i_{\max}}$ contains more than one tree, we merge them into a single tree by creating a new node, and making the roots of the trees in $F_{i_{\max}}$ children of the new node. The resulting tree is $T_{j,p}$, and this completes the construction of Step 2.

In the following lemma, we argue that $F_i$ is a forest and bound the diameter of trees in $F_i$.

\begin{lemma}\label{lm:diam-Fi} For every level $i$ the following statements are true:
\begin{enumerate}[label=(\roman*)]
\item $F_i$ is a forest.
\item Let $T$ be a tree in $F_i$, and $\diam(T)$ be the diameter of the set of points associated with leaves of $T$. Then $\diam(T) \leq (1/\eps + 20)2^i$ when $\eps \leq 1/12$.
\end{enumerate}
\end{lemma} 
\begin{proof}[Proof of \Cref{lm:diam-Fi}] We prove the lemma by induction. The statement is vacuously true for $F_0$. We assume that the statement is true for $F_{i'}$, where $i' = i-\lceil \log(1/\eps) \rceil$, and prove it for $F_i$.

\begin{enumerate}[label=(\roman*)]
\item Let $x'$ be a point in $C_j \setminus \{x,y\}$. Recall that by construction $\delta_X(x,x') > 2^i/\eps$. It suffices to argue that: (a) $T_x$ cannot contain $x'$ and (b) no tree in $F_x$ can be in $F_{x'}$. For (a) we use induction hypothesis and observe that for any other $z \in T_x$, we have $\delta_X(x,z) \le \diam(T_x) \le (1/\epsilon + 20)2^i\eps < 2^i/\eps$. For (b) induction hypothesis and observe that for any point $z$ in $F_x$, we have $d_X(x, z) \le 2^i + (1/\epsilon + 20)2^i\eps$. From triangle inequality, we have $d(x',z) \ge d(x,x') - d_X(x, z) > 2^i/\eps + 2^i + (1/\epsilon + 20)2^i\eps > 2^i$.

\item By construction, either  (a) $T \in F_{i'}$, and in this case $\diam(T) \leq (1/\eps + 20)2^{i'} \leq (1/\eps + 20)2^{i}$ by induction, or (b) $T$ is formed by merging trees in $\{T_x,T_y\}\cup F_x \cup F_y$  where $x$ and $y$ are paired by $C_j$.  Recall that $\delta_X(x,y)\leq 2^i/\eps$, and that for every tree $A\in F_x\cup F_y$, there exists a leaf $z$ such that $\delta_X(x,z)\leq 2^i$ or $\delta_X(y,z)\leq 2^i$. Thus, by triangle inequality and induction, it follows that:
	\begin{equation*}
	\begin{split}
	\diam(T) &\leq  2^i/\eps + \diam(T_x) + 2\cdot 2^i + 2\max_{A\in F_x} \diam(A)  \\
	& \quad + \diam(T_y) + 2\cdot 2^i + 2\max_{B\in F_y} \diam(B) \\
	&\leq 2^i/\eps + 6(1/\eps + 20)2^{i'} + 4\cdot 2^i \\
	&\leq 2^i/\eps + 6(1/\eps + 20)2^{i}\eps + 4\cdot 2^i  \\
	&\leq (1/\eps + 10 + 120\eps)2^i \leq  (1/\eps + 20)2^i 
	\end{split}
	\end{equation*} 
	when $\eps \leq 1/12$, as desired.
\end{enumerate}
\end{proof}

We now show the robustness of the tree cover $\mathcal{T}$ assuming that $\eps \leq 1/12$. We will show that the stretch is $1+O(\eps)$; one can achieve stretch $1+\eps$ by scaling $\eps$.

 Let $x\not= y$ be any two points in $(X,\delta_X)$. Let $i$ be the non-negative integer such that:
\begin{equation}\label{eq:dxy}
2^{i-2}/\eps  < \delta_X(x,y)\leq 2^{i-1}/\eps~.
\end{equation}
Recall that we assume that the minimum distance in $X$ is larger than $1/(4\eps)$ and hence $i$ exists.   Let $p$ and $q$ be two net points of $N_i$ closest to $x$ and $y$, respectively. By the triangle inequality and \Cref{eq:dxy},  it holds that:
\begin{equation}\label{eq:dpq}
\begin{split}
\delta_X(p,q)&\leq \delta_X(x,y) + 2\cdot 2^i \leq  (1/2\eps  + 2) 2^i \leq 2^i/\eps ~  \quad \mbox{since }\eps\leq 1/12\\
\delta_X(p,q) &\geq  \delta_X(x,y) -  2\cdot 2^i >  (1/4\eps  - 2) 2^i > 0  \quad \mbox{since }\eps\leq 1/12
\end{split}
\end{equation}
 It follows from the second inequality in \Cref{eq:dpq} that  $p\not= q$. Since $\delta_X(p,q) \leq 2^i/\eps$, by property (2) of paring cover, there exists a set $C_j\in \mathcal{C}_i$ such that $p$ and $q$ are paired by $C_j$.  Let $T_p , T_q$ be the trees  in $F_{i'}$ and $F_p,F_q \subseteq F_{i'}$ associated with $p$ and $q$ as described in the construction. Let $T$ be the tree  in $F_i$ resulting from merging trees in $\{T_p, T_q\}\cup F_p \cup F_q$ by the algorithm. Since $\delta_X(x,p)\leq 2^i$, $x$ is a leaf of some tree  $T_x \in \{T_p\}\cup F_p$. By the same argument,  $y$ is a leaf of some tree   $T_y \in \{T_q\}\cup F_q$. Thus, both $x$ and $y$ are leaves in $T$. 
 
 Let $P$ be the path from $x$ to $y$ in $T$. Let $r, r_x,r_y$ be the roots of $T$, $T_x$,  and $T_y$, respectively.  Then $P$ consists of two paths $T_x[x,r_x]$, $T_y[y,r_y]$ and two edges $(r_x,r)$ and $(r_y,r)$. Let $Q$ be the path obtained from $P$ by replacing each internal vertex  $v$ of $P$ with a point chosen from a  leaf in $T_v$. We denote by $S(v)$ the leaf point chosen to replace each vertex $v\in P$. Let $Q_x$ (resp., $Q_y$) be the subpath of $Q$ from $x$ (resp., $y$) to $S(r_x)$ (resp., $S(r_y)$).  We have:
 
 \begin{equation}\label{eq:Q-weight}
 w(Q) \leq w(Q_x) + w(Q_y) + \delta_X(S(r_x), S(r)) + \delta_X(S(r), S(r_y))
 \end{equation}
 In the following claim, we bound the weight of each term in \Cref{eq:Q-weight}.
 
\begin{claim}\label{clm:Q-weight} $ \max\{w(Q_x),w(Q_y)\} \leq (2+40\eps) 2^i$ and $\delta_X(S(r_x), S(r) ) + \delta_X(S(r_y), S(r) ) \leq  \delta_X(x, y) + 4(5+60\eps)2^i $.
 \end{claim}
\begin{proof}[Proof of \Cref{clm:Q-weight}]
 Recall that $T_x,T_y$ are trees in $F_{i'}$, where $i' = i - \lceil\log(1/\eps) \rceil $. By \Cref{lm:diam-Fi}, we have:
	
	\begin{equation}\label{eq:Qx-Qy}
	\begin{split}
	w(Q_x) &= \sum_{j\leq i'} (1/\eps +20)2^{j} \leq (1/\eps + 20)2^{i'+1} \leq (1/\eps + 20)2\eps 2^i\leq  (2+40\eps) 2^i \\
	w(Q_y) &\leq (2+40\eps) 2^i  \qquad \mbox{(by the same argument)}
	\end{split}
	\end{equation}
	The summation for computing $Q_x$ in \Cref{eq:Qx-Qy} is due to the fact that for $T \in F_j$, we have $\diam(T) \le (1/\eps+20)2^j$.
	Let $Y$ and $Z$ be the sets of leaves of $\{T_p\}\cup F_p$ and $\{T_q\}\cup F_q$, respectively. Then we have:
	
	\begin{equation}\label{eq:diam-YZ}
	\begin{split}
	\diam(Y) &= \diam(T_p) + 2\cdot 2^i + 2\max_{A\in F_p} \diam(A)  \\
	&\leq  3(1/\eps + 20)2^{i'} + 2\cdot 2^i \leq 3(1/\eps + 20)\eps 2^{i} + 2\cdot 2^i = (5+60\eps)2^i \\
	\diam(Z) &\leq 	 (5+60\eps)2^i \qquad \mbox{(by the same argument)}
	\end{split}
	\end{equation}
	
	Observe that if $S(r) \in Y$, then $\delta_X(S(r), S(r_x)) \leq \diam(Y) \leq (5+60\eps)2^i $ by \Cref{eq:diam-YZ}. Otherwise, $S(r)\in Z$ and hence $ \delta_Y(S(r), S(r_x)) \leq (5+60\eps)2^i$. In either case, by the triangle inequality, we have:
	\begin{equation}\label{eq:dist-rx-ry-r}
	\begin{split}
	&\delta_X(S(r_x), S(r) ) + \delta_X(S(r_y), S(r) ) \\
	&\leq \delta_X(S(r_x), S(r_y)) + 2\min\{\delta_X(S(r_x), S(r)), \delta_X(S(r_y), S(r))\}\\
	&\leq \delta_X(S(r_x), S(r_y))  + 2(5+60\eps)2^i \\
	&\leq  \delta_X(x, y) + \diam(Y) + \diam(Z) + 2(5+60\eps)2^i  \qquad \mbox{(by the triangle inequality)}\\
	&\leq  \delta_X(x, y) + 4(5+60\eps)2^i   \qquad \mbox{(by \Cref{eq:diam-YZ})}
	\end{split}
	\end{equation}
	\end{proof}

	By \Cref{eq:Q-weight} and \Cref{clm:Q-weight}, we have that:
	\begin{equation}
	\begin{split}
	w(Q) &\leq  2(2+40\eps) 2^i  + \delta_X(x, y) + 4(5+60\eps)2^i    \\
	&\leq \delta_X(x, y)  + O(1)2^i \qquad \mbox{(since $\eps \leq 1/12$)}\\
	&\leq \delta_X(x, y) + O(1) 4\eps\delta_X(x,y) \qquad \mbox{(by \Cref{eq:dxy})}\\
	&= (1+O(\eps))\delta_X(x,y)~,
	\end{split}
	\end{equation}
	as claimed.

\subsection[Implementing Robust Tree Cover in \texorpdfstring{$O(n\log n)$}{O(nlogn)} time]{Implementing Robust Tree Cover in $\bm{O(n\log n)}$ time} \label{subsec:implement-RTC}

 To make our construction of a robust tree cover efficient, we need two data structures: 
\begin{itemize}
	\item[(a)] An implicit representation  of a  hierarchy of nets using $O_{\eps,d}(n)$ space. Let $\hat{N}_i \subseteq N_i$ be the subset of the net points that are explicitly stored in the hierarchy at level $i$. We have that $\sum_{i\geq 0} |\hat{N}_i| = O_{\eps,d}(n)$.
	\item[(b)]  For each net point $p \in \hat{N}_i$, store all the points in $N_i\cup N_{i'}$ within a distance $O(1/\eps)2^i$ from $p$, assuming that there is at least one such point other than $p$.   (If there are no such points, then $p$ will not be explicitedly store at level $i$ in the hierarchy.) Recall that $i' = i - \lceil \log(1/\eps)\rceil$ and hence the  pairwise distance of points stored at $p$ in (b) is at least $2^{i-\log\lceil 1/\eps\rceil} = \Omega(\eps 2^i)$. It follows from the packing lemma (\Cref{lm:packing-dd}) that the number of stored points is $\eps^{-O(d)}$.
\end{itemize}
Both data structures (a) and (b) can be constructed in $O_{\eps,d}(n\log{n})$ time using the result by Cole and  Gottlieb~\cite{CG06B}.  

Given these data structures, in Step 1, constructing partition $\mathcal{P}_i$ can be done in $O_{\eps,d}(|\hat{N}_i|)$ time following exactly the algorithm where $\hat{N}_i\subseteq N_i$ is the set of points stored explicitly in the hierarchy of nets. Thus, constructing  $C(\mathcal{P}_i)$ can also be done in $O_{\eps,d}(|\hat{N}_i|)$ time.  Note that $\sum_{i\geq 0}\hat{N}_i = O_{\eps,d}(n)$ by (a). It follows that the  total running time of Step 1 is $O_{\eps,d}(n)$. For Step 2, to construct the tree $T_{j,p}$, we need to identify for each point $x\in C_j$ the tree $T_x$ and forest $F_x$. For $T_x$, we can store a pointer to $T_x$ at $x$. For quickly identifying $F_x$, we (i) relax the definition of $F_x$ to contains subtrees of $F_{i'}$ such that each contains a \emph{net point} of $N_{i'}$ within distance $O(2^i)$ from $x$, (ii) guarantee that each tree in $F_{i'}$  contains at least one point in $N_{i'}$. Thus, we can identify $F_x$ in $O(\eps^{-d}) = O_{\eps,d}(1)$ time by looking at all points stored at $x$ in data structure (b). To guarantee (ii) inductively, in Step 2, we not only merge trees from pairs $x,y$ in $C_j$, but also merge trees in $F_{i'}$ that contain points close to points in $N_i$. Specifically, for each point $z\in \hat{N}_i$, which might not be in $C_j$, we merge the tree containing $z$ and all the (unmerged) trees in $F_{i'}$ containing points  of $N_{i'}$ within distance $2^i$ from $z$. This can be done in  $O_{\eps,d}(|\hat{N}_i|)$ time. Thus, the total running time of both steps is $O_{\eps,d}(n)$, and the final running time is  $O_{\eps,d}(n\log(n))$.

 \subsection{Deriving a fault-tolerant navigation (and routing) scheme}\label{sec:ft-navigation}
In the navigation scheme presented in \Cref{SectionPathOracle}, we did not exploit a crucial property of the tree cover theorem in doubling metrics \cite{BFN19B}: For every pair $u,v$ of points in $M_X$, there is a $(1+\eps)$-spanner path in one of the trees in the cover --- such that the path starts and ends at leaves corresponding to $u$ and $v$. To achieve FT navigation algorithm, we must rely on this property.
For any two points from a doubling metric, the navigation algorithm from \Cref{SectionPathOracle} locates points, which are now the leaves in the corresponding tree of the tree cover. Then, it uses the navigation scheme for that particular tree to navigate between these points. Every vertex in the tree is associated with a single point in the metric space, hence while navigating the tree we can directly obtain the information about the path in the metric space.
In the case of FT navigation, every vertex in the tree stores (or is associated with) $f+1$ points (rather than one) that correspond to its descendant leaves. This is the case for all the vertices, except for ones with less than $f+1$ descendant leaves (including the leaves themselves); such vertices store all their descendant leaves.
To navigate between any two non-faulty points $u$ and $v$ (corresponding to leaves in the tree), we apply the same navigation scheme as given in \Cref{SectionPathOracle}, but for every vertex that we traverse along the path in the tree, we pick a non-faulty point stored in that vertex {\em arbitrarily}, if it stores $f+1$ points. For every vertex with less than $f+1$ leaves in its subtree,  
it must store either $u$ or $v$, since all the nodes along the path in the tree are ancestors of either $u$ or $v$. Since both $u$ and $v$ are non-faulty, we will have a non-faulty point to choose from ($u$ or $v$ or both). The query time of the navigation scheme remains $O(k)$. The basic (non-FT) routing scheme is deferred to \Cref{sec:labeling}; however, 
equipped with the FT-navigation scheme that we've just described, it is straightforward to strengthen the basic routing scheme to achieve fault-tolerance (with the size bounds growing by a factor of $f$).
\section{Applications} \label{sec:applications}

We argue that an efficient navigation scheme is of broad potential applicability, by providing several applications.
The first and perhaps the most important application is an efficient compact routing scheme, which is given in \Cref{sec:labeling}. Next, in \Cref{sec:sparse-spanner} we show that our navigation technique can be used for efficient spanner sparsification without increasing its stretch and lightness by much. In \Cref{sec:spt,sec:mst} we show that one can use the navigation technique to compute the SPT and MST on top of the underlying spanner. Finally, in \Cref{sec:treeProduct}, we address two related problems: online tree product and MST verification.

\subsection{Compact routing schemes} \label{sec:labeling}

A routing scheme is a distributed algorithm for delivering packets of information from any given source node to any specified destination in a given network. Every node has a \emph{routing table}, which stores local routing-related information. In addition, every node is assigned a unique \emph{label} (sometimes called \emph{address}).
Given a destination node $v$, routing algorithm is initiated at source $u$ and is given the label of $v$. Based on the local routing table of $u$ and the label of $v$, it has to decide on the next node $w$ to which the packet should be transmitted. More specifically, it has to output the relevant \emph{port number} leading to its neighbor $w$. 
Each packet of information has a \emph{header} attached to it, which contains the label of the destination node $v$, but may contain additional information that may assist the routing algorithm. Upon receiving the packet, any intermediate node $w$ has at its disposal its local routing table and the information stored in the header. This process continues until the packet arrives at its destination (node $v$). 

We consider routing in metric spaces, where each among $n$ points in the metric corresponds to a network node. Initially, we choose a set of links that induces an overlay network over which the routing must be performed. We would like the overlay network size to be small and yet to be able to route using very few hops. The challenge is to do so while also optimizing the tradeoff between the storage (i.e., size of routing tables, labels, and headers) and the stretch (i.e., the ratio between the distance packet traveled in the network and the distance in the original metric space).
In addition, one may try to further optimize the time it takes for every node to determine (or output) the next port number along the path, henceforth \emph{decision time}, and other quality measures, such as the maximum degree in the overlay network.

There are two models, based on the way labels are chosen: \emph{labeled}, where the designer is allowed to choose (typically $\polylog(n)$) labels, and \emph{name-independent}, where an adversary chooses labels. Depending on the way the port numbers are assigned, we distinguish between the \emph{designer-port} model, where the designer can choose the port number, and the \emph{fixed-port} model, where the port numbers are chosen by an adversary. 
For an additional background on compact routing schemes, we refer the reader to~\cite{Peleg00, TZ01, FG01, AGGM06, Che13}.

Our basic result is an efficient routing scheme for tree metrics, which we present in \Cref{subsec:treeRouting}. Our routing scheme works in the labeled, fixed-port model. 
Next, in \Cref{subsec:routingCovers}, we apply the routing scheme for tree metrics on top of the collection of trees provided by any of the aforementioned tree cover theorems (cf. \Cref{tab:treeCover}) and obtain routing schemes for various metrics. This application is nontrivial for general graphs, as it aims at optimizing the label sizes.

\subsubsection{Routing scheme for tree metrics}\label{subsec:treeRouting}

We show that one can construct an efficient 2-hop routing scheme for tree metrics. The guarantees are summarized in the following theorem.

\begin{theorem}\label{thm:treeRouting}
Let $M_T$ be a tree metric represented by an edge-weighted tree $T$ with $n$ vertices. We can preprocess $T$ it in $O(n \log{n})$ time and construct a routing scheme which works on the overlay network with $O(n\log{n})$ edges. The routing scheme works in the labeled, fixed-port model, uses routing tables and labels of $O(\log^2{n})$ bits and headers of $\lceil\log{n}\rceil$ bits and routes in at most 2 hops and in $O(1)$ decision time. 
\end{theorem}

The first step of the preprocessing phase consists of constructing spanner $G_T$ of $T$ and the navigation data structure $\mathcal{D}_T$ as described in \Cref{thm:treeNavigate}. The number of edges of $G_T$ is $O(n\alpha_k(n))=O(n\log{n})$ for $k=2$. In addition to $G_T$, we obtain an augmented recursion tree $\Phi$ for navigating $G_T$. By \Cref{lem:depth}, the depth of $\Phi$ is $O(\log{n})$. (More details on this construction can be found in \Cref{subsec:treespanner}.) We will show how to construct a routing scheme for spanner $G_T$ using the augmented recursion tree $\Phi$.
Recall that, when $k = 2$, every non-leaf vertex in $\Phi$ corresponds to exactly one vertex in $T$.

We shall assume that each vertex in $T$ is assigned a unique identifier between 1 and $n$, which can be justified via a straightforward linear time procedure. At the beginning, we preprocess the augmented recursion tree $\Phi$ using the lowest common ancestor (LCA) labeling scheme by \cite{AHL14}.\footnote{This scheme is usually known as \emph{nearest common ancestor (NCA)} scheme, but we use term lowest common ancestor (LCA) in order to be consistent with the rest of the paper.} This scheme uses linear preprocessing time to assign $O(\log{n})$-bit label to each vertex in the tree so that any subsequent LCA query can be answered in constant time. 
Recall from \Cref{subsec:treespanner} that every vertex $u$ in $T$ uniquely corresponds to a vertex in $\Phi$, denoted by $u.ptr(\Phi)$.
For every $u$ in $T$, let $lca(u)$ be the LCA label of $u.ptr(\Phi)$.

We start by describing the label of each vertex in $T$.
Fix a vertex $u \in T$ and denote by $\alpha$ its corresponding vertex in $\Phi$, i.e., $\alpha \coloneqq u.ptr(\Phi)$.  Recall from \Cref{subsec:treespanner} that we use $\alpha.level$ to denote the level of $\alpha$ in $\Phi$. Let $\beta_j$ be the ancestor of $\alpha$ at level $j$ in $\Phi$, so that $\beta_0 = rt(\Phi)$ and $\beta_{\alpha.level} = \alpha$. 
Let $v_j \coloneqq \beta_j.ptr(T)$ be the vertex in $T$ which corresponds to $\beta_j$ for all $0 \le j \le \alpha.level-1$.
Denote by $h_u$ a 2-level hash table with key $lca(v_j)$ and value $\port(v_j, u)$ for each $0 \le j \le \alpha.level-1$.
This table occupies space linear in the size of all its key-value pairs. Since there is at most $O(\log{n})$ ancestors (the depth of $\Phi$ is $O(\log{n})$ by \Cref{lem:depth}) and each ancestor takes $O(\log{n})$ bits of space to store, the total memory required for $h_u$ is $O(\log^2{n})$. Any subsequent query for an element in $h_u$ takes worst-case constant time. Details of implementation can be found in \cite{FKS84,AN96}.
Label of $u$, denoted by $\lab(u)$, consists of $\lab(u) = (lca(u), h_u)$. Its size is $O(\log^2(n))$ bits.

The routing table of node $u$, denoted by $\tab(u)$ contains information similar to its label. Denote by $h'_u$ a 2-level hash table with key $lca(v_j)$ and value $\port(u, v_j)$ for each $0 \le j \le \alpha.level-1$.
Note that in $h_u$, we store port numbers leading to $u$ from its ancestors (with respect to $\Phi$), whereas in $h'_u$ we store port number from $u$ to its ancestors. 
In addition, if $u$ corresponds to a leaf in the augmented recursion tree, we add to $\tab(u)$ an array $\bas(u)$ containing a constant number of pairs $(v, \port(u, v))$ for every other node $v$ corresponding to the same leaf. Since $u$ might not be directly connected to $v$, $\port(u,v)$ denotes the port number leading to the first vertex on the shortest path from $u$ to $v$.
Recall that, by \Cref{property:base-case}, there can be at most $O(k) = O(1)$ vertices corresponding to the same base case, meaning that $\bas(u)$ contains a constant number of $O(\log{n})$-bit entries. The routing table of $u$ consists of $\tab(u) = (lca(u), h'_u, \bas(u))$, which takes $O(\log^2{n})$ bits to store.
We note that labels and routing tables can be computed in $O(n\log{n})$ time using the augmented recursion tree and techniques similar to those in \Cref{SectionPathOracle}.

We now specify the routing protocol. Recall that, upon a query to route from $u$ to $v$, the algorithm is executed on node $u$, has access to the local routing table of $u$ (denoted by $\tab(u)$), and is passed the label of the destination $v$ (denoted by $\lab(v)$). 
We shall assume that $u\neq v$, since otherwise there is nothing to do.
First, we check if $u$ and $v$ correspond to the same base case in $\Phi$. We do so by looking for $v$ in $\bas(u)$, if $\bas(u)$ is nonempty. This can be done in constant time since the number of entries in $\bas(u)$ is constant. If an entry corresponding to $v$ has been found, we extract from $\bas(u)$ the information about the port corresponding to the first edge on the shortest path from $u$ to $v$ and forward the packed with an empty header. By the guarantees of $G_T$, there exist a path of at most 2 hops between $u$ and $v$ in the subgraph induced by the vertices corresponding to the same base case as $u$ and $v$. Hence, the packet is either forwarded directly to $v$, in which case the algorithm successfully terminates, or it is forwarded to an intermediate node $w$ which corresponds to the same base case and has a direct link to $v$. In the latter case, the corresponding port can be extracted from $\bas(w)$, using the local routing table at node $w$.

If $u$ and $v$ do not correspond to the same base case, we look for the lowest common ancestor, $\lambda$, of $u.ptr(\Phi)$ and $v.ptr(\Phi)$. Let $lca(\lambda)$ denote the LCA label of this vertex. 
At this stage, we distinguish between several cases. If $u.ptr(\Phi) = \lambda$ (which can be checked using their LCA labels), then $u.ptr(\Phi)$ is an ancestor of $v.ptr(\Phi)$ in $\Phi$ and the underlying spanner $G_T$ contains an edge between $u$ and $v$. The corresponding port can be found in $h_v$, which is in $\lab(v)$. If $v.ptr(\Phi) = \lambda$, then $\lambda$ is an ancestor of $u.ptr(\Phi)$ and there is an edge between $u$ and $v$. The corresponding port can be found in $h'_u$, which is in $\tab(u)$. If none of the above is the case (recall that we assumed that $u\neq v$), by the design of $G_T$, there is edge between $u$ and the vertex $w$ corresponding to $\lambda$ and from $w$ to $v$. From $\lab(v)$, we extract $\port(w, v)$, store it in the header and forward the packet to $\port(u, w)$, which can be found in $\tab(u)$. This completes the description of the routing algorithm.

\subsubsection{Routing in metric spaces}\label{subsec:routingCovers}
We proceed to show how to employ the described routing scheme for metric spaces, thus proving \Cref{thm:routing}. 

Similarly to what has been done in \Cref{SectionPathOracle}, for a given metric $M_X=(X,\delta_X)$ we first construct one of the tree covers from \Cref{tab:treeCover}. Denote the stretch of this cover by $\gamma$ and the number of the trees by $\zeta$. 
The underlying graph $H$ is the union of the trees in the cover --- it has the same vertex set as $M_X$, denoted by $X$, and has an edge set obtained as a union of the edges of $\zeta$ trees in the cover.
At this stage, port numbers are assigned (by an adversary) for every vertex $v$ in $\{1,\dots, \deg_H(v)\}$, where $\deg_H(v)$ is the degree of $v$ in $H$. Then, for each tree in the cover, we construct a $2$-hop routing scheme as provided by \Cref{thm:treeRouting}. 
We distinguish between cases where we use tree covers and Ramsey tree covers.

\paragraph{Routing using tree covers.}
In addition to the routing schemes, we employ the \textit{distance labeling scheme} with a stretch of $(1+\eps)$, given by \cite{FreedmanLabelingScheme}. Their labeling scheme uses $O(\log (1/\eps)\log n)$ bits of space per vertex and achieves a constant query time. In our routing scheme, each node stores $\zeta$ distance labels, one per tree in the cover, both as a part of its routing table and as a part of its label. In addition, for each of the trees in the cover, we store label and routing table, as described in \Cref{subsec:treeRouting}.
This means that the memory consumed by the routing table at each node is $O(\zeta \cdot \log^2{n} + \zeta \cdot \log (1/\eps)\log n) = O(\zeta\cdot \log{n}\cdot \log(n/\eps))$.

The routing algorithm executed at node $u$ first queries the distance labeling scheme for the approximate distance between $u$ and the destination node $v$ in each of the trees. Since each query takes constant time, this step requires time proportional to $\zeta$. 
The routing proceeds using the routing information of the tree which has the smallest stretch. Suppose that such a tree had index $i$. We find the next port using the $i$th entry of the routing table of $u$ and the label of $v$. If the next step in tree routing correspond to the base case, we only transfer the index $i$ in the header. From the algorithm in \Cref{subsec:treeRouting}, we know that the next step is either $v$, or we route via an intermediate node $w$ which has port number $\port(w,v)$ in its routing table corresponding to the $i$th tree. If the next step does not correspond to the base case, we either route directly, in which case there is no need to store anything in the header, or route via an intermediate node $w$. In the latter case, we can extract the information on $\port(u, w)$ and $\port(w, v)$ from the entries in $\tab(u)$ and $\lab(v)$ (corresponding to the $i$th tree) and route to $\port(u,w)$, while sending only $\port(w,v)$ in the header. In both cases, the header size is $\lceil{\log{n}}\rceil$.
Finally, notice that we are using the labeling schemes which return $(1+\eps)$-approximate distances, meaning that this step incurs an additional $(1+\eps)$ factor to the stretch of our routing path. The stretch of $(1+\eps)$ can be achieved by appropriate scaling.

\paragraph{Routing using Ramsey tree covers.}
When using Ramsey tree covers, we know for each node which of the trees in the cover achieves the desired stretch. The label of a node is now comprised of its label in the routing table for that particular tree, together with the index of that tree. Routing table of every node contains routing tables for each of the $\zeta$ trees in the cover. Hence, the label sizes of this scheme are $O(\log^2{n})$, which is the size of routing schemes for tree spanners in \Cref{thm:treeRouting}, and the routing tables have size $O(\zeta \cdot \log^2{n})$.
Given a source node $u$ and the label of destination $v$, the algorithm uses routing table corresponding to the tree which index is in the label of $v$. In other words, we can in constant time decide which among the $\zeta$ routing tables stored at $u$ to use.

In conclusion, we have proved \Cref{thm:routing}, whose guarantees are summarized in the \Cref{tab:routing}.

\begin{table}[H]
\centering
\begin{tabular}{|c|c|c|c|c|}
\hline
\multicolumn{1}{|c|}{\multirow{2}{*}{\textbf{stretch}}} & \multicolumn{2}{c|}{\textbf{storage}}      & \multicolumn{1}{c|}{\multirow{2}{*}{\textbf{time}}} & \multicolumn{1}{c|}{\multirow{2}{*}{\textbf{metric}}} \\ \cline{2-3}
\multicolumn{1}{|c|}{}                                  & \multicolumn{1}{c|}{\textbf{table}}   & \multicolumn{1}{c|}{\textbf{label}} & \multicolumn{1}{c|}{}                               & \multicolumn{1}{c|}{}                                 \\ \hline
1 & \multicolumn{2}{|c|}{$O(\log^2{n})$} &  $O(1)$ & tree \\\hline
$1+\eps$& \multicolumn{2}{|c|}{$O(\eps^{-O(d)}\log(n)  \log(n/\eps) )$} &  $O(\eps^{-O(d)})$ & doubling dim. $d$\\\hline
$1+\eps$ & \multicolumn{2}{|c|}{$O((\log{n}/\eps)^3 \log(n) )$}  & $O((\log{n}/\eps)^2 )$ & fixed-minor-free \\\hline
$O(\ell)$ & $O(\log^2{n})$ & $O(\ell  n^{1/\ell}  \log^2{n})$   &$O(1)$& general \\\hline
$O(n^{1/\ell} \log^{1-1/\ell}{n})$ &$O(\log^2{n})$ & $O(\ell \cdot \log^2{n})$   &$O(1)$& general\\\hline
\end{tabular}
\caption{Summary of our results for 2-hop routing schemes. For each result, header size is $\lceil{\log{n}}\rceil$. In the last two results, parameter $\ell\ge 1$ is an arbitrary integer. }\label{tab:routing}
\end{table}

\subsection{Fault-tolerant routing}\label{sec:ft-routing}
\paragraph{Fault-tolerant routing in trees.}
We describe a fault-tolerant routing scheme for trees of the robust tree cover presented in \Cref{ft}. 
Given a tree $T$ from the robust tree cover, we first compute a spanner for it, as described in \Cref{sec:ft-spanner}. Recall that for every $v \in T$, we choose a set $R(v)$ corresponding to (at most) $f+1$ leaves of the subtree of $T$ rooted at $v$. We show how to route between any two leaves of this tree, by slightly modifying the tree routing scheme presented in \Cref{subsec:treeRouting}.

For every leaf in $u \in T$ (corresponding to some point in the metric), let $\alpha$ be its corresponding vertex in $\Phi$, i.e., $\alpha = u.ptr(\Phi)$. We construct a 2-level hash table $h_u$ for $u$ as follows. For the $j$th ancestor of $\alpha$ in $\Phi$, let $v_j$ be its corresponding point in $T$ and let $w_1, \ldots, w_\ell$ be the leaves in $T$ from $R(v_j)$, ordered increasingly by their identifiers. We add to the hash table $h_u$ an entry with key $lca(v_j)$ and value $\lAngle{\port(w_1, u),\ldots\port(w_\ell, u)}$. The label of $u$ is $\lab(u) = (lca(u), h_u)$; it requires $O(f\log^2{n})$ bits of space.

We next describe the routing table of $u$, denoted by $\tab(u)$. Let $h'_u$ be a 2-level hash table, where for the $j$th ancestor of $\alpha$, we store key $lca(v_j)$ and value $\lAngle{\port(u, w_1),\ldots\port(u, w_\ell)}$. If $u$ corresponds to a base case of the spanner construction, we keep a routing table $\bas(u)$ for every other node $v$ corresponding to the same leaf in $\Phi$. In particular, let $w_1, \ldots, w_\ell$ be the leaves in $T$ in $R(v)$. We add an entry $(v, \lAngle{\port(u, w_1), \ldots, \port(u, w_\ell)})$ to $\bas(u)$. The routing table of $u$, denoted by $\tab(u)$ consists of $(lca(u), h'_u, \bas(u))$; it requires $O(f\log^2{n})$ bits of space.

Given two leaves $u$ and $v$ in $T$, the routing algorithm is similar to the one from \Cref{subsec:treeRouting}. The only difference is that when we route via an intermediate vertex $w \in T$, we scan entries corresponding to $R(w)$ to find a non-faulty vertex. Since the entries in $R(w)$ are ordered increasingly by their identifiers, we can find a port to a non-faulty vertex in $R(w)$ in $O(f)$ steps. This concludes the description of the fault-tolerant routing scheme for trees.

\paragraph{Fault-tolerant routing in metric spaces.}
Suppose we are given a metric $M_X=(X, \delta_X)$ with doubling dimension $d$. First, we construct a fault-tolerant spanner for $M_X$, as described in \Cref{sec:ft-spanner}, and assign port numbers according to spanner edges. The rest of the construction remains the same as for the routing using tree covers described in \Cref{subsec:routingCovers}. For each of the $\zeta = O(\eps^{-O(d)})$ trees in the cover, we keep labels and routing tables of size $O(f\log^2{n})$. In addition, for each tree, we construct a distance labeling scheme given by \cite{FreedmanLabelingScheme}. Labels for this scheme occupy $O(\log(1/\eps)\log{n})$ bits of space. Hence, the total memory per vertex in the tree is $O(\zeta\cdot\log{n}\cdot(\log(1/\eps) + f\log{n})) = O(\eps^{-O(d)}\cdot\log{n}\cdot(\log(1/\eps) + f\log{n}))$. The routing algorithm remains the same. This concludes the description of the fault-tolerant routing; the result is summarized in the following theorem.

\begin{theorem}
For any $n$-point metric $M_X = (X, \delta_X)$ with doubling dimension $d$, one can construct a $(1+\eps)$-stretch $2$-hop routing scheme in the labeled, fixed-port model with headers of $\lceil\log{n}\rceil$ bits, labels and local routing tables of $O(\eps^{-O(d)}\cdot\log{n}\cdot(\log(1/\eps) + f\log{n}))$ bits, and local decision time $O(f)$.
\end{theorem}

\subsection{Spanner sparsification}\label{sec:sparse-spanner}
Let $M_X = (X,\delta_X)$ be an arbitrary $n$-point metric space and let $G_X$ be any \emph{light} $m$-edge spanner for $M_X$. Let $k\ge 2$ be an integer and let $H_k$ be a $k$-hop spanner for $M_X$ and $\mathcal{D}_x$ data structure for $H_k$ provided by~\Cref{thm:oracle}.
Our goal is to transform $G$ into a sparse spanner for $M_X$, without increasing the stretch and weight by much.

The transformation is as follows. For each edge $(u, v)$ in $E(G_X)$, we query the data structure $\mathcal{D}_X$ for a $k$-hop path $P_{u, v}$ between $u$ and $v$ in $H_k$; the output is $G'_X \coloneqq (X, \cup_{(u, v) \in E(G_X)}P_{u, v})$. The following theorem summarizes the guarantees of this sparsification procedure. See also \Cref{tab:sparsify}.

\begin{theorem}\label{thm:sparsification}
Let $G_X$ be a $k$-hop spanner for an $n$-point metric space $M_X=(X, \delta_X)$, with size $m$, stretch $\beta$, and lightness $L$; let $H_k$ be a spanner for $M_X$ equipped with a data structure $\mathcal{D}_X$ as in \Cref{thm:oracle}. Then, one can in $O(m\cdot \tau)$ time transform $G_X$ into spanner $G'_X$ with stretch $\gamma\cdot \beta$, lightness $\gamma \cdot L$ and $O(n \alpha_k(n) \cdot \zeta)$ edges.
\end{theorem}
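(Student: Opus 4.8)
The plan is to verify the three claimed guarantees of $G'_X$ --- stretch $\gamma\cdot\beta$, lightness $\gamma\cdot L$, and size $O(n\alpha_k(n)\cdot\zeta)$ --- together with the running time, each one essentially by unwinding the definitions. First I would argue the \textbf{size} bound: every path $P_{u,v}$ returned by $\mathcal{D}_X$ is a subpath of edges of $H_k$, so $G'_X$ is a subgraph of $H_k$; since $H_k$ is the $\gamma$-spanner with hop-diameter $k$ and $O(n\alpha_k(n)\cdot\zeta)$ edges provided by \Cref{thm:oracle}, we immediately get $|E(G'_X)| \le |E(H_k)| = O(n\alpha_k(n)\cdot\zeta)$.

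Next I would handle the \textbf{running time}. We issue one query to $\mathcal{D}_X$ per edge of $G_X$, i.e.\ $m$ queries, each taking time $\tau$ by \Cref{thm:oracle} (and each producing a path of $O(k)$ edges, so collecting the union costs $O(mk) = O(m\tau)$ as well); hence the transformation runs in $O(m\cdot\tau)$ time.

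For the \textbf{stretch}, fix any $x,y\in X$. Since $G_X$ has stretch $\beta$, there is a $G_X$-path $x = w_0, w_1,\dots, w_t = y$ with $\sum_{i} \delta_X(w_{i-1},w_i) \le \beta\cdot\delta_X(x,y)$. Each edge $(w_{i-1},w_i)\in E(G_X)$ was replaced by a path $P_{w_{i-1},w_i}$ contained in $H_k$; because $H_k$ has stretch $\gamma$, $\mathrm{weight}(P_{w_{i-1},w_i}) = \delta_{H_k}(w_{i-1},w_i) \le \gamma\cdot\delta_X(w_{i-1},w_i)$. (In the tree-cover instantiations behind \Cref{thm:oracle}, each $P_{w_{i-1},w_i}$ even lies inside a single dominating tree $T$ and has weight $\delta_T(w_{i-1},w_i)$, which is $\le\gamma\,\delta_X(w_{i-1},w_i)$ for the best tree; either way the $\gamma$ bound holds.) Concatenating these paths gives a walk in $G'_X$ from $x$ to $y$ of total weight $\le \sum_i \gamma\cdot\delta_X(w_{i-1},w_i) \le \gamma\beta\cdot\delta_X(x,y)$, so $\delta_{G'_X}(x,y)\le\gamma\beta\cdot\delta_X(x,y)$.

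Finally, for the \textbf{lightness}, note $w(G'_X) = w\!\left(\bigcup_{(u,v)\in E(G_X)} P_{u,v}\right) \le \sum_{(u,v)\in E(G_X)} w(P_{u,v}) \le \sum_{(u,v)\in E(G_X)} \gamma\cdot\delta_X(u,v) = \gamma\cdot w(G_X)$, where the middle inequality is the per-edge stretch bound of $H_k$ used above and the union can only decrease the total weight (shared edges are counted once). Since $G_X$ has lightness $L$, i.e.\ $w(G_X) \le L\cdot w(\mathrm{MST}(M_X))$, we obtain $w(G'_X) \le \gamma L\cdot w(\mathrm{MST}(M_X))$, i.e.\ lightness $\gamma\cdot L$. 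Combining the four bounds yields the theorem. The only mildly delicate point --- and the step I would be most careful about --- is the per-edge weight bound $w(P_{u,v})\le\gamma\,\delta_X(u,v)$: one must make sure the path returned by $\mathcal{D}_X$ on query $(u,v)$ is guaranteed to be a $\gamma$-spanner path \emph{for that specific pair}, which is exactly the contract of \Cref{thm:oracle} (it returns a $\gamma$-spanner path in $H_X$ of at most $k$ hops), so no extra work is needed, but it is worth stating explicitly so the stretch and lightness arguments are clean.
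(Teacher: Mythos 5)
Your proposal is correct and follows essentially the same route as the paper's proof: bound the size by noting $G'_X \subseteq H_k$, charge the running time to $m$ queries of cost $\tau$ each, and obtain stretch $\gamma\beta$ and lightness $\gamma L$ from the per-edge bound $w(P_{u,v}) \le \gamma\,\delta_X(u,v)$ guaranteed by \Cref{thm:oracle}. Your explicit path-concatenation argument for the stretch is in fact a slightly more careful write-up of the step the paper dispatches with the remark that the stretch of a spanner equals the maximum per-edge stretch.
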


\begin{proof}
For each edge of $G_X$ we perform a query as in \Cref{thm:oracle}, and each query takes $O(\tau)$ time, so the total running time of the transformation is $O(|E(G_X)|\cdot \tau)$.
The stretch of a spanner is equal to the maximum stretch between any two adjacent points in it; hence, $stretch_{G'}(u, v) \le \gamma \cdot stretch_{G}(u, v) \le \gamma \cdot \beta$, where the first inequality follows since each edge $(u, v)$ in $G_X$ got replaced by a path of total weight $\gamma \cdot w(u, v)$, and the second inequality follows since the stretch of $G$ is $\beta$.
Similarly, since each edge $(u,v)$ of $G_X$ got replaced by a path $P_{u,v}$ of weight at most $\gamma \cdot w(u,v)$, the lightness of the resulting spanner is at most $\gamma \cdot L$.
The resulting spanner is a subgraph of $H_k$ so it has at most $O(n\alpha_k(n)\cdot \zeta)$ edges.
\end{proof}

\begin{table}[H]
\centering
\scalebox{0.94}{
\begin{tabular}{ |c | c | c | c | c | }
\hline
\textbf{Stretch} & \textbf{Size} & \textbf{Lightness} & \textbf{Time} & \textbf{Metric family} \\\hline
$(1+\eps)\beta$ & $O(n\alpha_k(n) /\eps^{O(d)})$ & $(1+\eps)L$ &$O(mk/\eps^{O(d)})$ & doubling dim. $d$ \\\hline
$(1+\eps)\beta$ & $O(n \alpha_k(n)\eps^{-2} \log^2{n})$ & $(1+\eps)L$ & $O(mk\eps^{-2}\log^2{n})$ & fixed-minor-free  \\\hline
$O(\ell\cdot \beta$) & $O(n \alpha_k(n)\ell n^{1/\ell})$ & $O(\ell\cdot L)$ & $O(m\cdot k)$ & general \\\hline
$O(\beta n^{1/\ell} \log^{1-1/\ell}{n} )$ & $O(n \alpha_k(n) \cdot \ell)$ & $O(L n^{1/\ell} \log^{1-1/\ell}{n} )$ & $O(m\cdot k)$ & general \\\hline
\end{tabular}
}
\caption{Summary of our result for sparsification of spanner $G_X$ for an $n$-point metric space. We use $\beta$ to denote the stretch of $G_X$, which $m$ edges and lightness $L$. In the last two entries, $\ell \ge 1$ denotes an arbitrary integer.}\label{tab:sparsify}
\end{table}
\begin{remark}
For fixed $\eps$ and doubling dimension $d$, \Cref{thm:sparsification} gives rise to a transformation that works in $O(mk)$ time and produces a spanner of size $O(n\alpha_k(n))$ with stretch and lightness increased by a factor of $(1+\eps)$. Similarly, for fixed $\eps$ and fixed-minor-free metrics, we obtain a transformation that works in $O(mk\log^2{n})$ time and produces a spanner of size $O(n\alpha_k(n)\log^2(n))$ with stretch and lightness increased by a factor of $(1+\eps)$.
\end{remark}

\subsection{Approximate shortest path trees}\label{sec:spt}
Once a spanner has been constructed, it usually serves as a ``proxy'' overlay network, on which any subsequent computation can proceed, in order to obtain savings in various measures of space and running time.
Thus, we shall focus on devising efficient algorithms that run on the spanner itself.
In some applications, we may not have direct access to the entire spanner, but may rather have implicit and/or local access, for example by means of a data structure for approximate shortest paths within the spanner, such as the one provided by~\Cref{thm:oracle}.

In this section, we explain how our navigation technique can be used for efficiently computing an approximate shortest-path tree (SPT) that is a subgraph of the underlying spanner. 
In any metric, its SPT is simply a star, which is most likely not a subgraph of the underlying spanner. 
Assuming that we have direct, explicit access to the spanner, we can simply compute an SPT on top of it using Dijkstra's algorithm, which will provide an approximate SPT for the original metric. 
For an $n$-vertex spanner, this approach will require $\Omega(n \log n)$ time, even if the spanner size is $o(n \log n)$. 
There is also an SPT algorithm that runs in time linear in the spanner size, but it is more complex and also assumes that $\log n$-bit integers can be multiplied in constant time \cite{DBLP:journals/jacm/Thorup99}. 
Using our navigation scheme, as provided by~\Cref{thm:oracle}, we can do both better and simpler, and we don't even need explicit access to the underlying spanner (though we do need, of course, access to the navigation scheme).  
The data structure provided by~\Cref{thm:oracle} allows us to construct, within time $O(n \tau)$, an
approximate SPT. In particular, for low-dimensional Euclidean and doubling metrics, 
we can construct a $(1+\eps)$-approximate SPT (for a fixed $\eps$) that is a subgraph of the underlying spanner within $O(n k)$ time, where $k = 2,3,\ldots,O(\alpha(n))$. 

In what follows, we will assume that we are given a metric $(X, \delta_X)$, with $|X| \coloneqq n$ and that we have constructed spanner $H_k$ and the data structure $\mathcal{D}_X$ for $(X, \delta_X)$ as in \Cref{thm:oracle}. Recall that we use $\gamma$ to denote the stretch of the path returned by $\mathcal{D}_X$ and $\tau$ denotes the time spent per query.
The algorithm for computing an approximate SPT rooted at given vertex $rt$ is stated in procedure $\textsc{ApproximateSPT}(rt)$.

\begin{algorithm}
\begin{algorithmic}[1]
\Procedure{ApproximateSPT}{$rt$}
\ForInline {$v \in V$}{let $\pi(v) \gets \emptyset$}
\ForInline {$v \in V \setminus \{rt\}$}{let $\dst(v) \gets \infty$}
\State Let $\dst(rt) \gets 0$
\State Let $V(T) \gets \{rt\}$ and $E(T) \gets \emptyset$
\For {$v \in V \setminus \{rt\}$}\label{line:SPT:loop}
\State Query for the $k$-hop $\gamma$-approximate shortest path $P_{rt, v}$ from $rt$ to $v$;
\State For each edge $e = (x,y)$, ordered from $rt$ to $v$ along $P_{rt, v}$, invoke \Call{Relax}{$x,y$}\label{line:SPT:invokeRelax}
\EndFor
\EndProcedure
\Procedure{Relax}{$u,v$}
\If {$\dst(u) = \infty$}
\State $V(T) \gets V(T) \cup \{v\}$
\State $E(T) \gets E(T) \cup \{(u,v) \}$
\State $\dst(v) \gets \dst(u) + w(u,v)$\label{line:SPT:changeDist1}
\State $\pi(v) \gets u$\label{line:SPT:changePi1}
\ElsIf {$\dst(u) + w(u,v) < dst(v)$}
\State $E(T) \gets (E(T) \setminus \{(\pi(v), v) \}) \cup \{(u,v) \}$
\State $\dst(v) \gets \dst(u) + w(u,v)$\label{line:SPT:changeDist2}
\State $\pi(v) \gets u$\label{line:SPT:changePi2}
\EndIf
\EndProcedure
\end{algorithmic}
\caption{Computing approximate shortest path trees.}\label{alg:approxSPT}
\end{algorithm}
We shall prove three claims which will imply the guarantees of the algorithm in \Cref{alg:approxSPT}. Intuitively, \Cref{claim:tree} proves that, throughout the execution, the graph $T$ maintained by the algorithm is a tree. Then, \Cref{claim:dist} shows that the value $\dst$ for each vertex $v\in T$ will be an upper bound on its distance from the root, denoted by $\delta_T(rt, v)$. Finally, \Cref{claim:stretch} implies that upon termination, all the vertices have a $\gamma$-stretch path to the root.

\begin{claim}\label{claim:tree}
Throughout the execution of $\textsc{ApproximateSPT}(rt)$, graph $T$ is a tree.
\end{claim}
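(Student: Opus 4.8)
\textbf{Proof proposal for \Cref{claim:tree}.}

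The plan is to argue by induction on the number of executed \textsc{Relax} operations that $T$, as defined by the pointers $\pi$, is always a forest in which every vertex with $\pi(v)\neq\emptyset$ has a path to $rt$, and that furthermore $T$ is connected on its vertex set --- hence a tree. The key structural invariant I would maintain is: at all times, for every $v$ with $\pi(v)\neq\emptyset$ we have $\dst(\pi(v)) < \dst(v)$, and $\dst(rt)=0$ is the unique minimum. Initially $T$ consists of the single vertex $rt$ (all $\pi(v)=\emptyset$), so the invariant holds trivially and $T$ is a (degenerate) tree.

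For the inductive step, consider a call $\textsc{Relax}(u,v)$ that actually changes $\pi(v)$ (line~\ref{line:changePi}); by the guard in line~\ref{line:relaxCondition} this happens only when $\dst(u)+w(u,v)<\dst(v)$, after which $\dst(v)$ is set to $\dst(u)+w(u,v)>\dst(u)$ (edge weights are positive in a metric, and $u\neq v$ since $P_{rt,v}$ is a path), re-establishing $\dst(\pi(v))<\dst(v)$. I would first observe that $u$ must itself already lie in $V(T)$ when $\pi(v)$ is set to $u$: along the queried path $P_{rt,v}$ the edges are relaxed in order starting from $rt$, and a straightforward sub-induction on the position of the edge $(x,y)$ in $P_{rt,v}$ shows that after processing $(x,y)$ either $y=rt$ or $\pi(y)\neq\emptyset$, so in particular the tail $u$ of the edge currently being relaxed has $\pi(u)\neq\emptyset$ or $u=rt$, i.e. $u\in V(T)$. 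Thus redirecting $\pi(v)$ to $u$ attaches $v$ (together with the subtree currently hanging below $v$, if any) under a vertex that is already connected to $rt$.

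The one thing that must be ruled out is that this redirection creates a cycle --- equivalently, that $u$ lies in the subtree currently rooted at $v$. Here the $\dst$-monotonicity invariant does the work: following parent pointers from any vertex strictly decreases $\dst$, so every vertex $w$ in the subtree below $v$ satisfies $\dst(w)\ge\dst(v)$; but we are in the case $\dst(u)+w(u,v)<\dst(v)$, so $\dst(u)<\dst(v)$, and therefore $u$ cannot be in $v$'s subtree. Hence no cycle is formed, the number of edges stays one less than the number of vertices of $T$, and connectivity to $rt$ is preserved for $v$ and all its descendants. By induction the invariant, and thus the claim, holds throughout the execution; in particular it holds for the final $T$ returned in the last line.

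\textbf{Main obstacle.} The only delicate point is the no-cycle argument, i.e. proving that the vertex $u$ chosen as the new parent of $v$ is never a descendant of $v$. This is exactly where one needs the $\dst$-strict-decrease-along-parent-pointers invariant, so the real content of the proof is setting up and carrying that invariant (including the easy sub-induction along $P_{rt,v}$ showing the tail of each relaxed edge is already in $T$); everything else is bookkeeping. I do not foresee any difficulty beyond making these invariants precise.
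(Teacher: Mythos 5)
Your proof is correct and follows essentially the same skeleton as the paper's: induction over the \textsc{Relax} calls, using the fact that edges of $P_{rt,v}$ are relaxed in order from $rt$ to guarantee that the tail $u$ already lies in $V(T)$, and then checking that reparenting or attaching $v$ preserves treeness. The one place where you go beyond the paper is the point you flagged yourself: the paper's proof simply asserts that changing the parent of $v$ leaves $T$ connected with $|V(T)|-1$ edges, without explicitly excluding the possibility that the new parent $u$ is a descendant of $v$ (which would create a cycle and detach $v$'s subtree from $rt$). Your invariant that $\dst$ strictly decreases along parent pointers, combined with the guard $\dst(u)+w(u,v)<\dst(v)$, cleanly rules this out, and the invariant is indeed preserved by every relaxation (including for the existing children of $v$, whose parent's $\dst$ only decreases). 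So your argument is a slightly more careful version of the same proof; no gap.
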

\begin{proof}
Initially, the claim holds vacuously, since $T$ consists of vertex $rt$ and no edges.
The structure of $T$ changes in \cref{line:SPT:changePi1,line:SPT:changePi2}. It is important to notice that the order of relaxations performed at \cref{line:SPT:invokeRelax} allows us to assume that whenever $\textsc{Relax}(u, v)$ is executed, $u$ is in $T$.  We will assume $T$ is a tree before the relaxation and prove that it will also be the case after the relaxation. If $v$ was in $T$ earlier, then $T$ changes a parent of $v$, thus remaining connected while preserving the number of vertices and edges. If $v$ was not in $T$, then $T$ grows by one vertex and one edge. In both cases, $T$ is a connected graph with $|V(T)|$ vertices and $|V(T)|-1$ edges, so it is a tree.
\end{proof}

\begin{claim}\label{claim:dist}
For each vertex $v \in T$ and for each ancestor $u$ of $v$, the algorithm $\textsc{ApproximateSPT}(rt)$ maintains invariant $\dst(u) + \delta_T(u, v) \le \dst(v)$.
\end{claim}
\begin{proof}
At the beginning of the execution, the invariant is trivially satisfied.
The values $\dst$ change in \cref{line:SPT:changeDist1,line:SPT:changeDist2}.  Similarly, the values $\delta_T$ change in \cref{line:SPT:changePi1,line:SPT:changePi2}, due to the change of edges in $T$. 
We assume that the invariant was true before relaxation along $(u,v)$ and would like to show that it holds after. In particular, it suffices to show that (\emph{i}) it holds for $v$ and all its ancestors, (\emph{ii}) it holds for $v$ and all its descendants, and (\emph{iii}) it holds for any ancestor $x$ of $v$ and any descendant $w$ of $v$. %
As for (\emph{i}), the invariant trivially holds for $v$ as its own ancestor. Also, after the relaxation, we know that $\dst(v) = \dst(u) + w(u,v) = \dst(u) + \delta_T(u, v)$ so the claim holds for $u$ as a parent of $v$. The invariant was true before the relaxation, so for any ancestor $x$ of $u$ (and thus ancestor of $v$), $\dst(x) + \delta_T(x, u) \le \dst(u)$. Adding $\delta_T(u, v)$ to both sides of inequality completes this case. 
For (\emph{ii}), the invariant implies that $\dst'(v) + \delta_T(v, w) \le \dst(w)$, where $\dst'(v)$ was the value before the relaxation. After the relaxation, $\dst(v)$ decreased, so the invariant remains true. Finally, (\emph{iii}) holds since  $\dst(x) + \delta_T(x, w) = \dst(x) + \delta_T(x, v) + \delta_T(v, w) \le \dst(v) + \delta_T(v, w) \le \dst(w)$.
\end{proof}

\begin{claim}\label{claim:stretch}
Upon termination of $\textsc{ApproximateSPT}(rt)$, for any $v \in V$, $\delta_T(rt, v) \le \gamma\cdot \delta_X(rt, v)$.
\end{claim}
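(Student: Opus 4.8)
The plan is to show that when the loop of line~\ref{line:loop} processes vertex $v$, the sequence of $\textsc{Relax}$ calls along the queried path $P_{rt,v}$ forces $\dst(v)$ down to at most the total weight of $P_{rt,v}$, which by \Cref{thm:oracle} is at most $\gamma\cdot\delta_X(rt,v)$; combined with \Cref{claim:dist} applied to the root as ancestor, this yields $\delta_T(rt,v)\le\dst(v)\le\gamma\cdot\delta_X(rt,v)$, and since $\dst$ values only decrease over time the bound persists to termination.

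First I would fix an arbitrary $v\in V\setminus\{rt\}$ and consider the iteration of the outer loop that handles $v$. Write $P_{rt,v}=(x_0=rt,x_1,\ldots,x_h=v)$ with $h\le k$, and let the edges be relaxed in order $(x_0,x_1),(x_1,x_2),\ldots,(x_{h-1},x_h)$ as specified in line~\ref{line:pathRelax}. I would prove by induction on $i$ that immediately after the call $\textsc{Relax}(x_{i-1},x_i)$, we have $\dst(x_i)\le\sum_{j=1}^{i}w(x_{j-1},x_j)$. The base of the induction uses $\dst(rt)=0$, which holds throughout since no $\textsc{Relax}$ call ever targets $rt$ (the relaxation condition on line~\ref{line:relaxCondition} would require $\dst(u)+w(u,rt)<0$, impossible for nonnegative weights, and indeed $rt$ is never the second argument along a path emanating from $rt$). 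For the inductive step: before the call $\textsc{Relax}(x_{i-1},x_i)$, the quantity $\dst(x_{i-1})$ is at most $\sum_{j=1}^{i-1}w(x_{j-1},x_j)$ by the induction hypothesis (and it can only have been further decreased by intervening relaxations, never increased), so the relaxation either already has $\dst(x_i)$ below $\dst(x_{i-1})+w(x_{i-1},x_i)\le\sum_{j=1}^{i}w(x_{j-1},x_j)$, or it sets $\dst(x_i)$ to exactly that sum; either way the bound holds. Taking $i=h$ gives $\dst(v)\le w(P_{rt,v})\le\gamma\cdot\delta_X(rt,v)$ at the end of $v$'s iteration.

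Next I would observe that $\dst(v)$ is non-increasing over the entire execution (line~\ref{line:changeDelta} only ever decreases it), so at termination $\dst(v)\le\gamma\cdot\delta_X(rt,v)$ still holds; moreover $v$ has been placed into $T$ (its $\pi(v)$ became non-empty at the latest during the relaxation that brought $\dst(v)$ below $\infty$), so $v\in V(T)$. Applying \Cref{claim:dist} with $u=rt$, which is an ancestor of every vertex of $T$ since $T$ is a tree (\Cref{claim:tree}) rooted at $rt$, we get $\dst(rt)+\delta_T(rt,v)\le\dst(v)$, i.e.\ $\delta_T(rt,v)\le\dst(v)\le\gamma\cdot\delta_X(rt,v)$, as claimed.

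The main subtlety — not hard but worth stating carefully — is the ordering argument already flagged in the proof of \Cref{claim:tree}: one must use that edges are relaxed \emph{in order from $rt$ to $v$} so that when $\textsc{Relax}(x_{i-1},x_i)$ fires, $x_{i-1}$ already carries a finite, correctly-bounded $\dst$ value and lies in $T$. A second point to get right is that intervening relaxations (from processing other vertices, or earlier edges of the same path) can only help, because $\dst$ is monotone non-increasing; I would make this monotonicity explicit so the induction step is clean. The case $v=rt$ is trivial since $\delta_T(rt,rt)=0$. No deeper obstacle is expected here — the claim is essentially a correctness invariant for a Bellman–Ford–style relaxation restricted to the queried paths.
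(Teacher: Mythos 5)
Your proof is correct and follows the same route as the paper's: bound $\dst(v)$ by the weight of the queried path after $v$'s iteration, note that $\dst(v)$ only decreases thereafter, and conclude via \Cref{claim:dist} with $rt$ as ancestor. The only difference is that you spell out the path-relaxation induction that the paper states without proof, which is a reasonable elaboration rather than a different argument.
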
 
\begin{proof}
We know that after the iteration of line \ref{line:SPT:loop} for vertex $v$, $\dst(v)$ is within $\gamma$ factor of $\delta_X(rt, v)$. In the subsequent iterations, $\dst(v)$ might only decrease due to relaxations. Finally, by \Cref{claim:dist} we know that $\delta_T(rt, v) \le \dst(v)$ throughout the execution of the algorithm.
\end{proof}

\Cref{claim:tree,claim:stretch} imply that $T$ is a $\gamma$-SPT for given metric with root at $rt$. The running time of the algorithm is dominated by $n-1$ queries of the path oracle, each of which takes $O(\tau)$ time as in \Cref{thm:oracle}. Thus, we have proved the following theorem.

\begin{theorem}\label{thm:spt}
Given a data structure $\mathcal{D}_X$ for an $n$-point metric space $M_X=(X, \delta_X)$ as in \Cref{thm:oracle}, one can construct a $\gamma$-approximate shortest path tree for $M_X$ rooted at any point $rt \in X$ in time $O(n \tau)$.
\end{theorem}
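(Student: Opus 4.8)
The plan is to analyze the procedure $\textsc{ApproximateSPT}(rt)$ together with the subroutine $\textsc{Relax}$, establishing correctness via three structural invariants and then bounding the running time by the cost of the $n-1$ oracle queries. Concretely, I would first argue that the graph $T$ maintained by the algorithm is always a tree (this is \Cref{claim:tree}). The crucial point is the relaxation order imposed in line~\ref{line:pathRelax}: the edges of the queried path $P_{rt,v}$ are processed from $rt$ outward, which guarantees that whenever $\textsc{Relax}(u,v)$ is invoked the endpoint $u$ already lies in $T$. Hence each successful relaxation (line~\ref{line:changePi}) either re-parents an existing vertex of $T$ — preserving vertex and edge counts — or attaches a new leaf, adding exactly one vertex and one edge; either way $T$ remains connected with $|V(T)|-1$ edges.

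Next I would invoke \Cref{claim:dist}, the monotonicity invariant $\dst(u)+\delta_T(u,v)\le\dst(v)$ for every $v\in T$ and every ancestor $u$, proved by a short case analysis on ancestors and descendants of the vertex $v$ whose label changes in line~\ref{line:changeDelta}. Specializing to $u=rt$ gives $\delta_T(rt,v)\le\dst(v)$ throughout the execution. Combining this with \Cref{claim:stretch} — after the loop iteration of line~\ref{line:loop} for $v$ we have $\dst(v)\le\gamma\cdot\delta_X(rt,v)$, and $\dst(v)$ can only decrease thereafter — yields $\delta_T(rt,v)\le\gamma\cdot\delta_X(rt,v)$ for all $v\in X$ upon termination. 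Since the path $P_{rt,v}$ returned by $\mathcal{D}_X$ is nonempty for every $v\neq rt$, the first relaxation along it sets $\pi(v)\neq\emptyset$, so $T$ spans $X$; together with \Cref{claim:tree} this shows $T$ is a $\gamma$-approximate shortest-path tree rooted at $rt$.

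For the running time, I would observe that $\textsc{ApproximateSPT}(rt)$ issues exactly $n-1$ queries to $\mathcal{D}_X$, each costing $O(\tau)$ and returning a path of at most $k$ hops, so the $O(k)$ relaxations triggered per query are absorbed into $O(\tau)$ (one has $\tau=\Omega(k)$ in every row of \Cref{thm:oracle}). The remaining bookkeeping — initializing $\dst$ and $\pi$, and maintaining $T$ — is $O(n)$. Hence the total time is $O(n\tau)$, which is the claimed bound.

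I expect the only delicate step to be the relaxation-order argument behind \Cref{claim:tree}, i.e.\ ensuring that $u$ is already a vertex of $T$ at the moment $\textsc{Relax}(u,v)$ is called; once this is in place, the rest is the standard Dijkstra/Bellman--Ford–style invariant maintenance already packaged in \Cref{claim:dist} and \Cref{claim:stretch}, and the time analysis is immediate from \Cref{thm:oracle}.
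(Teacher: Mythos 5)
Your proposal is correct and follows essentially the same route as the paper: the same three invariants (\Cref{claim:tree}, \Cref{claim:dist}, \Cref{claim:stretch}), the same relaxation-order argument to keep $T$ a tree, and the same $O(n\tau)$ accounting via the $n-1$ oracle queries. The small additions you make (explicitly noting that $T$ spans $X$ and that $\tau=\Omega(k)$ absorbs the per-query relaxation cost) are sound and only tighten the exposition.
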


\begin{remark}
When $M_X$ is a metric of doubling dimension $d$, \Cref{thm:spt} gives rise to a construction of $(1+\eps)$-approximate SPTs in time $O(nk/\eps^d)$ time. If $M_X$ is a fixed-minor-free metric, the result is a $(1+\eps)$-SPT in time $O(nk\eps^{-2}\log^2{n})$. For general metrics and an integer parameter $\ell\ge 0$, one can construct an $O(\ell)$-SPT in time $O(nk)$.
\end{remark}

\subsection{Approximate Euclidean minimum spanning trees}\label{sec:mst}
Suppose that we would like to construct an approximate minimum spanning tree (MST). Here too, we shall focus on finding an approximate MST that is a subgraph of the underlying spanner.
In low-dimensional Euclidean spaces one can compute a $(1+\eps)$-approximate MST (for a fixed $\eps$) in $O(n)$ time \cite{chan2008well}, 
but again this approximate MST may not be a subgraph of the spanner. Running an MST algorithm on top of the spanner would require time that is at least linear in the spanner size; moreover, the state-of-the-art deterministic algorithm runs in super-linear time and is rather complex \cite{Chazelle00},
and the state-of-the-art linear time algorithms either rely on randomization \cite{KKT95} or on some assumptions, such as the one given by transdichotomous model \cite{FW94}. Instead, using our navigation scheme, as provided by~\Cref{thm:oracle}, we can construct an approximate MST easily, within time $O(n \tau)$, where $\tau$ is the query time as in~\Cref{thm:oracle}. In particular, for low-dimensional Euclidean spaces, we can construct in this way a $(1+\eps)$-approximate MST (for a fixed $\eps$) that is a subgraph of the underlying spanner within $O(n k)$ time, where $k = 2,3,\ldots,O(\alpha(n))$. 

In what follows, we will assume that we are given a Euclidean metric $(X, \delta_X)$, with $|X| \coloneqq n$ and that we have constructed spanner $H_k$ and the data structure $\mathcal{D}_X$ for $(X, \delta_X)$ as in \Cref{thm:oracle}. 
The MST construction is as follows. Initially, compute a $(1+O(\eps))$ minimum spanning tree $T$ on $(X, \delta_X)$ using \cite{chan2008well}. For each edge $(u, v) \in T$, query $\mathcal{D}_X$ for a $k$-hop path $P_{u, v}$ of weight at most $(1+\eps)w(u, v)$. Let $H \coloneqq (X, \cup_{(u, v) \in E(T)}P_{u, v})$ be the union of the obtained paths. Return a spanning tree of $H$.

\begin{theorem}\label{thm:mst}
Given an $n$-point Euclidean space $M_X=(X, \delta_X)$ and its $(1+\eps)$-spanner $H_k$ equipped with a data structure $\mathcal{D}_X$ as in \Cref{thm:oracle}, one can compute a $(1+\eps)$-approximate minimum spanning tree for $(X, \delta_X)$ in $O(n\cdot k/\eps^d)$ time.
\end{theorem}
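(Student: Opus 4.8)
The plan is to follow the three-step recipe sketched right before the statement. First I would run the linear-time algorithm of \cite{chan2008well} to obtain, in $O(n)$ time, a spanning tree $T$ of $M_X$ whose weight satisfies $w(T) \le (1+\eps')\, w(MST(M_X))$, where $\eps' = \Theta(\eps)$ is chosen small enough (say $\eps' = \eps/3$) to leave room for the subsequent blow-up. Next, for each of the $n-1$ edges $(u,v) \in E(T)$ I would query $\mathcal{D}_X$ --- the data structure of \Cref{thm:oracle}, instantiated in the doubling regime, which applies since a $d$-dimensional Euclidean space has doubling dimension $\Theta(d)$ --- for a $k$-hop $(1+\eps')$-spanner path $P_{u,v}$ between $u$ and $v$ inside $H_k$. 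By \Cref{thm:oracle}, each call returns in time $\tau = O(k/(\eps')^d) = O(k/\eps^d)$ a path in $H_k$ of at most $k$ hops with $w(P_{u,v}) \le (1+\eps')\, \delta_X(u,v) = (1+\eps')\, w(u,v)$.

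Set $H \coloneqq (X, \bigcup_{(u,v)\in E(T)} P_{u,v})$. Since $T$ spans $X$ and each tree edge is replaced by a connecting path, $H$ is a connected subgraph of $H_k$ with $|E(H)| \le k(n-1) = O(nk)$. A single BFS/DFS pass over $H$ then produces, in $O(|E(H)|) = O(nk)$ time, a spanning tree $T'$ of $H$, which is in particular a subgraph of $H_k$. Because metric distances (hence all edge weights) are nonnegative, $w(T') \le w(H) \le \sum_{(u,v)\in E(T)} w(P_{u,v}) \le (1+\eps')\, w(T) \le (1+\eps')^2\, w(MST(M_X)) \le (1+\eps)\, w(MST(M_X))$ for $\eps \le 1$ and $\eps' = \eps/3$; since $MST(M_X)$ lower-bounds the weight of every spanning subgraph of $M_X$, $T'$ is a $(1+\eps)$-approximate MST.

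For the running time, summing the three phases gives $O(n)$ for the call to \cite{chan2008well}, $O(n\tau) = O(nk/\eps^d)$ for the $n-1$ path queries, and $O(nk)$ for extracting $T'$, i.e.\ $O(nk/\eps^d)$ overall. I do not expect a real obstacle: the argument is essentially a wrapper around \Cref{thm:oracle} together with the elementary observation that subdividing the edges of a tree into $(1+\eps')$-stretch paths multiplies the total weight by at most $1+\eps'$ and keeps the graph connected. The only point that needs a little care is the accounting of the three $\eps$-parameters (the one handed to \cite{chan2008well}, the stretch requested from \Cref{thm:oracle}, and the final target $\eps$): they are reconciled by a constant-factor rescaling, and one should note that replacing $\eps$ by $\eps/3$ changes $\tau$ only by the factor $3^d$, which is absorbed into the $O(\cdot)$ since $d$ is treated as a constant.
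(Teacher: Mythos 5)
Your proposal is correct and follows essentially the same route as the paper: run Chan's linear-time $(1+O(\eps))$-approximate MST algorithm, replace each tree edge by a $k$-hop path obtained from $\mathcal{D}_X$, and extract a spanning tree of the union via BFS. Your explicit bookkeeping of the three $\eps$-parameters is a slightly more careful rendering of the rescaling the paper leaves implicit, but the argument is the same.
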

\begin{proof}
Computing a $(1+O(\eps))$-approximate minimum spanning tree using Chan's algorithm \cite{chan2008well} takes $O(n/\eps^d)$ time. For each edge $(u, v) \in T$ we query $\mathcal{D}_X$ for path $P_{u,v}$ in $H_k$ in time 
$O(k/\eps^d)$ as in \Cref{thm:oracle};
altogether, it takes $O(n\cdot k/\eps^d)$ time to compute $H$.
Since the edges graph $H$ consist of union of $k$-hop paths $P_{u,v}$, the size of $E(H)$ is  $O(n\cdot k)$.
Every path $P_{u, v}$ is within a factor $(1+\eps)$ of $w(u, v)$, so the total weight of $E(H)$ is within a $(1+\eps)$ of the $w(T)$. Computing a spanning tree of $H$ can be done in $O(|E(H)|) = O(n\cdot k)$ time using BFS.
\end{proof}
\begin{remark}
When $\eps$ is constant, the running time in \Cref{thm:mst} becomes $O(n\cdot k)$.
\end{remark}

\subsection{Online tree product and MST verification}\label{sec:treeProduct}

The online tree product problem  \cite{DBLP:journals/jacm/Tarjan79,DBLP:conf/focs/Chazelle84,Alon87optimalpreprocessing, DBLP:journals/combinatorica/Pettie06} is defined as follows.\footnote{This problem is sometimes called the \emph{online tree sum problem}.} Let $T$ be an $n$-vertex tree with each of its edges being associated with an element of a semigroup $(S,\circ)$. One needs to answer online queries of the following form: Given a pair of vertices $u, v \in T$, find the product of the elements associated with the edges along the path from $u$ to $v$. 
A slight variant of this problem is the online MST verification problem where the edge weights of $T$ are real numbers. One needs to answer online queries of the following form: Given a weighted edge $(u,v)$ not in $E(T)$, report if the weight of $(u,v)$ is larger than each edge weight along the path between $u$ and $v$ in $T$.
In both problems, the goal is to design efficient (in terms of time and space) preprocessing and query algorithms which use as few as possible semigroup operations (or binary comparisons).
In \Cref{sec:treeProd}, we show that our navigation algorithm from \Cref{thm:treeNavigate} can be easily modified to support the online tree product problem. Then, we further optimize it for the online MST verification problem in \Cref{sec:mstVerify}.

\subsubsection{Online tree product}\label{sec:treeProd}
Given a tree $T$ and its 1-spanner $G_T$, the algorithm from \Cref{thm:treeNavigate} builds a data structure $\mathcal{D}_T$ such that, for any two vertices $u,v\in V(T)$, it returns a 1-spanner path of at most $k$ hops in $G_T$ in $O(k)$ time. We proceed to show how to preprocess the spanner $G_T$ and data structure $\mathcal{D}_T$ so that each spanner edge $(u,v)\in E(G_T)$ has assigned to it a value from semigroup $(S,\circ)$. This value corresponds to the product of the edge values along the path from $u$ to $v$ in $T$. Using this information assigned to the edges of $G_T$, we can answer online tree product queries between two vertices $u$ and $v$ in $V(T)$ by querying $\mathcal{D}_T$ for a $k$-hop path in $G_T$. This results in online tree product algorithm which uses $k-1$ semigroup operations and $O(k)$ time per query. 

The algorithm we proceed to explain uses the notions introduced in \Cref{SectionPathOracle}. 

We start by describing the preprocessing algorithm for the spanner construction when $k=2$. 
In this case, all the spanner edges are either added from from a cut vertex $u$ to every other required vertex in tree $T$ considered in the current level of recursion, or 
in a base case of the recursive construction (see the description of procedure $\textsc{HandleBaseCase}((T,rt(T)), R(T), k)$ in \Cref{subsec:preprocess-tree}).
For the spanner edges added from cut vertex $u$, we perform DFS traversal on $T$ from $u$ and precompute the semigroup products to every other vertex in $T$. (Note that $(S,\circ)$ might not be commutative and we have to precompute the values both from $u$ to $v$ and from $v$ to $u$.) During the DFS traversal, we use stack to maintain the semigroup product (in both directions) along the path from $u$ to the vertex currently visited. This traversal requires time linear in the number of required vertices in $T$, meaning that it does not asymptotically increase the running time of the preprocessing algorithm (cf. \Cref{lm:preporcessingtime}). The spanner edges added in a base case always shortcut paths of length 2, so we can precompute and store the edge information in the adjacency lists of its endpoints with a constant overhead in time and space per edge. This concludes the description for the case when $k=2$.

When $k=3$, at each each recursion level there is $\Theta(\sqrt{n})$ cut vertices, denoted by $CV_\ell$, where $n$ is the number of required vertices in the tree considered at that level, $T$. The spanner consists of edges 
from each cut vertex $u \in CV_\ell$ to vertices in $\border(u)$,
those in $CV_\ell \times CV_\ell$, and the edges added when the recursion ends in a base case.
To precompute the information associated with edges from each cut vertex $u \in CV_\ell$ to vertices from $\border(u)$, we perform one DFS traversal per cut vertex. For cut vertex $u$, DFS precomputes the semigroup products from $u$ to all the vertices in $\border(u)$, in the same way as when $k=2$. This information is stored together with the edge information in $\mathcal{D}_T$. The running time of DFS from $u$ is linear in the number of spanner edges added from $u$ to vertices in $\border(u)$. 
From Lemma 3.13 in~\cite{Sol13}, we know that the union over all vertices in $CV_\ell$ contains at most $O(n)$ spanner edges of this type, meaning that the total running time spent is linear in $n$.
For the same reason as in the proof of \Cref{lm:preporcessingtime}, this step does not affect the asymptotic running time of the preprocessing algorithm. 
We proceed to explain how to precompute the values associated with the spanner edges between vertices in $CV_\ell \times CV_\ell$. 
First, we set all the vertices in $CV_\ell$ as required and all the other vertices in $T$ as Steiner and invoke pruning procedure $\textsc{Prune}((T,rt(T)),R(T))$ from \Cref{subsec:preprocess-tree} (cf. Section 3.2 in \cite{Sol13}). The output of this procedure is tree $T_\pruned$ which satisfies: (\emph{i}) it has at most $2|CV_\ell|-1$ vertices, and (\emph{ii}) each edge $(x,y) \in E(T_\pruned)$ has associated to it semigroup products in both directions along the path between $x$ and $y$ in $T$. This step requires $O(n)$ time. 
For each cut vertex $u \in CV_\ell$, we perform one DFS on $T_\pruned$ (starting from $u$) and precompute the semigroup products from $u$ to every other vertex in $T_\pruned$. Each DFS call requires size linear in $|T'|$ so the overall complexity of this step is $\Theta(|T_\pruned|^2) = \Theta(n)$. We have described how to preprocess all the semigroup products along the edges in $CV_\ell \times CV_\ell$ within a linear time.
Preprocessing for the edges added in a base case is handled in the same way as when $k=2$ --- for every spanner edge, we store the semigroup product associated with it in the adjacency arrays of its endpoints.

Finally, when $k\ge 4$, the spanner edges are either added from a cut vertex $u$ to vertices in $\border(u)$, or when the recursion reaches a base case (cf. \textsc{HandleBaseCase}$((T,rt(T)), R(T)), k)$ in  \Cref{subsec:preprocess-tree}), or via recursive call using the construction for $k-2$. Precomputing the values associated with the edges in $\{u\} \times \border(u)$ is done in the same way as we described for the case when $k=3$. The edges in the base case are handled in the same way as when $k=2$ and $k=3$. Finally, recall that each composite vertex in $\Phi$ keeps a pointer to a data structure containing recursively precomputed information for $k-2$. We can use this data and recursively precompute the values associated with the spanner edges there.
In summary, we have proved the following theorem.
\begin{theorem}\label{thm:treeSum}
Let $T$ be an $n$-vertex tree having edges associated with elements of semigroup $(S, \circ)$ and let $k\ge 2$ be any integer. We can preprocess $T$ and build a data structure in $O(n\alpha_k(n))$ time and space, such that upon a query for any two vertices $u, v\in V(T)$, it returns the semigroup product along the path from $u$ to $v$ using $k-1$ semigroup operations and in $O(k)$ time.
\end{theorem}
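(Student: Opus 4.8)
The plan is to reuse the navigation data structure $\mathcal{D}_T$ and its query algorithm from \Cref{thm:treeNavigate} essentially verbatim, augmenting it so that every spanner edge $(x,y)\in E(G_T)$ carries two semigroup values: the product along the $T$-path from $x$ to $y$, and the product along the $T$-path from $y$ to $x$ (we store both because $(S,\circ)$ need not be commutative). Once this augmentation is in place, a tree-product query for $u,v$ proceeds by calling $\textsc{FindPath}(u,v,\Phi_T,k)$, which returns a $k$-hop path $u=z_0,z_1,\dots,z_m=v$ with $m\le k$ that is a subpath-decomposition of $P_{u,v}$ in $T$; concatenating the stored directional products along $(z_0,z_1),(z_1,z_2),\dots,(z_{m-1},z_m)$ in order yields exactly the product along $P_{u,v}$, using $m-1\le k-1$ semigroup operations and $O(k)$ time (the $O(k)$ from the query itself plus $O(k)$ concatenations). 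The correctness of the concatenation is immediate from the fact, noted after \Cref{thm:treeNavigate}, that any $1$-spanner path between $u$ and $v$ traverses consecutive subpaths of $P_{u,v}$ whose union is $P_{u,v}$.

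The bulk of the argument is therefore the preprocessing: showing that all the directional edge-products can be computed within the same $O(n\alpha_k(n))$ time and space budget as \Cref{lm:preporcessingtime}. I would organize this by the same three-way case split used in the construction of \Cref{fig:treeNavigator}.

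\begin{itemize}
\item \textbf{Base case and $k=2$.} Edges added in $\textsc{HandleBaseCase}$ shortcut $T$-paths of length at most $2$, so their products are read off in $O(1)$ each. The edges $\{v\}\times(R(T)\setminus\{v\})$ added in line~\ref{line:TreeNav-Eprimek2} all emanate from the single cut vertex $v$; a single DFS from $v$ over $T$, maintaining on an explicit stack the running product (in both directions) from $v$ to the current vertex, computes all of them in $O(|V(T)|)=O(n)$ time. Since the $k=2$ recursion has $|R(T_i)|\le n/2$, the total is $O(n\log n)=O(n\alpha_2(n))$, matching \Cref{lm:preporcessingtime}.
\item \textbf{$k=3$.} Here there are three families of edges. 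For the edges $\bigcup_{u\in CV_\ell}(\{u\}\times\textsc{InBorder}(u))$ (line~\ref{line:TreeNav-edgeBorder}) we run one DFS per cut vertex $u$, each costing time linear in $|\{u\}\times\textsc{InBorder}(u)|$; by Lemma~3.13 of \cite{DBLP:journals/talg/Solomon13} these sets total $O(n)$ edges, so the whole step is $O(n)$. For the clique $CV_\ell\times CV_\ell$ (line~\ref{line:TreeNav-Eprimek3}) we first call $\textsc{Prune}(T,rt,CV_\ell)$, which in $O(n)$ time produces $T_{\pruned}$ with $\le 2|CV_\ell|-1=O(\sqrt n)$ vertices and with each pruned edge already labeled by the two directional $T$-products along the path it represents; then one DFS per cut vertex over $T_{\pruned}$ gives all $|CV_\ell|^2=O(n)$ clique products in $O(|CV_\ell|\cdot|T_{\pruned}|)=O(n)$ total time. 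Base-case edges are handled as in $k=2$. Summed over the $\sqrt n$-branching recursion this is $O(n\log\log n)=O(n\alpha_3(n))$.
\item \textbf{$k\ge 4$.} The edges are of three kinds: $\{u\}\times\textsc{InBorder}(u)$ edges, handled exactly as in the $k=3$ case (DFS per cut vertex, $O(n)$ total by Lemma~3.13); base-case edges, handled as before; and the edges of $G_{\pruned}$ obtained by the recursive call $\textsc{TreeNavigator}(T_{\pruned},rt_{\pruned},CV_\ell,k-2)$ in line~\ref{line:TreeNav-recurCV}. For the last family, we first label the edges of $T_{\pruned}$ with their directional $T$-products (a byproduct of $\textsc{Prune}$, in $O(n)$ time), and then \emph{recursively} invoke the preprocessing of this theorem on $T_{\pruned}$ with hop parameter $k-2$; since the edges of $G_{\pruned}$ are built from $T_{\pruned}$, their $T$-products are obtained by substituting in the $T_{\pruned}$-edge labels. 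The extra time satisfies the same recurrence \eqref{eq:time-fnk} as in \Cref{lm:preporcessingtime}, hence resolves to $O(n\alpha_k(n))$.
\end{itemize}

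The space bound is dominated by storing one pair of semigroup elements per edge of $G_T$, i.e.\ $O(n\alpha_k(n))$, plus the recursion-tree machinery of $\mathcal{D}_T$, also $O(n\alpha_k(n))$. Putting the query and preprocessing parts together gives \Cref{thm:treeSum}. The step I expect to require the most care is the $k\ge4$ recursive case: one must verify that the recursive preprocessing is invoked on exactly the pruned tree whose edges (pre-labeled by $\textsc{Prune}$ with $T$-products) underlie $G_{\pruned}$, so that composing the recursively computed $T_{\pruned}$-products with the $\textsc{Prune}$ labels genuinely reproduces the $T$-products along $G_{\pruned}$-edges; and that the induced recurrence for the extra preprocessing time is exactly \eqref{eq:time-fnk}, so that the $O(n\alpha_k(n))$ bound carries over unchanged. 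The commutativity caveat is a minor but pervasive bookkeeping point — every DFS must push/pop products on both ends of the stack so that both orientations are available, which doubles constants but nothing more.
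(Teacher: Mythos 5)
Your proposal is correct and follows essentially the same route as the paper's proof: augment each spanner edge of $G_T$ with the two directional semigroup products, compute them via DFS traversals from cut vertices (using \textsc{Prune} to obtain a labeled $T_{\pruned}$ for the $CV_\ell\times CV_\ell$ clique and for the $k\ge 4$ recursion), and answer queries by concatenating the $\le k$ stored products along the path returned by $\textsc{FindPath}$. The case analysis, the appeal to Lemma~3.13 of \cite{DBLP:journals/talg/Solomon13}, and the recurrence matching \Cref{lm:preporcessingtime} all coincide with the paper's argument.
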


\begin{remark}
For given parameter $k \ge 2$, the online tree product algorithm by \cite{Alon87optimalpreprocessing} achieves preprocessing time $O(n \alpha_k(n))$ but each query follows a path with $2k$ (instead of $k$) hops, thus requiring $2k-1$ semigroup operations.
\end{remark}

Finally, we note that \cite{Alon87optimalpreprocessing} shows several applications of their algorithm due to \cite{DBLP:journals/jacm/Tarjan79}: (\emph{i}) finding maximum flow values in a multiterminal network, (\emph{ii}) verifying minimum spanning trees, and (\emph{iii}) updating a minimum spanning tree after increasing the cost of one of its edges. We can naturally support all these applications using smaller complexity per query while maintaining the same preprocessing time and space guarantees.

\subsubsection{Online MST verification}\label{sec:mstVerify}
We now restrict our attention to the online MST verification problem, i.e., to a variant of the tree sum problem where the edges of the tree are elements of the semigroup $(\mathbb{R}, \max)$. We note that some works considered vertex-weighted versions of this problem, but the two are equivalent up to a linear time transformation which preserves the size of the original tree to within a factor of 2. 
Indeed, if we are given a vertex-weighted tree $T$ with any weight function $w_T : V(T) \to \mathbb{R}$, then we can build an (edge-weighted) tree $T'$ such that $V(T') \coloneqq V(T)$ and $E(T') \coloneqq E(T)$, and assign the weight of an edge $(u,v) \in E(T')$ to be $\max(w_T(u), w_T(v))$. 
Conversely, if we are given an edge-weighted tree $T'$ with any weight function $w_{T'} : E(T') \to \mathbb{R}$, then we can build a (vertex-weighted) tree $T$, where for each edge $(u, v) \in E(T')$, 
we add  to $T$ the two edges $(u, w)$ and $(w, v)$, and set $w_T(u) \coloneqq -\infty$, $w_T(v) \coloneqq -\infty$, $w_T(w) = w_{T'}(u,v)$. The tree $T'$ has $2|V(T)|-1$ vertices and this transformation works in linear time. 
Using these transformations, any algorithm for the edge-weighted variant of the problem can be used for solving the vertex-weighted variant within the same up to constant factors preprocessing and query complexities, and vice versa.
Thus, we may henceforth consider the two variants of the problem as equivalent. 

Koml\'{o}s \cite{DBLP:journals/combinatorica/Komlos85} showed that for any tree $T$ with values associated with its $n$ vertices and a given set of $m$ simple paths (queries) on $T$, one can find the maximum value for each of the $m$ queries using only $O(n\log((m+n)/n))$ comparisons; this algorithm was presented without an (efficient) implementation. 
The first implementation was given by \cite{DBLP:journals/siamcomp/DixonRT92}, and subsequently simpler ones were proposed \cite{DBLP:journals/algorithmica/King97,DBLP:conf/stoc/BuchsbaumKRW98,DBLP:conf/wg/Hagerup09}. These implementations run in time $O(n + m)$ while achieving the same bound $O(n\log((m+n)/n))$ on the number of comparisons during preprocessing. 
We can use this result to reduce the number of comparisons during preprocessing in \Cref{thm:treeNavigate}. Specifically, for any tree $T$, we first construct a $1$-spanner $G_T = (V(T), E)$ and the data structure $\mathcal{D}_T$ as in \Cref{thm:treeNavigate}, and then apply an implementation of Koml\'{o}s' algorithm \cite{DBLP:journals/siamcomp/DixonRT92,DBLP:journals/algorithmica/King97,DBLP:conf/stoc/BuchsbaumKRW98,DBLP:conf/wg/Hagerup09} with the set of queries being (paths between) the endpoints of edges in $E$. We store the precomputed information together with the spanner edges.
As a direct corollary, the navigation scheme from \Cref{thm:treeNavigate} can be precomputed in $O(n\alpha_k(n))$ time and space while using only $O(n \log\alpha_k(n))$ comparisons. 
Given a query edge $(u,v)$, we first query $\mathcal{D}_T$ for a $k$-hop $1$-spanner path between $u$ and $v$. Using $k-1$ comparisons we find the maximum weight along the path and using another comparison we compare this maximum weight against that of the query edge. Overall, this requires $k$ comparisons. The runtime of the query algorithm remains $O(k)$, as in \Cref{thm:treeNavigate}. 

When $k$ is even, the number of comparisons per query can be reduced by one using an idea suggested in \cite{DBLP:journals/combinatorica/Pettie06}. We next describe this idea for completeness. As before, for any $n$-vertex tree $T$, we start by constructing a $1$-spanner $G_T = (V(T), E)$ and the data structure $\mathcal{D}_T$ from \Cref{thm:treeNavigate}. 

Consider first the case $k=2$. We store the $n-1$ edges of $T$ in an array, which we denote by $S$. We sort $S$ using $O(n\log{n})$ comparisons and to each edge $e$ in $T$ we assign a unique integer in $\{1,\dots,n-1\}$, obtained as the position of $e$ in $S$. We call this position the \emph{order} of $e$ in $S$. Next, we assign to each edge $(u,v)$ in $E(G_T)$ a number corresponding to the maximum among the edge \emph{orders} (in $S$) on the path between $u$ and $v$ in $T$. This step can be done in time linear in the spanner size using an implementation of Koml\'{o}s' algorithm. By using the edge orders in the sorted array $S$, rather than their weights, we are not spending any comparisons in this step. Given a query edge $(u,v)$, we first query $\mathcal{D}_T$ for a $2$-hop $1$-spanner path between $u$ and $v$. Let the two edges of this path be $(u,w)$ and $(w,v)$. Without using any weight comparison, we can find which of the two edges have larger order associated to them. We then find the edge $e$ in $S$ having this order and using one comparison compare the weight of $e$ to the weight of the query edge, $(u,v)$. Overall the number of comparisons made is $k-1=1$.

Suppose now that $k\ge2$ is an even integer. Whenever \textsc{PreprocessTree}$((T,rt(T)), R(T),k)$ for constructing $\mathcal{D}_T$ executes a recursive call for some tree $T'$ with parameter $k=2$, we preprocess the spanner edges added for $T'$ using what we explained in the previous paragraph for the case $k=2$. Using a recurrence similar to that in \Cref{lm:preporcessingtime} (see also the proof of Theorem 3.12 in \cite{Sol13}), it is easy to verify that for $k>2$, the total number of comparisons over all the trees considered in recursive calls with parameter $k=2$ is linear in the size of given tree $T$.
For the other spanner edges in $E(G_T)$, we apply an implementation of Koml\'{o}s' algorithm and store the precomputed information alongside the edges. This step requires $O(n \alpha_k(n))$ time and space and uses $O(n\log \alpha_k(n))$ comparisons. Thus, the total time and space complexity for preprocessing is $O(n \alpha_k(n))$ and the number of comparisons used is $O(n \log \alpha_k(n))$. 
Given a query edge $(u,v)$, we first query $\mathcal{D}_T$ for a $k$-hop $1$-spanner path between $u$ and $v$. From procedure \textsc{FindPath}$(u, v, \Phi, k)$, we know that this path either contains less than $k$ edges or it contains $k$ edges, two edges of which, $e_1, e_2$, belong to some tree $T'$ that was preprocessed with parameter $k=2$. (See line \ref{line:query:k2} in procedure $\Call{FindPath}$; the edges $e_1$ and $e_2$ correspond to $(u, \phi_{T'}(\beta))$ and $(\phi_{T'}(\beta), v)$.)
In the former case, the number of comparisons required is clearly at most $k-1$.
If the latter case, i.e., the number of edges on the spanner path between $u$ and $v$ is equal to $k$, we save one comparison by comparing the \emph{orders} of edges $e_1$ and $e_2$ in $S$ for $T'$, which reduces the number of comparisons from $k$ to $k-1$.

The following theorem summarizes the guarantees of our online MST verification algorithm.

\begin{theorem}\label{thm:verifyMST}
Let $T$ be an edge-weighted tree with $n$ vertices and let $k\ge 2$ be any integer. We can preprocess $T$ and build a data structure in $O(n\alpha_k(n))$ time and space and using $O(n\log \alpha_k(n))$ comparisons, such that it answers online MST verification queries on $T$ in $O(k)$ time and using at most $k-1$ comparisons.
\end{theorem}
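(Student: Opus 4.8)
\textbf{Proof proposal for \Cref{thm:verifyMST}.}
The plan is to assemble the statement from the ingredients already developed in the excerpt. The backbone is \Cref{thm:treeNavigate}: for any $n$-vertex tree $T$ and integer $k\ge2$, we build in $O(n\alpha_k(n))$ time the $1$-spanner $G_T=(V(T),E)$ of hop-diameter $k$ together with the navigation data structure $\mathcal{D}_T$, which reports a $k$-hop $1$-spanner path between any two query vertices in $O(k)$ time. The first step is to reduce the edge-weighted online MST verification problem to the online tree product problem over the semigroup $(\mathbb{R},\max)$, which is immediate since the maximum edge weight on the $u$--$v$ path in $T$ is exactly the semigroup product of that path, and one extra comparison against the query edge's weight finishes the query; the vertex-weighted variant is equivalent by the linear-time, size-doubling transformation described just before the theorem. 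So it suffices to preprocess $G_T$ and $\mathcal{D}_T$ so that every spanner edge $(x,y)\in E$ carries the max-weight of the $T$-path between $x$ and $y$.

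Second, I would compute these per-edge values using an implementation of Koml\'os' algorithm \cite{DBLP:journals/siamcomp/DixonRT92,DBLP:journals/algorithmica/King97,DBLP:conf/stoc/BuchsbaumKRW98,DBLP:conf/wg/Hagerup09}: feed it the tree $T$ and the query set consisting of the $|E|=O(n\alpha_k(n))$ endpoint-pairs of spanner edges. This runs in time $O(n+|E|)=O(n\alpha_k(n))$ and uses only $O(n\log((|E|+n)/n))=O(n\log\alpha_k(n))$ comparisons during preprocessing; store each computed value alongside its spanner edge in $\mathcal{D}_T$. At query time, given an edge $(u,v)\notin E(T)$, query $\mathcal{D}_T$ for a $k$-hop $1$-spanner path, take the max over its $\le k$ edge-labels using $k-1$ comparisons, and compare once more with $w(u,v)$ — a total of $k$ comparisons and $O(k)$ time, matching \Cref{thm:treeSum}.

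Third, to shave the final comparison down to $k-1$ when $k$ is even, I would follow the order-trick from \cite{DBLP:journals/combinatorica/Pettie06} sketched in \Cref{sec:mstVerify}. For the base case $k=2$: sort the $n-1$ edges of the current tree into an array $S$ using $O(n\log n)$ comparisons and give each edge its \emph{order} (its index in $S$); run Koml\'os' algorithm on these integer orders rather than on weights, so each spanner edge stores the \emph{maximum order} on its $T$-path at no comparison cost; on a query with $2$-hop path $(u,w),(w,v)$, pick the edge $e\in S$ whose stored order is the larger of the two (no weight comparison needed, since orders are totally ordered integers), and spend exactly one comparison of $w(e)$ against $w(u,v)$. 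For even $k>2$: whenever \textsc{TreeNavigator} recurses with parameter $2$ on a subtree $T'$, preprocess $T'$'s spanner edges by the $k=2$ method above; a recurrence of the same shape as \eqref{eq:time-fnk} (cf.\ Theorem 3.12 in \cite{DBLP:journals/talg/Solomon13}) shows the total extra comparisons over all such $k=2$ recursive calls is $O(n)$. By the structure of \textsc{FindPath} (\Cref{fig:FindPath}), a returned $k$-hop path either has strictly fewer than $k$ edges — handled with $\le k-1$ comparisons trivially — or has exactly $k$ edges, two of which, $(u,\phi_{T'}(\beta))$ and $(\phi_{T'}(\beta),v)$, lie in such a $k=2$-preprocessed subtree $T'$; in that case, resolve those two edges by comparing their stored orders in $S$ for $T'$ (free) and proceed, so the overall comparison count drops from $k$ to $k-1$.

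The main obstacle, and the only part requiring genuine care, is the \emph{disjointness/recurrence bookkeeping} for the order-trick: arguing that preprocessing each $k=2$-recursive subtree $T'$ with its own sorted array and Koml\'os run costs only $O(n)$ comparisons in aggregate (the $k=2$ recursions at each level partition the required vertices, so $\sum |R(T')| = O(n)$ telescopes exactly as in \Cref{lm:preporcessingtime}), and that every $k$-hop query path whose length is exactly $k$ really does route its middle two hops through one such $T'$ — this is what \Cref{line:k2} of \Cref{fig:FindPath} guarantees, since the deepest recursive call always invokes the $k=2$ sub-construction. Everything else — the Koml\'os black box, the semigroup reduction, and the time/space accounting from \Cref{thm:treeNavigate} — is plug-and-play, yielding the claimed $O(n\alpha_k(n))$ time and space, $O(n\log\alpha_k(n))$ preprocessing comparisons, and $O(k)$-time queries with at most $k-1$ comparisons.
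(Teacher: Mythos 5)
Your proposal matches the paper's proof essentially step for step: the reduction of MST verification to the max tree-product problem, the Koml\'os-based precomputation of the maxima on the spanner edges of $G_T$ within $O(n\alpha_k(n))$ time and $O(n\log\alpha_k(n))$ comparisons, the $k$-comparison query via a $k$-hop path from $\mathcal{D}_T$, and Pettie's order trick applied at the $k=2$ recursion levels (with the same telescoping recurrence bounding the total sorting cost by $O(n)$ for $k>2$) to shave the query cost to $k-1$ comparisons. There are no substantive differences from the paper's argument, including the fact that the final saving is only argued for even $k$.
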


\begin{remark}
Alon and Schieber \cite{Alon87optimalpreprocessing} gave an algorithm for the online MST verification problem that requires $O(n \alpha_k(n))$ time, space and comparisons during preprocessing. Their algorithm answers queries following paths of length $2k$, thus achieving $2k$ comparisons. (The number of comparisons is $2k$, rather than $2k-1$, since the query edge must be compared against the tree weights.) Our algorithm improves this tradeoff both in terms of the number of comparisons required for preprocessing ($O(n\log\alpha_k(n))$
rather than $O(n \alpha_k(n))$) and the number of comparisons per query ($k-1$ rather than $2k-1$).
Pettie \cite{DBLP:journals/combinatorica/Pettie06} shows that it suffices to spend $O(n \alpha_{2k}(n))$ time and space and $O(n \log \alpha_{2k}(n))$ comparisons during preprocessing, so that each subsequent query can be answered using $4k-1$ comparisons. In fact, \cite{DBLP:journals/combinatorica/Pettie06} uses a different variant of a row-inverse Ackermann function, $\lambda_k$, which satisfies $\lambda_k(n) = \Theta(\alpha_{2k}(n))$ (see \Cref{lemma:compareAckermann}). This result of \cite{DBLP:journals/combinatorica/Pettie06} builds on  \cite{Alon87optimalpreprocessing} (and \cite{DBLP:conf/focs/Chazelle84}), which requires $4k$ comparisons (rather than $2k$ claimed in \cite{DBLP:journals/combinatorica/Pettie06}) following a preprocessing time of $O(n\lambda_k(n))=O(n \alpha_{2k}(n))$. To reduce the resources required for preprocessing,  Pettie \cite{DBLP:journals/combinatorica/Pettie06} used Koml\'{o}s' algorithm \cite{DBLP:journals/combinatorica/Komlos85}, which reduces the number of \emph{comparisons} to $O(n \log \alpha_{2k}(n))$, but not the running time, since the algorithm is information-theoretic, and all known implementations of Koml\'{o}s' algorithm take \emph{time} linear in the number of queries (which is $\Theta(n\alpha_{2k}(n))$ in this case) \cite{DBLP:journals/siamcomp/DixonRT92,DBLP:journals/algorithmica/King97,DBLP:conf/stoc/BuchsbaumKRW98,DBLP:conf/wg/Hagerup09}.
In this regime, our algorithm requires $O(n  \alpha_{2k}(n))$ time and space and $O(n \log \alpha_{2k}(n))$ comparisons during preprocessing, so that each subsequent query can be answered using $2k-1$ comparisons in $O(k)$ time. The result of \cite{DBLP:journals/combinatorica/Pettie06} can also achieve a query time of $O(k)$ (though not claimed), by building on \cite{Alon87optimalpreprocessing}, but using $4k-1$ comparisons rather than $2k-1$ as in our result.
Concurrently and independently of us, Yang \cite{DBLP:journals/corr/abs-2105-01864} obtained a result similar to ours;
the two techniques are inherently different.
Interestingly, Yang \cite{DBLP:journals/corr/abs-2105-01864}
defines yet another variant of a row-inverse Ackermann function, under the notation $\lambda_k$, which is similar to the function used by Pettie \cite{DBLP:journals/combinatorica/Pettie06}.  
\end{remark}

\section*{Acknowledgments}
The fourth-named author thanks Ofer Neiman for helpful discussions.

\bibliographystyle{alpha}

\bibliography{refs,ENS14,bib,ENS14V2,spanner,spanner2}
\appendix
\section{Proof of Lemma~\ref{lemma:compareAckermann}}\label{sec:proofAckermann}
This \namecref{sec:proofAckermann} is dedicated to proving the following lemma stated in \Cref{sec:prelim}.
\compareAckermann*
Let $T(\cdot,\cdot)$ be slightly different Ackermann function as defined by Tarjan \cite{DBLP:journals/jacm/Tarjan75}.
\begin{align*}
T(0, j) &= 2j & \text{ for } j \ge 0\\
T(i, 0) &= 0 & \text{ for } i \ge 1\\
T(i, 1) &= 2 & \text{ for } i \ge 1\\
T(i, j) &= T(i-1, T(i, j-1)) & \text{ for } i\ge 1 \text{ and } j\ge 2
\end{align*}

In the following lemma, we show that $A$ and $T$ are almost equal (except for the first column).
\begin{claim}
For all $i\ge 0$ and $j\ge 1$, $A(i, j) = T(i, j)$.
\end{claim}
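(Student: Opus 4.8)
The claim asserts $A(i,j) = T(i,j)$ for all $i \ge 0$ and all $j \ge 1$, where $A$ is the variant of Ackermann's function defined earlier in the excerpt and $T$ is Tarjan's variant. The natural approach is a double induction: an outer induction on the row index $i$, and for each fixed $i$ an inner induction on the column index $j$. I would first dispose of the base row $i = 0$: here $A(0,n) = 2n$ for all $n \ge 0$ by definition, and $T(0,j) = 2j$ for all $j \ge 0$ by definition, so $A(0,j) = T(0,j)$ in particular for $j \ge 1$ (indeed for all $j \ge 0$). This is immediate and requires no induction.

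For the inductive step, fix $i \ge 1$ and assume the claim holds for row $i-1$, i.e.\ $A(i-1,j) = T(i-1,j)$ for all $j \ge 1$ (and note that for $j=0$ one has $A(i-1,0)=1$ while $T(i-1,0)=0$, so I must be careful never to invoke the row-$(i-1)$ identity at column $0$). Now run an inner induction on $j \ge 1$. The base case $j = 1$: by definition $A(i,1) = A(i-1, A(i,0)) = A(i-1, 1)$, since $A(i,0) = 1$ for $i \ge 1$; and by the outer induction hypothesis $A(i-1,1) = T(i-1,1) = 2$; meanwhile $T(i,1) = 2$ by definition. So $A(i,1) = 2 = T(i,1)$. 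For the inner step, take $j \ge 2$ and assume $A(i,j-1) = T(i,j-1)$. Then
\begin{align*}
A(i,j) &= A(i-1,\, A(i,j-1)) \\
       &= A(i-1,\, T(i,j-1)) \\
       &= T(i-1,\, T(i,j-1)) \\
       &= T(i,j),
\end{align*}
where the first equality is the recursive definition of $A$ (valid since $i \ge 1$, $j \ge 1$), the second is the inner induction hypothesis, the third is the outer induction hypothesis applied at column $T(i,j-1)$, and the fourth is the recursive definition of $T$ (valid since $i \ge 1$, $j \ge 2$).

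\textbf{Main obstacle.} The only subtlety — and the thing I would check most carefully — is that the third equality above requires $T(i,j-1) \ge 1$, since the outer induction hypothesis $A(i-1,\cdot) = T(i-1,\cdot)$ was only asserted for arguments $\ge 1$ (the two functions genuinely disagree at $0$: $A(i-1,0)=1 \ne 0 = T(i-1,0)$). So I would insert a short preliminary observation, itself proved by a trivial induction on $j$, that $T(i,j) \ge 1$ for all $i \ge 0$ and $j \ge 1$ (in fact $T(i,j) \ge 2$ for $j \ge 1$, $i\ge 1$; and $T(0,j) = 2j \ge 2$ for $j \ge 1$): the base cases give values $2$ or $2j$, and the recursion $T(i,j) = T(i-1, T(i,j-1))$ with $T(i,j-1) \ge 1$ stays $\ge 1$ by the row-$(i-1)$ bound. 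With this positivity lemma in hand, the double induction above goes through cleanly, and no routine computation beyond what is displayed is needed.
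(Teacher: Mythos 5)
Your proof is correct and follows essentially the same route as the paper's: a (lexicographic/nested) induction on $(i,j)$ with the same chain of equalities $A(i,j)=A(i-1,A(i,j-1))=A(i-1,T(i,j-1))=T(i-1,T(i,j-1))=T(i,j)$, and the paper likewise justifies the third equality by noting $T(i,j-1)\ge 2$ for $j\ge 2$. Your explicit positivity lemma and the caution about the functions disagreeing at argument $0$ just make precise a point the paper leaves implicit.
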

\begin{proof} We will show the claim inductively.\\
\noindent
\emph{Base case $i=0$.}
For all $j\ge 0$, it follows by definition that $A(0, j) = T(0,j) = 2j$. 

\noindent
\emph{Base case $j=1$.}
For any $i > 0$,it follows by definition that $A(i, 1) = A(i-1, A(i, 0)) = A(i-1, 1)$. Since $A(0,1)=2$ it follows by induction that for every $i\ge 0$, $A(i,1)=2$, and so $A(i, 1) = T(i,1)$. 

\noindent
\emph{Inductive step.}
We proceed to show the inductive step for $i\ge 1, j\ge 2$, assuming that the claim holds for any pair $i' \ge 0, j' \ge 1$ lexicographically smaller than $(i, j)$.
\begin{align*}
A(i,j) &= A(i-1, A(i, j-1))\\
&= A(i-1, T(i, j-1)) &\text{ from inductive hypothesis }\\
&= T(i-1, T(i, j-1)) &\text{ since } T(i,j-1) \ge 2 \text{ for } j \ge 2\\
&= T(i,j)
\end{align*}
\end{proof}
We are ready to prove \Cref{lemma:compareAckermann}.
\begin{proof}[Proof of \Cref{lemma:compareAckermann}]
Pettie \cite{DBLP:journals/combinatorica/Pettie06} shows that for any $i\ge 1$ his variant of Ackermann function $P$ (cf. \Cref{sec:ackermann}) satisfies:
\begin{itemize}
\item $T(i,j)\le P(i,j)$ for $j\ge 0$,
\item $P(i,j) \le T(i, 3j)$ for $j\ge 1$.
\end{itemize}

From the definition of $\lambda_i(\cdot)$:
\begin{align*}
\lambda_i(n) &= \min\{j : P(i,j) \ge n\}\\
&\le \min\{j : T(i,j) \ge n\} \\
&= \min\{j : A(i,j) \ge n\} &\text{ for } j \ge 1\\
&= \alpha_{2i}(n).
\end{align*}

On the other hand:
\begin{align*}
\lambda_i(n) &= \min\{j : P(i,j) \ge n\}\\
&\ge \min\{j : T(i,3j) \ge n\} &\text{ for } j \ge 1\\
&= \min\{j : A(i,3j) \ge n\} \\
&\ge \frac{1}{3}  \alpha_{2i}(n)
\end{align*}
\end{proof}

\end{document}